\title{Course Allocation with Credits via Stable Matching\thanks{A preliminary version of this paper appeared in the Proceedings of SAGT 2025: the 18th International Symposium on Algorithmic Game Theory \cite{rodriguez_course_2025}.}}
\author{José Rodríguez \orcidlink{0009-0008-2470-6549}, David Manlove \orcidlink{0000-0001-6754-7308} \vspace{4mm}
\\
\small
\emph{School of Computing Science, University of Glasgow, Glasgow, UK.}
\\
\small
\emph{Email {\tt j.rodriguez-bacallado.1@research.gla.ac.uk}, {\tt david.manlove@glasgow.ac.uk}.}
}
\date{ }
\begin{document}

\newtheorem{theorem}{Theorem}
\newtheorem{proposition}{Proposition}
\newtheorem{lemma}{Lemma}
\newtheorem{claim}{Claim}
\newtheorem{observation}{Observation}
\newtheorem{definition}{Definition}
\newtheorem{corollary}{Corollary}
\newtheorem{example}{Example}

\newcommand{\m}[1]{\textcolor{green!50!black}{$\boxed{#1}$}}

\maketitle

\begin{abstract}
In the {\sc course allocation} problem, there are a set of students and a set of courses at a given university. University courses may have different numbers of credits, typically related to different numbers of learning hours, and there may be other constraints such as courses running concurrently. Students may have upper bounds on the number of credits they may be assigned, and courses may have upper quotas on the number of students that can be enrolled. Our goal is to allocate the students to the courses, taking these constraints into account, such that the resulting matching is stable, which means that no student and course(s) have an incentive to break away from the matching and become assigned to one another. We study several definitions of stability and for each we give a mixture of polynomial-time algorithms and hardness results for problems involving verifying the stability of a matching, finding a stable matching or determining that none exists, and finding a maximum size stable matching. We also study variants of the problem with master lists of students, and lower quotas on the number of students allocated to a course, establishing additional complexity results in these settings.
\end{abstract}

\noindent
{\bf Keywords:} Course Allocation problem; credits; master list; lower quotas; stable matching
\section{Introduction}
\label{sec:intro}
\subsection{Background and motivation}
\label{subsec:motivation}
Every year, undergraduate students must select a list of courses to enrol on for the next academic session. Their selections are typically constrained by several factors, mainly that university courses have a limited capacity and that universities enforce a limit on the academic load of students. This load is usually expressed in terms of the total number of course \emph{credits}, with longer courses that have more associated learning hours typically having more credits than shorter ones.

These constraints, along with others such as prerequisites or excluded combinations (e.g., due to concurrent timetabling), have led to a rich variety of papers on methods for finding an optimal (with respect to some given criteria) allocation of students to courses \cite{budish_combinatorial_2011,budish_course_2017,budish_multi-unit_2012,shriya_approaches_2026,sonmez_course_2010}. In general, these models assume that students have preferences over bundles of courses, but that courses are indifferent to the students they are allocated. However, this is not the case in many universities, with some giving a higher priority to senior students, or students with better grades \cite{diebold_matching_2017, elte_regulations_2025}\citep[][page 19, footnote 17]{kornbluth_undergraduate_2024}. An allocation that does not respect these priorities involving students, which can be modelled via course preferences, could lead to \emph{justified envy}, where a student $s$ who is not already allocated to a course $c$ would rather become allocated to $c$ than remain with their existing assignment, and $s$ has a higher priority at $c$ than some other student already assigned to $c$, or \emph{wastefulness}, where $s$ would rather be allocated to $c$ than remain with their current assignment, and $c$ has a vacancy. Such a scenario could lead to student $s$ making a complaint, or even persuading the coordinator of course $c$ to change the allocation. This second possibility could lead to further such rearrangements, resulting in the matching unravelling \cite{roth_1984_evolution}. Hence, we are interested in finding \emph{stable matchings}, which, informally, are allocations in which there is no justified envy or wastefulness.

Student preferences over bundles of courses can allow constraints such as corequisites and scheduling conflicts to be modelled easily. However, it can be challenging to elicit preferences from students over bundles of courses \cite{budish_course_2017, soumalias_machine_2024} and the number of possible bundles is in general exponential in the number of courses. Therefore, in many models \cite{atef_yekta_optimization-based_2020, biswas_2023_algorithmic, cechlarova_pareto_2018, diebold_matching_2017, kornbluth_undergraduate_2024, utture_student_2019}, students have preferences over individual courses, and this is the assumption that we make in this paper. Due to courses having different numbers of credits, stability definitions should be adapted accordingly, and we consider four possible definitions of stability here. All definitions are based on the absence of a blocking pair/coalition, which involves a student and either a single course or a set of courses. In the first definition, \emph{pair stability}, a student may block a matching (i.e., form a blocking pair) with a single course only, as long as no constraints are violated by the switch. In the second one, \emph{pair-size stability}, a student may also block with a single course, as long as she does not reduce her total number of credits by making this switch. In the last two definitions, \emph{coalition stability} and \emph{first-coalition stability}, each student may block with a coalition of courses, as long as she does not reduce her total number of credits. The difference between them is that, in coalition stability, the student has to prefer every course in the coalition over every course that she wants to drop from the matching, while in first-coalition stability, the student only needs to prefer her favourite course from the coalition over her favourite course among those she wants to drop. For both of these coalition-based definitions, each course in the blocking coalition must have an incentive to take on the student in the blocking coalition. For all stability definitions, we study how to verify the stability of a matching and how to find a stable matching, if one exists, and obtain both hardness results and polynomial-time algorithms.

In some cases, all courses within a faculty have the same preferences over students \cite{elte_regulations_2025}. For example, all courses may give priority to students with higher grades over those with lower grades. We show that, when such a uniform ranking of students exists, known as a master list, it is always possible to find a stable matching in polynomial time for all four stability criteria. However, stable matchings can have different sizes, and we show that the problem of finding a stable matching of maximum size is NP-hard, except in the case of pair stability. Finally, we consider the case of courses having lower quotas \cite{biro_college_2010}, a minimum number of students that a course needs in order to run. In general we find that problems involving lower quotas are NP-hard.

\subsection{Related work}
\label{subsec:relatedWork}
The {\sc course allocation} problem has been extensively studied in the past, with papers in this area studying a range of different models of the problem. One of the best known mechanisms for addressing this problem is the Approximate Competitive Equilibrium from Equal Incomes mechanism (A-CEEI) \cite{budish_combinatorial_2011}, in which students rank possible schedules (each schedule is a subset of courses which does not violate any constraint) and each is given an approximately equal budget to bid on courses. The outcome is approximately Pareto efficient and satisfies some other fairness criteria. It was later refined, resulting in the Course Match mechanism \cite{budish_course_2017}. Due to the difficulty that students may have with expressing an ordering across all possible schedules, later work has modified Course Match to use machine learning to better estimate the true preferences of students \cite{soumalias_machine_2024}.

Alternative approaches to the {\sc course allocation} problem involve students ranking courses individually, instead of ranking bundles, but ensuring that the final allocation respects constraints such as the maximum number of credits a student can take or disallowing excluded combinations, such as courses that run concurrently \cite{biswas_2023_algorithmic}. Other types of constraints have also been considered, for example where courses have prerequisites, corequisites or similar constraints that force students, if they desire to enrol in a given course, to also enrol in a different one \cite{cechlarova_pareto_2018}.

Many papers on {\sc course allocation} consider courses to be objects that do not have preferences or priorities over the students that are allocated to them. However, for many university courses around the world, certain groups of students are given a higher priority than others \cite{diebold_matching_2017, elte_regulations_2025}\citep[][page 19, footnote 17]{kornbluth_undergraduate_2024}. This is also the case for the School of Law at the University of Glasgow \cite{glasgowlaw_communication_2025}, whose course allocation model inspired this work. In particular, when courses have preferences, it is critical that the allocation of students to courses is a \emph{stable matching}, because stability is a crucial criterion for the success of a matching mechanism in a bipartite matching market with two-sided preferences \cite{roth_1984_evolution, roth_deferred_2008}.

Stable matching problems were introduced by Gale and Shapley \cite{gale_college_1962}. They considered the {\sc stable marriage} problem ({\sc sm}), a problem in which agents are divided into men and women and must be paired together. They showed that a stable matching, a matching in which no man-woman pair wants to break with their partners and be together instead, always exists and can be found in polynomial time. The algorithm that they presented is known as the Deferred Acceptance mechanism (DA). The {\sc hospitals/residents} problem ({\sc hr}) is a variation of {\sc sm} in which resident doctors apply to hospitals, and hospitals may be assigned more than one resident. As in the case of {\sc sm}, a stable matching always exists and can be found in polynomial time using DA \cite{gale_college_1962}. However, a variant known as {\sc hospitals/residents with couples} ({\sc hrc}), in which some residents form a couple and submit a joint preference list, is NP-hard \cite{ng_1988_complexity, ron_1990_np}, even if each member of every couple has the same preference list as their partner \cite{mcdermid_keeping_2010}. This variant is a special case of {\sc hospitals/residents with sizes} ({\sc hrs}), a variant of {\sc hr} where each resident occupies multiple posts at a hospital, because each couple can be treated as a resident of size $2$ \cite{mcdermid_keeping_2010}. In the case of {\sc course allocation}, if every course has the same number of credits, DA can be used to find a stable matching. However, as in the case of couples in {\sc hrc}, having courses with different numbers of credits means that DA is not guaranteed to output a stable matching.

There has been some previous work on {\sc course allocation} that considers stability. For example, Kornbluth and Kushnir \cite{kornbluth_undergraduate_2024} introduced a competitive equilibrium-based mechanism where the outcome is approximately stable. However, in their model all courses are assumed to have the same academic load, or number of credits, which is not always the case \cite{biswas_2023_algorithmic, budish_course_2017, glasgow_catalogue_2025}. Utture et al.\ \cite{utture_student_2019} do consider a setting where courses have different numbers of credits and other constraints may be present, such as excluding combinations of courses that run in the same time slot, but the algorithmic framework they present does not always output a stable matching. Diebold and Bichler \cite{diebold_matching_2017} also study {\sc course allocation} with stability, but in their model students can be matched to at most one course. Hence, this work is the first %\todoD{to be able to argue this, in the light of the Balasundaram et al paper, I think we need to emphasise the focus on excluded combinations.}
one that studies {\sc course allocation} with stability in a model where courses have different numbers of credits, there are excluded combinations of courses, and where certain other constraints may be present.

In many matching markets, including some {\sc course allocation} settings \cite{elte_regulations_2025}, agents in one side are ordered according to a master list, such that the preferences of the agents on the other side are consistent with the master list. The presence of a master list guarantees the existence of a stable matching in problems such as {\sc stable roommates} \cite{abraham_stable_2008} (a non-bipartite generalisation of {\sc sm}) and makes some NP-hard problems solvable in polynomial time \cite{abraham_stable_2008}. However, this is not always the case. Generally, problems such as finding a maximum size stable matching are NP-hard even in the presence of master lists, if there are ties present \cite{abraham_stable_2008,irving_stable_2008}.

In some applications, courses have lower quotas, representing a minimum number of students that need to be enrolled in the course in order for the course to run \cite{cechlarova_pareto_2017}, but this extension has not received a great deal of attention in the literature on {\sc course allocation}. However, there are many results for problems such as {\sc hr} (sometimes in the equivalent setting of university admissions). Some papers study the case where every hospital \emph{must} satisfy its lower quota; this case is known as the \emph{no closures} model. The variant where hospitals have the choice of satisfying the lower quota or being closed and not admitting any residents is known as the \emph{closures} model. It is possible to determine in polynomial time whether there exists a stable matching in the {\sc hr} with lower quotas and no closures model, because in every stable matching each hospital has the same number of residents \cite{gale_1985_some, roth_1984_evolution, roth_1986_allocation}, so one simply needs to check whether in one stable matching all hospitals satisfy their lower quotas. On the other hand, if there are ties in the preference lists of the participants, giving the {\sc hospitals/residents problem with ties} ({\sc hrt}), then in the presence of lower quotas and in the no closures model, the problem of deciding whether a stable matching exists is NP-complete.  This follows because determining the existence of a complete stable matching is NP-complete \cite{iwama_2002_stable, manlove_2002_hard} even in the 1-1 restriction of {\sc hrt} known as {\sc smti} ({\sc sm} with ties and incomplete lists), and a complete stable matching can be enforced by setting every lower quota to be $1$.
%, so setting lower quotas to be equal to upper quotas proves the result for {\sc hr} with lower quotas and no closures. 
However, for {\sc hr} with lower quotas and strictly ordered preferences in the closures model, the problem of deciding whether a stable matching exists is NP-complete. \cite{biro_college_2009, biro_college_2010}.

Our model of {\sc course allocation} can be placed in the context of the wider literature on matching problems with preferences \cite{manlove_algorithmics_2013}. For example, if courses are assumed to not have preferences over students, then we obtain a model similar to some previously existing {\sc course allocation} models \cite{biswas_2023_algorithmic, budish_course_2017}, or to {\sc capacitated house allocation} \cite{manlove_algorithmics_2013}. If both students and courses have preferences, but students can be matched with at most one course, we obtain the {\sc hr} problem \cite{gale_college_1962, mcdermid_keeping_2010}. If on the other hand courses can be matched with at most one student, every course has one or two credits, and neither students nor courses have preferences, we obtain the {\sc matching with couples} model \cite{biro_matching_2014}. In this model finding a maximum size matching is NP-hard \cite{biro_matching_2014}.

In the literature on matching under preferences, there have been several papers considering problems where agents have different sizes. For example, Balasundaram et al.\ \cite{balasundaram_stability_2025} studied {\sc hrs}; we will return to this paper in Section \ref{subsec:contributions}. Delacrétaz \cite{delacretaz_2019_stability} considered a version of matching with different sizes in which agents have size $1$ or $2$, and presented different relaxations of stability that ensure the existence of a stable matching. One of them, size-stability, is equivalent to pair-size stability.

Aziz et al.\ \cite{aziz_stability_2018} analysed a model similar to ours in the context of allocating refugee families to host localities, which is similar to our model as refugee families can be regarded as agents with different sizes. Some of their stability definitions are similar to ours; we will elaborate on this point later, in Section \ref{sec:model}. Delacrétaz et al.\ \cite{delacretaz_2023_matching} presented algorithms for finding Pareto optimal, individually rational or interference-free (a concept related to envy-freeness) matchings in the same context. Another model of interest is due to Hoyer and Stroh-Maraun in the context of the {\sc school choice} problem \cite{hoyer_stability_2020}. In their model some students may need more resources than others, which is similar to the fact that, in our model, some courses have a higher number of credits than others. They presented a (not necessarily polynomial-time) algorithm that finds a stable matching if one exists. Another paper studying the same problem uses a variation of the Top-Trading Cycles algorithm to output a Pareto optimal matching \cite{stroh-maraun_weighted_2024}.

\subsection{Our contributions}
\label{subsec:contributions}
In this paper we introduce a new model for the {\sc course allocation} problem, focused on courses with different numbers of credits and preferences over students, in which the goal is to find a matching that is stable, or a stable matching that is of maximum size. We also consider these problems in the presence of lower quotas. Table \ref{tab:resultsTable} presents a summary of our results. Each problem can be identified as follows. Course allocation under pair, first-coalition, coalition and pair-size stability is identified as {\sc ca-p}, {\sc ca-fc}, {\sc ca-c} and {\sc ca-ps} respectively. For the problem of testing whether a given matching is stable, we append {\sc -test}. The problem of finding a stable matching or reporting that none exists has the suffix {\sc -find}. In the presence of a master list of students, we insert {\sc -ml} into the suffix. The problem of finding a maximum size stable matching is represented by {\sc -max}. In the presence of lower quotas, there are two variants: either (i) every course must be assigned at least as many students as its lower quota, in which case there are no course closures (represented by {\sc -lq-nc}) or (ii) a course can be closed (and not run) if it does not meet its lower quota (represented by {\sc lq-cl}). The central results in this paper are those in the {\sc -find}, {\sc -ml-find} and {\sc -ml-max} rows.

\begin{table}[h]\small
\centering
\begin{tabular}{|r|c|c|c|c|}
\cline{2-5}
\multicolumn{1}{r|}{} & pair & first-coalition & coalition & pair-size \\
\multicolumn{1}{r|}{} & {\sc ca-p} & {\sc ca-fc} & {\sc ca-c} & {\sc ca-ps} \\
\cline{1-5}
& & & & \vspace{-4mm} \\
Testing ({\sc -test}) & P [O\ref{obs:PairStabilityVerifying}] & co-NP-c [T\ref{thm:HardnessVerifyingStability}]   & co-NP-c [T\ref{thm:HardnessVerifyingStability}]  & co-NP-c [T\ref{thm:HardnessVerifyingStability}]\\
& & & & \vspace{-4mm} \\
\cline{1-5}
& & & & \vspace{-4mm} \\
Finding ({\sc -find}) & NP-h, $\nexists$ [T\ref{thm:pairStabilityNPCompleteness}]  & NP-h, $\nexists$ [T\ref{thm:firstCoalitionNPCompleteness}]   & Open   & P, $\exists$ [T\ref{thm:pairSizeStabilityFindingStable}] \\
& & & & \vspace{-4mm} \\
\cline{1-5}
Finding, master list & \multirow{2}{*}{P, $\exists$ [P\ref{prop:findingMasterList}]} & \multirow{2}{*}{P, $\exists$ [P\ref{prop:findingMasterList}]} & \multirow{2}{*}{P, $\exists$ [P\ref{prop:findingMasterList}]} & \multirow{2}{*}{P, $\exists$ [P\ref{prop:findingMasterList}]} \\
({\sc -ml-find}) & & & & \\
\cline{1-5}
Max size, master list & \multirow{2}{*}{P [O\ref{obs:masterListUniquePairStable}]} & \multirow{2}{*}{NP-h [T\ref{thm:firstCoalitionStabilityMaxSize}]} & \multirow{2}{*}{NP-h [T\ref{thm:coalitionStabilityMaxSize}]} & \multirow{2}{*}{NP-h [T\ref{thm:pairSizeStabilityMaxSize}]} \\
({\sc -ml-max}) & & & & \\
\cline{1-5}
Lower quotas, master list, & \multirow{2}{*}{P, $\nexists$ [O\ref{obs:lowerQuotasNoClosuresPairStability}]} & \multirow{2}{*}{NP-h, $\nexists$ [O\ref{obs:lowerQuotasNoClosuresThreeDefinitionsStability}]} & \multirow{2}{*}{NP-h, $\nexists$ [O\ref{obs:lowerQuotasNoClosuresThreeDefinitionsStability}]} & \multirow{2}{*}{NP-h, $\nexists$ [O\ref{obs:lowerQuotasNoClosuresThreeDefinitionsStability}]} \\
no closures ({\sc -ml-lq-nc-find}) & & & & \\
\cline{1-5}
Lower quotas, master list, & \multirow{2}{*}{NP-h, $\exists$ [T\ref{thm:lowerQuotasClosuresPairStability}]} & \multirow{2}{*}{NP-h, $\exists$ [O\ref{obs:lowerQuotasClosuresThreeDefinitionsStability}]} & \multirow{2}{*}{NP-h, $\exists$ [O\ref{obs:lowerQuotasClosuresThreeDefinitionsStability}]} & \multirow{2}{*}{NP-h, $\exists$ [O\ref{obs:lowerQuotasClosuresThreeDefinitionsStability}]} \\
closures ({\sc -ml-lq-cl-max}) & & & & \\
\cline{1-5}
\end{tabular}
\caption{Summary of our contributions. The rows represent the different problems of interest, while the columns refer to the different definitions of stability. The symbol $\exists$ indicates that a matching of this type always exists, while $\nexists$ indicates that a matching of this type need not always exist. The complexity results included in this paper are denoted by `P' for polynomial-time solvability, `NP-h' for NP-hardness and `co-NP-c' for co-NP-completeness.  They are also followed by a letter, indicating whether the result is an observation (O), proposition (P) or theorem (T), followed by the observation, proposition or theorem number.}
\label{tab:resultsTable}
\end{table}

Independently of our results, Balasundaram et al.\ \cite{balasundaram_stability_2025} recently %published a preprint in which they 
considered {\sc hrs} under two definitions of stability: firstly \emph{stability}, which is equivalent to pair stability in our model, and secondly \emph{occupancy-stability}, which is equivalent to pair-size stability. They showed that an occupancy-stable matching is guaranteed to exist, and can be found in polynomial time, which is equivalent to Theorem \ref{thm:pairSizeStabilityFindingStable}, and proved that it is NP-hard to find a maximum size occupancy-stable matching, a similar result to Theorem \ref{thm:pairSizeStabilityMaxSize}, although in our theorem students and courses' preferences are derived from a master list, while in their result there is no master list. Balasundaram et al.\ \cite{balasundaram_stability_2025} did not consider constraints such as excluded combinations.

On the other hand, they provide results that do not appear in our work, such as an approximation algorithm for finding a maximum size occupancy-stable (pair-size-stable) matching with performance guarantee strictly better than $3$, and a tighter reduction proving that it is NP-complete to determine whether an instance of {\sc hrs} has a stable (pair-stable) matching. Finally, they also consider the problem with master lists. Although their definition of a master list is not equivalent to ours, they also show that a stable matching always exists, for any definition of stability, as we do in Proposition \ref{prop:findingMasterList}. Hence, their paper \cite{balasundaram_stability_2025} and the preliminary version of our paper \cite{rodriguez_course_2025}, published independently of each other and in short succession, have partially overlapping results, but each considers slightly different problems and contains several original results not present in the other.

\subsection{Structure of the paper}
\label{subsec:structure}
The remainder of this paper is structured as follows. Section \ref{sec:model} defines the problem model formally. Section \ref{sec:verifying} presents results on verifying the stability of a given matching, while Section \ref{sec:finding} contains results related to finding a stable matching (or reporting that none exists). Section \ref{sec:maximum} covers results on finding a maximum size stable matching (both in the absence of, and in the presence of, lower quotas), and Section \ref{sec:conclusion} summarises the results and presents open problems.

\section{Model}
\label{sec:model}
In this section we begin by defining an instance of the {\sc course allocation} ({\sc ca}) problem, which will be our base problem model. An instance $I$ of {\sc ca} involves a set $S$ of students and a set $C$ of courses. Each student $s_i \in S$ has a preference list denoted by $P(s_i) \subseteq C$. Similarly, each course $c_j\in C$ has a preference list denoted by $P(c_j) \subseteq S$. We will abuse notation by referring to preference lists as both a set and an ordered list. If $c_j$ appears in $s_i$'s preference list, we say that $c_j$ is \emph{acceptable} to $s_i$, and vice versa. We assume that a student is acceptable to a course if and only if that course is acceptable to the student. Hence, we say that a student-course pair $(s_i, c_j)$ is \emph{acceptable} if $s_i$ and $c_j$ are acceptable to each other. Given two courses $c_j,c_k \in P(s_i)$, we write $c_j \succ_{s_i} c_k$ if $s_i$ prefers $c_j$ to $c_k$. We define $\succ_{c_j}$ analogously for all $c_j \in C$.

A special case of {\sc ca} called {\sc course allocation with master list} ({\sc ca-ml}) occurs when there is a uniform ranking $\succ_{MLS}$ of all students, known as a \emph{master list}, where $s_i\succ_{MLS} s_j$ if $s_i$ is placed higher than $s_j$ in the uniform ranking. In the presence of a master list, for all $c_j \in C$, $c_j$ respects the master list in the sense that, for any two students $s_i,s_k\in P(c_j)$, $s_i \succ_{c_j} s_k$ if and only if $s_i\succ_{MLS} s_k$.
%If the instance has a master list $ML$ of students, we write $s_i \succ_{MLS} s_j$ if $s_i$ is placed higher than $s_j$ in $ML$.
Throughout the paper, we assume that both individual preference lists and master lists are strict.
%When reading in the instance, we can define $rank(s,c)$ for each course $c\in P(s)$ as the \emph{rank} (or position) of $c$ on $P(s)$ (this is 1 plus the number of courses that $s$ prefers to $c$). Then $c'\succ_s c$ if and only if $rank(s,c') < rank(s,c)$. Similarly, define $G(s)$ to be the set of students with GPA at least as good as $s$. By maintaining a list of all students in non-decreasing order of GPA, together with an index, for each student $s$, to the first student in this list with the same GPA as $s$, we can compute the $G(s)$ sets on the fly.

We assume that each course $c_j\in C$ has an associated number of \emph{credits}, denoted by $O(c_j)$. We extend this notation to a subset $D\subseteq C$ of courses, i.e., $O(D) = \sum_{c_j \in D} O(c_j)$ is the sum of the number of credits of all courses in $D$. Also course $c_j$ has an \emph{upper quota}, denoted by $q^+(c_j)$, which gives the maximum number of students that can be assigned to $c_j$.
%For any course $c_j \in C$, let denote the capacity, or upper quota, of $c_j$ and let $O(c_j)$ denote the number of credits carried by $c_j$.
Each student $s_i$ has a maximum number of credits, denoted $T(s_i)$, that she can be assigned from the courses she is allocated.
%Let $T(s_i)$ denote the maximum number of credits that $s_i$ can take. 

We now define the concept of a \emph{matching} in instances of {\sc ca}.
\begin{definition}
Given an instance of {\sc ca}, an \emph{assignment} $M$ is a subset of acceptable student/course pairs. For a given student $s_i\in S$, we let $C_M(s_i) = \{c_j \in C : (s_i, c_j) \in M\}$ be the set of courses that $s_i$ is assigned in $M$, and analogously, given a course $c_j\in C$, we let $S_M(c_j) = \{s_i \in S : (s_i, c_j) \in M\}$ be the set of students assigned to $c_j$ in $M$. The assignment $M$ is a \emph{matching} if additionally $M$ satisfies the following conditions:
\begin{itemize}
    \item for each student $s_i\in S$, $O(C_M(s_i)) \leq T(s_i)$, that is, $s_i$ cannot take more credits than her credit limit;
    \item for each course $c_j\in C$, $|S_M(c_j)| \leq q^+(c_j)$, that is, $c_j$ cannot enrol more students than its upper quota.
\end{itemize}
\end{definition}

Given a matching $M$, if $C_M(s_i)=\emptyset$ for a student $s_i\in S$, we will say that $s_i$ is \emph{unmatched} in $M$.  Similarly if $S_M(c_j)=\emptyset$ for a course $c_j\in C$, we will say that $c_j$ is \emph{unmatched} in $M$.  If $|S_M(c_j)|<q^+(c_j)$ for a course $c_j\in C$, we will say that $c_j$ is \emph{undersubscribed} in $M$.

Although in this work we are mostly concerned with a setting where students take any combination of acceptable courses that does not exceed their credit limit, in real life there are other factors that affect the feasibility of a matching. For example, two courses may run at the same time (they are a so-called \emph{excluded combination}), students may have credit limits per quarter or semester, or at most one out of three courses from a given group may be selected, because their content is very similar. In general, such constraints can be modelled by defining the \emph{feasible} subsets of courses for each student. An important special case of such subsets are \emph{downward-feasible subsets}, as defined by Utture et al.\ \cite{utture_student_2019} which we now define here.

\begin{definition}[\cite{utture_student_2019}]
For each $s_i \in S$, a set of \emph{feasible constraints} is a subset $F_i$ of the power set of $P(s_i)$. $F_i$ is \emph{downward-feasible} if $D \in F_i$ and $D' \subseteq D$ implies that $D' \in F_i$. A matching $M$ is \emph{feasible} if, for every $s_i$, $C_M(s_i) \in F_i$. Given a matching $M$ and a student $s_i$, we say that a course $c_j$ is \emph{feasible} for $s_i$ if $c_j\in P(s_i)$ and $C_M(s_i)\cup \{c_j\}\in F_i$.
%$(s_i, c_j)$ is acceptable and the matching $M \cup (s_i, c_j)$ is feasible.
\end{definition}

That is, downward feasibility means that if it is feasible for a student to be matched with a set of courses, then it is feasible to be matched with a subset of these courses. From the definition it is also clear that credit limits of students are a type of downward-feasible constraints, but we will treat credit limits separately. This is because credit limits are more central to the {\sc ca} problem; almost all universities and {\sc ca} models have credit limits on students, while other downward-feasible constraints do not always apply. Every hardness result in this paper assumes that there are no downward-feasible constraints, and so they still apply if these are present.

Throughout this paper we will assume that we can check in polynomial time whether a given matching is feasible, and similarly whether a set of feasible constraints is downward-feasible. In general downward-feasible constraints can model many rules that can be represented efficiently (e.g., a student must take at most $60$ credits per semester, or no two courses scheduled at the same time can be assigned, etc.)

We define the \emph{size} of a matching to be the total number of credits taken by all students, or alternatively, the number of students enrolled in courses multiplied by their number of credits. That is, the size of a matching $M$ is $\sum_{s_i \in S} \sum_{c_j \in C_M(s_i)} O(c_j) = \sum_{c_j \in C} O(c_j) \cdot |S_M(c_j)|$. We say that a matching $M$ is \emph{course-complete} if every course is filled to its upper quota, that is, for all $c_j \in C$, $|S_M(c_j)| = q^+(c_j)$. A matching is \emph{student-complete} if every student takes as many credits as possible, that is, for all $s_i \in S$, $O(C_M(s_i)) = T(s_i)$. The following observation is then immediate.

\begin{observation}
Each of a course-complete and a student-complete matching is a maximum size matching.
\end{observation}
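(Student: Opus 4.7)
The plan is to handle the two cases separately, each by exhibiting a natural upper bound on the size of any matching and showing that the given matching attains that bound.

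First I would handle the course-complete case. Using the second expression for the size, namely $\sum_{c_j \in C} O(c_j) \cdot |S_M(c_j)|$, I observe that for any matching $M'$ the upper quota constraint gives $|S_{M'}(c_j)| \le q^+(c_j)$ for every $c_j \in C$. Summing $O(c_j) \cdot |S_{M'}(c_j)| \le O(c_j) \cdot q^+(c_j)$ over all courses yields the universal upper bound $\sum_{c_j \in C} O(c_j) \cdot q^+(c_j)$ on the size of any matching. A course-complete matching $M$ satisfies $|S_M(c_j)| = q^+(c_j)$ for every $c_j$, so its size equals exactly this upper bound, and is therefore maximum.

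Next I would handle the student-complete case symmetrically, using the first expression for the size, $\sum_{s_i \in S} O(C_M(s_i))$. For any matching $M'$, the credit-limit constraint gives $O(C_{M'}(s_i)) \le T(s_i)$ for every $s_i \in S$. Summing over all students yields the universal upper bound $\sum_{s_i \in S} T(s_i)$. A student-complete matching $M$ satisfies $O(C_M(s_i)) = T(s_i)$ for every $s_i$, so its size attains this upper bound and is therefore maximum.

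There is no real obstacle here; the observation is essentially immediate from the two equivalent expressions for the size and from the two feasibility constraints built into the definition of a matching. The only subtlety worth flagging is that downward-feasible constraints play no role in the argument, since the upper bounds used are derived purely from credit limits and upper quotas, both of which are preserved by any feasible matching.
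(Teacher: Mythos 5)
Your proof is correct and is precisely the argument the paper has in mind when it calls the observation ``immediate'': any matching's size is bounded above by $\sum_{c_j \in C} O(c_j)\cdot q^+(c_j)$ and by $\sum_{s_i \in S} T(s_i)$, and a course-complete (resp.\ student-complete) matching attains the first (resp.\ second) bound. Nothing further is needed.
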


Our interest lies in finding a stable matching. Informally a stable matching ensures that there is no student and course (or set of courses) that have an incentive to deviate by becoming matched together, possibly disregarding one or more of their existing assignments.

Because courses have different numbers of credits, there are several possible definitions of stability. The most natural definition is the following, which is analogous to the standard definition of stability in %the {\sc stable marriage} or {\sc hospitals/residents}. 
{\sc sm} and {\sc hr}.
It is equivalent to the definitions of stability in {\sc hrs} \cite{balasundaram_stability_2025, mcdermid_keeping_2010} and in {\sc school choice} papers by Hoyer and Stroh-Maraun \cite{hoyer_stability_2020} and Stroh-Maraun \cite{stroh-maraun_weighted_2024}. It is also similar to the definition of stability due to Aziz et al.\ \cite{aziz_stability_2018}.

\begin{definition}
In a matching $M$, a student $s_i$ and a course $c_j$ form a \emph{blocking pair} if $(s_i, c_j)$ is acceptable, $(s_i, c_j) \notin M$, and the following conditions hold:
\begin{enumerate}
    \item $c_j$ is undersubscribed %has fewer students assigned to it than its upper quota $q^+(c_j)$ 
    \emph{or} $c_j$ prefers $s_i$ over some $s_k$ assigned to $c_j$. That is, $|S_M(c_j)| < q^+(c_j)$ \emph{or} $s_i \succ_{c_j} s_k$ for some $s_k \in S_M(c_j)$.
    \item There exists a subset $D_i\subseteq C_M(s_i)$ (possibly empty) such that $s_i$ prefers $c_j$ to every $c_k\in D_i$, and $O(C_M(s_i)) - O(D_i) + O(c_j) \leq T(s_i)$. That is, $s_i$ has enough credits remaining for $c_j$, possibly after removing a subset of courses that $s_i$ prefers $c_j$ to.
    \item If there are downward-feasible constraints, $(C_M(s_i)\setminus D_i)\cup \{c_j\}\in F_i$.  That is, removing $D_i$ and adding $c_j$ to $C_M(s_i)$ does not result in an infeasible matching.
    % matching $M \cup \{(s_i, c_j)\}) \setminus \{(s_i,c_k) : c_k\in D_i\}$ is feasible}
\end{enumerate}
If $M$ has no blocking pair then it is \emph{pair-stable}. %{\sc course allocation} 
{\sc ca} with pair stability is denoted {\sc ca-p}.
\end{definition}

An example of a matching with a blocking pair is given in Figure \ref{fig:ExampleCourseAllocationPairStability}. However, we can see that in this example $s_1$ would have fewer credits if she was matched with $c_1$ instead of $c_3$. In such a situation, it may be argued that such a student is not better off, since students generally need to have a curriculum that is as complete as possible. That motivates our next definition of stability. This definition is similar to the definition of \emph{stability by demand} due to Aziz et al.\ \cite{aziz_stability_2018}, and equivalent to the \emph{occupancy-stability} definition due to Balasundaram et al. \cite{balasundaram_stability_2025} and the \emph{size-stability} definition due to Delacrétaz \cite{delacretaz_2019_stability}.

\begin{figure}[ht]
    \centering
    \begin{tabular}{|r|c|l|c|r|c|c|l|}
        \cline{1-3} \cline{5-8}
        & & & & & & & \vspace{-2mm}\\
         $s_i$ & $T(s_i)$ & $\succ_{s_i}$ & \hspace{1mm} & $c_j$ & $O(c_j)$ & $q^+(c_j)$ & $\succ_{c_j}$ \\
        & & & & & & & \vspace{-2mm}\\
        \cline{1-3} \cline{5-8}
        & & & & & & & \vspace{-2mm}\\
        
        $s_1$ & 2 & {\Circled{$c_1$}} $\succ$ {\m{c_3}} $\succ$ $c_2$ & & 
        
        $c_1$ & 1 & 1 & $s_2$ $\succ$ \Circled{$s_1$}
        \vspace{-2mm}\\
        & & & & & & & \\
        
        $s_2$ & 1 & \m{c_2} $\succ$ $c_1$ & & 
        
        $c_2$ & 1 & 1 & $s_1$ $\succ$ \m{s_2} \vspace{-2mm}\\
        & & & & & & & \\
        \cline{1-3} \multicolumn{4}{c|}{} & 
        
        $c_3$ & 2 & 1 & \m{s_1} \vspace{-2mm}\\
        \multicolumn{4}{c|}{} & & & & \\
        \cline{5-8}
    \end{tabular}
    \caption{Example of a matching in a {\sc ca} instance, highlighted in green, and a blocking pair, highlighted in red.}
    \label{fig:ExampleCourseAllocationPairStability}
\end{figure}

\begin{definition}
In a matching $M$, a student $s_i$ and a course $c_j$ form a \emph{size-blocking pair} if $(s_i, c_j)$ is acceptable, $(s_i, c_j) \notin M$, and the following conditions hold:
\begin{enumerate}
    \item Either $c_j$ is undersubscribed %has fewer students assigned to it than its upper quota $q^+(c_j)$ 
    \emph{or} $c_j$ prefers $s_i$ over some $s_k$ assigned to $c_j$. That is, $|S_M(c_j)| < q^+(c_j)$ \emph{or} $s_i \succ_{c_j} s_k$ for some $s_k \in S_M(c_j)$.
    \item There exists a subset $D_i\subseteq C_M(s_i)$ (possibly empty) such that $s_i$ prefers $c_j$ to every $c_k\in D_i$, $O(c_j) \geq O(D_i)$, and $O(C_M(s_i)) - O(D_i) + O(c_j) \leq T(s_i)$. That is, $s_i$ has enough credits remaining for $c_j$, possibly after removing a subset of courses that $s_i$ prefers $c_j$ to and with a total number of credits less than or equal to that of $c_j$.
    \item If there are downward-feasible constraints, $(C_M(s_i)\setminus D_i)\cup \{c_j\}\in F_i$.
    %the matching $(M \cup \{(s_i, c_j)\}) \setminus \{(s_i,c_k) : c_k\in D_i\}$ is feasible. 
    That is, removing $D_i$ and adding $c_j$ to $C_M(s_i)$ does not result in an infeasible matching.
\end{enumerate}
If $M$ has no size-blocking pair then it is \emph{pair-size-stable}. {\sc ca} with pair-size stability is denoted {\sc ca-ps}.
\end{definition}

The only difference between pair stability and pair-size stability is that we require that a student in a size-blocking pair must not end up with fewer credits than they started with when participating in a size-blocking pair. However, this may have unintended consequences. For example, consider student $s_1$ from Figure \ref{fig:ExampleCourseAllocationPairSizeStability}. If $s_1$ is matched with $c_3$, then she would prefer to be matched with $c_1$ and $c_2$ instead. However, because $O(c_3) = 2$ and $O(c_1) = O(c_2) = 1$, $s_1$ cannot form a size-blocking pair with either $c_1$ or $c_2$. This motivates our next stability definition, in which we allow students to block with more than one course.

\begin{figure}[ht]
    \centering
    \begin{tabular}{|r|c|l|c|r|c|c|l|}
        \cline{1-3} \cline{5-8}
        & & & & & & & \vspace{-2mm}\\
         $s_i$ & $T(s_i)$ & $\succ_{s_i}$ & \hspace{1mm} & $c_j$ & $O(c_j)$ & $q^+(c_j)$ & $\succ_{c_j}$ \\
        & & & & & & & \vspace{-2mm}\\
        \cline{1-3} \cline{5-8}
        & & & & & & & \vspace{-2mm}\\
        
        $s_1$ & 2 & {\Circled{$c_1$}} $\succ$ {\Circled{$c_2$}} $\succ$ \m{c_3} & & 
        
        $c_1$ & 1 & 1 & $s_2$ $\succ$ \Circled{$s_1$}
        \vspace{-2mm}\\
        & & & & & & & \\
        
        $s_2$ & 1 & \m{c_2} $\succ$ $c_1$ & & 
        
        $c_2$ & 1 & 1 & {\Circled{$s_1$}} $\succ$ \m{s_2} \vspace{-2mm}\\
        & & & & & & & \\
        \cline{1-3} \multicolumn{4}{c|}{} & 
        
        $c_3$ & 2 & 1 & \m{s_1} \vspace{-2mm}\\
        \multicolumn{4}{c|}{} & & & & \\
        \cline{5-8}
    \end{tabular}
    \caption{Example of a pair-size-stable matching in a {\sc ca} instance, highlighted in green. A subset of courses that $s_1$ may prefer over $c_3$ is shown in red.}
    \label{fig:ExampleCourseAllocationPairSizeStability}
\end{figure}

\begin{definition}
In a matching $M$, a student $s_i$ and a set of courses $B \subseteq C$ form a \emph{blocking coalition} if for all $c_j \in B$, $(s_i, c_j)$ is acceptable, for all $c_j \in B$, $(s_i, c_j) \notin M$, and the following conditions hold:
\begin{enumerate}
    \item for each $c_j\in B$, either $c_j$ is undersubscribed
    %has fewer students assigned to it than its upper quota $q^+(c_j)$ 
    \emph{or} $c_j$ prefers $s_i$ over some $s_k$ assigned to $c_j$. That is, for every $c_j \in B$, either $|S_M(c_j)| < q^+(c_j)$ \emph{or} $s_i \succ_{c_j} s_k$ for some $s_k \in S_M(c_j)$.
    \item There exists a subset $D_i\subseteq C_M(s_i)$ (possibly empty) such that $s_i$ prefers every $c_j \in B$ to every $c_k\in D_i$, $O(B) \geq O(D_i)$ and $O(C_M(s_i)) - O(D_i) + O(B) \leq T(s_i)$. That is, $s_i$ has enough credits remaining for all courses in $B$, possibly after removing a subset of courses that $s_i$ prefers all courses in $B$ to, and with a total number of credits less than or equal to that of $B$.
    \item If there are downward-feasible constraints, $(C_M(s_i)\setminus D_i)\cup B\in F_i$.    
    %the matching $(M \cup \{(s_i, c_j)|c_j \in B\}) \setminus \{(s_i,c_k) : c_k\in D_i\}$ is feasible.
    That is, removing $D_i$ and adding $B$ to $C_M(s_i)$ does not result in an infeasible matching.
\end{enumerate}
If $M$ has no blocking coalition then it is \emph{coalition-stable}. {\sc ca} with coalition stability is denoted {\sc ca-c}.
\end{definition}

Under coalition stability, $s_1$ from Figure \ref{fig:ExampleCourseAllocationPairSizeStability} can form a blocking coalition with $c_1$ and $c_2$ if she is matched with $c_3$. Moving on, we now consider Figure \ref{fig:ExampleCourseAllocationCoalitionStability}. Here, if $s_1$ is matched with $c_3$, then she cannot form a blocking coalition with $c_1$ and $c_2$, because she prefers $c_3$ over $c_2$. However, it is plausible that $s_1$ would rather be matched with $c_1$ and $c_2$ instead, because her best course in the new matching is strictly better than her best course in the old matching. Hence, we obtain a new definition of stability.

\begin{figure}[ht]
    \centering
    \begin{tabular}{|r|c|l|c|r|c|c|l|}
        \cline{1-3} \cline{5-8}
        & & & & & & & \vspace{-2mm}\\
         $s_i$ & $T(s_i)$ & $\succ_{s_i}$ & \hspace{1mm} & $c_j$ & $O(c_j)$ & $q^+(c_j)$ & $\succ_{c_j}$ \\
        & & & & & & & \vspace{-2mm}\\
        \cline{1-3} \cline{5-8}
        & & & & & & & \vspace{-2mm}\\
        
        $s_1$ & 2 & {\Circled{$c_1$}} $\succ$ {\m{c_3}} $\succ$ \Circled{$c_2$} & & 
        
        $c_1$ & 1 & 1 & $s_2$ $\succ$ \Circled{$s_1$}
        \vspace{-2mm}\\
        & & & & & & & \\
        
        $s_2$ & 1 & \m{c_2} $\succ$ $c_1$ & & 
        
        $c_2$ & 1 & 1 & \Circled{$s_1$} $\succ$ \m{s_2} \vspace{-2mm}\\
        & & & & & & & \\
        \cline{1-3} \multicolumn{4}{c|}{} & 
        
        $c_3$ & 2 & 1 & \m{s_1} \vspace{-2mm}\\
        \multicolumn{4}{c|}{} & & & & \\
        \cline{5-8}
    \end{tabular}
    \caption{Example of a coalition-stable matching in a {\sc ca} instance, highlighted in green. A subset of courses that $s_1$ may prefer over $c_3$ is shown in red.}
    \label{fig:ExampleCourseAllocationCoalitionStability}
\end{figure}

\begin{definition}
In a matching $M$, a student $s_i$ and a set of courses $B \subseteq C$ form a \emph{first-blocking coalition} if for all $c_j \in B$, $(s_i, c_j)$ is acceptable, for all $c_j \in B$, $(s_i, c_j) \notin M$, and the following conditions hold:
\begin{enumerate}
    \item for each $c_j\in B$, either $c_j$ is undersubscribed
    %has fewer students assigned to it than its upper quota $q^+(c_j)$ 
    \emph{or} $c_j$ prefers $s_i$ over some $s_k$ assigned to $c_j$. That is, for every $c_j \in B$, either $|S_M(c_j)| < q^+(c_j)$ \emph{or} $s_i \succ_{c_j} s_k$ for some $s_k \in S_M(c_j)$.
    \item There exists a subset $D_i\subseteq C_M(s_i)$ (possibly empty) such that $s_i$ prefers her most-preferred course in $B$ to her most-preferred course in $D_i$, $O(B) \geq O(D_i)$ and $O(C_M(s_i)) - O(D_i) + O(B) \leq T(s_i)$. That is, $s_i$ has enough credits remaining for all courses in $B$, possibly after removing a subset of courses $D_i$ with a total number of credits less than or equal to that of $B$ such that $s_i$ prefers her most-preferred course in $B$ over her most-preferred course in $D_i$.
    \item If there are downward-feasible constraints, $(C_M(s_i)\setminus D_i)\cup B\in F_i$.    
    %the matching $(M \cup \{(s_i, c_j)|c_j \in B\}) \setminus \{(s_i,c_k) : c_k\in D_i\}$ is feasible.
    That is, removing $D_i$ and adding $B$ to $C_M(s_i)$ does not result in an infeasible matching.
\end{enumerate}
If $M$ has no first-blocking coalition then it is \emph{first-coalition-stable}. {\sc ca} with first-coalition stability is denoted {\sc ca-fc}.
\end{definition}

The four stability definitions can be ordered in terms of their strength with the aid of the following proposition.

\begin{proposition}
Pair stability implies first-coalition stability, which implies coalition stability, which implies pair-size stability.
\label{prop:StabilityDefinitions}
\end{proposition}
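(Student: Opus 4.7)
The plan is to prove each of the three implications by contrapositive: assuming a blocking witness for the stronger (rightward) notion's failure, I will construct a blocking witness for the weaker (leftward) notion's failure. Concretely, a first-blocking coalition will yield a blocking pair; a blocking coalition is already a first-blocking coalition; and a size-blocking pair is already a blocking coalition with $B = \{c_j\}$. In each case the idea is to reuse the same dropped set $D_i$ (and in the last case the same candidate course) and then inspect the three defining conditions of the target blocking object.

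For \emph{pair stability} $\Rightarrow$ \emph{first-coalition stability}, I would start from a first-blocking coalition $(s_i, B)$ with dropped set $D_i$, let $c_j$ be $s_i$'s most preferred course in $B$, and argue that $(s_i, c_j)$ together with the same $D_i$ is a blocking pair. The condition on $c_j$ (undersubscription or displacement) is inherited because $c_j \in B$. The preference requirement follows because $c_j$ is the top of $B$ and $s_i$ prefers the top of $B$ to the top of $D_i$, hence to every course in $D_i$. The credit bound $O(C_M(s_i)) - O(D_i) + O(c_j) \leq T(s_i)$ follows from $O(c_j) \leq O(B)$ combined with the analogous inequality for $B$. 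Finally, downward-feasibility of $(C_M(s_i) \setminus D_i) \cup \{c_j\}$ follows from that of $(C_M(s_i) \setminus D_i) \cup B$ using $\{c_j\} \subseteq B$ and downward closure of $F_i$.

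For \emph{first-coalition stability} $\Rightarrow$ \emph{coalition stability}, I would observe that a blocking coalition $(s_i, B, D_i)$ satisfies the stronger preference condition ``$s_i$ prefers every course in $B$ to every course in $D_i$'', which in particular implies ``$s_i$ prefers her most preferred course in $B$ to her most preferred course in $D_i$'', the preference condition required for a first-blocking coalition. All other clauses in the two definitions coincide verbatim, so the same tuple witnesses a first-blocking coalition. For \emph{coalition stability} $\Rightarrow$ \emph{pair-size stability}, I would simply take a size-blocking pair $(s_i, c_j)$ with dropped set $D_i$ and set $B = \{c_j\}$; each numbered clause of the blocking-coalition definition then collapses into the corresponding clause of the size-blocking-pair definition, including $O(B) = O(c_j) \geq O(D_i)$.

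The main obstacle, such as it is, lies in the first implication: one must choose the right single course from $B$ and invoke both $O(c_j) \leq O(B)$ and the downward closure of the feasibility constraints. The other two implications reduce to definition-chasing, where the only subtlety is to recognise that ``preferred to every element of $D_i$'' is strictly stronger than ``top of $B$ preferred to top of $D_i$'', and that a singleton coalition is a legitimate coalition.
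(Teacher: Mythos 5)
Your proposal is correct and follows essentially the same route as the paper: each implication is established by converting a witness for the failure of the weaker notion into a witness for the failure of the stronger one (the paper phrases this as a proof by contradiction, yours as a contrapositive, which is the same argument). Your treatment of the first implication is in fact slightly more explicit than the paper's, spelling out the credit bound via $O(c_j)\leq O(B)$ and the downward-closure step for feasibility, both of which the paper leaves partly implicit.
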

\begin{proof}
We will prove all three implications by contradiction. First, assume that a matching $M$ is pair-stable but not first-coalition-stable. Hence, there exist a student $s_i$ and a set of courses $B$ such that $s_i$ has enough space for all courses in $B$, possibly after removing a subset of courses $D_i \subseteq C_M(s_i)$ that $s_i$ prefers her most-preferred course in $B$ to. Let this course be $c_j$. Then $(s_i, c_j)$ is a blocking pair, so $M$ is not pair-stable, a contradiction. If there are downward-feasible constraints, $(s_i, c_j)$ is still a blocking pair, because $s_i$ and $B$ form a first-blocking coalition, which means that $(C_M(s_i)\setminus D_i)\cup B\in F_i$. %$(M \cup \{(s_i, c_j)|c_j \in B\}) \setminus \{(s_i,c_k) : c_k\in D_i\}$ is feasible, and 
Moreover $(C_M(s_i)\setminus D_i)\cup \{c_j\}\subseteq (C_M(s_i)\setminus D_i)\cup B$ and hence $(C_M(s_i)\setminus D_i)\cup \{c_j\}\in F_i$.

Now, assume that $M$ is first-coalition-stable, but not coalition-stable. Then there exist a student $s_i$ and a set of courses $B$ such that $s_i$ has enough space for all courses in $B$, possibly after removing a subset of courses $D_i \subseteq C_M(s_i)$ such that $s_i$ prefers each course in $B$ to each course in $D_i$. This implies that $s_i$ prefers her most-preferred course in $B$ over her most-preferred course in $D_i$, so $(s_i,B$) is a first-blocking coalition, and $M$ is not first-coalition-stable, a contradiction. If there are downward-feasible constraints, $(s_i,B$) is still a first-blocking coalition, since $(s_i,B$) is a blocking coalition, which means that $(C_M(s_i)\setminus D_i)\cup B\in F_i$.

Finally, assume that $M$ is coalition-stable, but not pair-size-stable. Then there exist a student $s_i$ and a course $c_j$ such that $(s_i, c_j)$ is a size-blocking pair. However, this implies that $(s_i,\{c_j\})$ is a blocking coalition, so $M$ is not coalition-stable, a contradiction. If there are downward-feasible constraints, $(s_i,\{c_j\})$ is still a blocking coalition, because $(s_i, c_j)$ is a size-blocking pair, which means that $(C_M(s_i)\setminus D_i)\cup \{c_j\}\in F_i$.
\end{proof}

\section{Verifying stability}
\label{sec:verifying}
In this section we consider the complexity of the problem of determining whether a given matching is stable, with respect to the stability criteria defined in the previous section. Firstly, we remark that it is straightforward to check whether an assignment is a matching, and whether a matching is feasible, in the presence of downward-feasible constraints. One simply needs to check if the assignment respects upper quotas and credit limits, and if it respects all downward-feasible constraints. On the other hand, checking whether a (feasible) matching is stable may be trickier, depending on the stability definition. Let {\sc ca-x-test} be the problem of deciding whether a given matching $M$ is stable, for a given definition of stability x. Under pair stability, one can verify the stability of $M$ in polynomial time. For each course we can compute its worst assigned student in $M$, within an overall time complexity of $O(L)$, where $L$ is the total number of acceptable pairs. For any acceptable student-course pair $(s_i, c_j) \notin M$, we can check if $c_j$ has enough capacity for $s_i$, or if $c_j$ prefers $s_i$ over its worst assigned student in $M$, which can be done in $O(1)$ time, using suitable data structures. Similarly, we can check in $O(m)$ time, where $m$ is the number of courses, if $s_i$ has enough remaining credits for $c_j$, possibly after removing all courses in $C_M(s_i)$ that $s_i$ prefers $c_j$ to. Recall that under this definition a student can end up with fewer credits after satisfying a blocking pair. The overall time complexity is then $O(Lm)$. Thus, we obtain the following observation.

\begin{observation}
{\sc ca-p-test} is solvable in $O(Lm)$ time, where $m = |C|$ and $L$ is the number of acceptable pairs.
\label{obs:PairStabilityVerifying}
\end{observation}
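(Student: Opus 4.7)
The plan is to essentially formalise the sketch already given in the paragraph preceding the observation, by describing a concrete verification algorithm with the claimed running time. The key observation is that for pair stability, the blocking-pair condition on the course side is purely local (it only depends on the worst student currently assigned to that course), and the blocking condition on the student side can be decided by a greedy removal scan over $C_M(s_i)$.

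First I would do an $O(L)$ preprocessing pass. For each course $c_j \in C$, I would compute and store $|S_M(c_j)|$ and, if $c_j$ is full, a pointer to its worst currently assigned student $w(c_j)$ under $\succ_{c_j}$. Since each acceptable pair contributes to at most one such list, the total work is $O(L)$. I would also store each student's preference list together with its rank function so that comparing $c_j$ with any $c_k \in C_M(s_i)$ takes $O(1)$ time, and I would maintain $O(C_M(s_i))$ for every student, which costs $O(L)$ to build.

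Next I would loop over all acceptable pairs $(s_i, c_j)$ with $(s_i, c_j) \notin M$, of which there are at most $L$. For each such pair I first test the course-side condition in $O(1)$: either $|S_M(c_j)| < q^+(c_j)$, or $s_i \succ_{c_j} w(c_j)$. If this fails, the pair is not blocking and I continue. Otherwise, I test the student-side condition by scanning $C_M(s_i)$ once, collecting the set $D_i = \{c_k \in C_M(s_i) : c_j \succ_{s_i} c_k\}$; this scan takes $O(|C_M(s_i)|) = O(m)$ time. I then check whether $O(C_M(s_i)) - O(D_i) + O(c_j) \leq T(s_i)$. If so, $(s_i, c_j)$ together with the witness $D_i$ is a blocking pair and $M$ is not pair-stable; if not, then no subset $D'_i \subseteq C_M(s_i)$ consisting of courses that $s_i$ prefers $c_j$ to can work either, because $D_i$ is already the most aggressive such removal, so no blocking pair involves $(s_i, c_j)$. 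If the loop terminates without finding a blocking pair, $M$ is pair-stable.

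The correctness reduces to the single lemma that a blocking pair $(s_i, c_j)$ exists with some witness $D'_i$ if and only if it exists with the maximal witness $D_i$ defined above, which is immediate from the definition and monotonicity of the credit constraint. For the running time, the preprocessing is $O(L)$, and the main loop performs $O(m)$ work for each of the $L$ candidate pairs, giving $O(Lm)$ overall; no step is genuinely hard, and if anything the only subtlety worth stating carefully is precisely this maximality argument for $D_i$, since otherwise one might worry that choosing the wrong $D_i$ causes the algorithm to miss a blocking pair.
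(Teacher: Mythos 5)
Your proposal is correct and follows essentially the same route as the paper: an $O(L)$ preprocessing pass to record each full course's worst assigned student, then an $O(m)$-per-pair check on the student side using the maximal removal set $D_i$ of all courses in $C_M(s_i)$ that $s_i$ prefers $c_j$ to. The maximality argument you highlight is exactly the (implicit) justification in the paper's sketch, so nothing is missing.
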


However, verifying stability is hard for the other three definitions of stability, which we will show in Theorem \ref{thm:HardnessVerifyingStability}. The difference is that in a blocking pair (for pair stability), a student is allowed to end up with fewer credits than she had in the matching $M$, but that is not allowed in the other three definitions of stability. This fact is crucial in our transformation from the {\sc subset sum} problem. An instance of {\sc subset sum} consists of a multiset of elements $X$, where each element $e\in X$ has \emph{size} $s(e)$, and a target integer $T$. The goal is to determine the existence of a subset of $X$ where the sum of the sizes of its elements equals $T$. The problem is weakly NP-complete, even if the size of every element in $X$ is a positive integer \citep[page 223]{garey_1979_computers}.

\begin{theorem}
The following problems are weakly co-NP-complete: {\sc ca-fc-test}, {\sc ca-c-test} and {\sc ca-ps-test}.
\label{thm:HardnessVerifyingStability}
\end{theorem}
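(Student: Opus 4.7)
Membership in co-NP is immediate for all three variants: a ``no''-certificate consists of a student $s_i$, a candidate set $B\subseteq C$ (or a single course $c_j$ in the pair-size case), and an associated subset $D_i\subseteq C_M(s_i)$, each of polynomial size. The clauses of the relevant blocking definition---acceptability, exclusion from $M$, the undersubscribed-or-prefers condition, the credit and size arithmetic, and the preference comparisons---are all checkable in polynomial time using straightforward data structures.

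For hardness I plan a single reduction from \textsc{subset sum} that handles all three problems simultaneously. Given an instance with multiset $X = \{e_1,\dots,e_n\}$, positive integer sizes $s(e_i)$ and target $T$, construct a \textsc{ca} instance with one student $s_1$, courses $c_1,\dots,c_n$ with $O(c_i) = s(e_i)$, and one extra course $c^{*}$ with $O(c^{*}) = T$. Set every upper quota to $1$; let each $c_i$ and $c^{*}$ find only $s_1$ acceptable; let $s_1$'s preference list rank $c^{*}$ above every $c_i$ (in any order); and set $T(s_1) = \sum_{i=1}^{n} s(e_i)$. The matching to be tested is $M = \{(s_1,c_i) : 1 \leq i \leq n\}$, which is a valid matching since $s_1$ uses exactly her credit cap and every $c_i$ is filled to its upper quota, while $c^{*}$ is unmatched and therefore undersubscribed.

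Because $c^{*}$ is the unique course acceptable to $s_1$ that is not in $M$, every candidate blocking set under any of the three stability notions must equal $\{c^{*}\}$, and the coalition and first-coalition definitions collapse to the pair-size definition in this case. For this choice condition~1 is trivial ($c^{*}$ is undersubscribed) and $s_1$ prefers $c^{*}$ to every course in $C_M(s_1)$, so the preference comparisons in condition~2 are automatic. The remaining requirements reduce to the existence of $D\subseteq\{c_1,\dots,c_n\}$ with $O(D)\leq O(c^{*}) = T$ and $O(C_M(s_1)) - O(D) + O(c^{*}) \leq T(s_1)$; substituting $O(C_M(s_1)) = T(s_1)$ forces $O(D) \geq T$, so combined with $O(D)\leq T$ we obtain $O(D) = T$, which is precisely a subset of $X$ summing to the target. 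Hence $M$ is unstable (under any of the three definitions) iff the \textsc{subset sum} instance is a yes-instance, and since all numeric parameters are polynomial in the bit length of the input, the reduction yields weak co-NP-hardness. The only real subtlety is ruling out spurious multi-course blocking coalitions for \textsc{ca-c} and \textsc{ca-fc}, which is handled cleanly by making $c^{*}$ the sole course outside $M$; once that point is pinned down, the rest of the argument is essentially self-checking.
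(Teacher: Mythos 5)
Your proposal is correct and is essentially identical to the paper's own proof: the same co-NP membership certificate and the same \textsc{subset sum} reduction with a single student at her exact credit cap, one course per element, and a single target course ranked first and left out of $M$, forcing $O(D)=T$. No meaningful differences to report.
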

\begin{proof}
These problems are in co-NP because given a matching $M$, a student $s$, a set of courses $B$ (of size $1$ in the case of pair-size stability) with which $s$ would form a size-blocking pair/blocking coalition/first-blocking coalition, and a set of courses $D$ that $s$ would drop, we can check whether $s$ blocks the matching $M$ with the courses in $B$ in polynomial time.

Consider a {\sc subset sum} instance $I$, with multiset $X$ and target number $T$. We create an instance $J$ of {\sc ca} as follows. For every element $e \in X$, we create a course $c_e$, with upper quota $1$ and $O(c_e) = s(e)$. We create one additional course, $b$, with upper quota $1$ and $O(b) = T$. We also create one student, $s$, with $T(s)$ equal to the sum of the sizes of all elements in $X$. The preference list of $s$ is $b$ first, then all of the other courses in any order. The matching that we want to verify is $M = \{(s, c_e) | e \in X\}$. Thus, the matching includes every acceptable pair except for $(s, b)$. Also, $O(C_M(s)) = T(s)$.

Assume that there exists a subset $X' \subseteq X$ such that $\sum_{e \in X'} s(e) = T$. Let $D = \{c_e | e \in X'\}$ be a subset of courses. Then $(s,b)$ is a size-blocking pair, because $b$ has enough capacity for $s$, $s$ prefers $b$ over every course in $D$, $O(b) = O(D) = T$, and $O(C_M(s)) - O(D) + O(b) = T(s)$. Thus, $M$ is not pair-size-stable. We have that $(s,B)$ is a blocking coalition and a first-blocking coalition, too, so $M$ is not coalition-stable or first-coalition-stable either.

Now, assume that $M$ is not pair-size-stable. Thus, there exists at least one size-blocking pair. The only acceptable pair that is not in $M$ is $(s,b)$, so that is our size-blocking pair. As $(s,b)$ is a size-blocking pair, $s$ prefers $b$ over a subset of courses $D \subseteq C_M(s)$ such that $O(b) \geq O(D)$ and $O(C_M(s)) - O(D) + O(b) \leq T(s)$. As $O(C_M(s)) = T(s)$, we have that $O(b) \leq O(D)$, which combined with $O(b) \geq O(D)$ gives us $O(b) = T = O(D)$. As $b \notin D$, all courses in $D$ are of type $c_e$. Thus, we can build a subset $X' \subseteq X$, such that $X = \{e | c_e \in D\}$ and $\sum_{e \in X'} s(e) = T$. Therefore, there exists a subset of $X$ whose elements' sizes add up to $T$. This direction also works in the case of coalition stability or first-coalition stability.
\end{proof}

However, just as in the case of {\sc subset sum}, a pseudo-polynomial dynamic programming algorithm can verify the stability of a given matching, for either pair-size stability, coalition stability, or first-coalition stability. Thus, if the number of credits of all courses is not exponentially large, verifying the stability of a matching can be done in polynomial time. In real-life settings, it is reasonable to assume that course credits are polynomially bounded.\footnote{For example, all courses taken by Honours students at the School of Law, University of Glasgow \cite{glasgowlaw_communication_2025}, carry 20 or 40 credits.  For the purposes of a dynamic programming algorithm, each course's number of credits, and each student's maximum number of credits, can be divided by 20.}
%Hence, for the rest of the paper we are going to assume that this is not the case, meaning verifying the stability of a matching can be done in polynomial time.

As an aside, verifying the stability of a matching is weakly co-NP-complete even in an instance of {\sc ca-ml}, where students are ordered in a master list. In the proof of Theorem \ref{thm:HardnessVerifyingStability} there is a single student, who trivially forms a master list on her own.

\section{Finding a stable matching}
\label{sec:finding}
\subsection{Arbitrary course preferences}
\label{subsec:findingAbritrary}
%The paper "Keeping partners together: algorithmic results for the hospitals/residents problem with couples" shows that, under pair stability, a stable matching may not always exist and determining whether it does is NP-complete. We can adapt their NP-completeness proof to show that the problem under first-coalition stability remains NP-complete. By contrast, under pair-size stability a stable matching always exists and can be found in polynomial time using a variation of Deferred Acceptance. In this variation, we first consider only the courses with the largest amount of credits, and run DA on this subset of courses and all students. After they are matched, in the second round we consider the courses with the second largest amount of credits, and so on, until all courses have been considered.

We now move onto the problem of finding a stable matching or reporting that none exists, for our four stability definitions. Let {\sc ca-x-find} be the problem of finding a stable matching in a {\sc ca} instance $I$, for a given definition of stability x, or determining that none exists. As in the case of testing, the complexity of {\sc ca-x-find} depends on the definition of stability. In fact, depending on the definition of stability, a stable matching may or may not be guaranteed to exist.

Let us first consider the problem of finding a pair-size-stable matching. In order to do so, we use the following procedure. First, we group courses by number of credits, from largest to smallest. Let $\{O(c_j) : c_j \in C\} = \{N_1, \dots, N_R\}$ for some $R \geq 1$, where $N_r > N_{r+1}$ ($1 \leq r \leq R-1$). Let $C_r = \{c_j \in C : O(c_j) = N_r\}$ for each $r$ ($1\leq r\leq R$).

A pair-size-stable matching can be found using a variation of Deferred Acceptance, which has $R$ rounds. In the first round, we consider the set of all students $S$ and the subset $C_1$ of courses, and match them using DA. We take this partial matching onto round $2$, in which we consider all students $S$ and the subset $C_2$ of courses. In general, in round $i$, when we consider all students $S$ and the subset $C_i$ of courses, each student applies to courses in $C_i$ from the beginning of their preference list. Importantly, if a student $s_i$ and a course $c_j$ are matched in some round, they remain matched until the end of execution, so at each round the number of credits that $s_i$ has available for courses stays the same or decreases. We continue until $R$ rounds have been executed, at which point the entire set $C = \bigcup_{i=1}^R C_i$ of courses will have been considered. The pseudocode for this algorithm is given in Algorithm \ref{alg:pairSizeArbitrary}.

\begin{algorithm}[h]
\small
\caption{Algorithm for finding a pair-size-stable (feasible) matching}
\label{alg:pairSizeArbitrary}
\begin{algorithmic}[1]
\REQUIRE Course Allocation instance $I$
\ENSURE return a pair-size-stable matching $M$ in $I$
\STATE $M := \emptyset$;
\FOR{$r \in 1 \dots R$}
    \WHILE{some student $s_i\in S$ satisfies $O(C_M(s_i))+N_r\leq T(s_i)$ {\bf and} $s_i$ has not applied to all feasible courses in $C_r$ on her list}
      \STATE $c_j :=$ $s_i$'s most-preferred feasible course in $C_r$ that $s_i$ has not yet applied to;
      \STATE $s_i$ applies to $c_j$;
      \IF{$|S_M(c_j)| < q^+(c_j)$}
        \STATE $M := M\cup \{(s_i,c_j)\}$;
      \ELSE
        \STATE $s_k :=$ worst student assigned to $c_j$;
        \IF{$c_j$ prefers $s_i$ to $s_k$}
          \STATE $M := (M\cup \{(s_i,c_j)\})\backslash \{(s_k,c_j)\}$;
          \STATE $c_j$ rejects $s_k$;
        \ELSE
           \STATE $c_j$ rejects $s_i$;
        \ENDIF
      \ENDIF
    \ENDWHILE
\ENDFOR
\RETURN $M$;\\
\end{algorithmic}
\normalsize
\end{algorithm}

It is clear that Algorithm \ref{alg:pairSizeArbitrary} finds a matching $M$. Furthermore, if there are downward-feasible constraints, $M$ is feasible, because every $s_i$ applies only to courses $c_j$ such that adding $(s_i, c_j)$ to $M$ does not violate downward-feasible constraints. With appropriate data structures, the overall complexity of Algorithm \ref{alg:pairSizeArbitrary} is $O(RL)=O(nm^2)$, where $n = |S|$, $m = |C|$ and $L$ is the number of acceptable pairs. This is because for every student we consider each course in her preference list at most $m$ times, if there are downward-feasible constraints, as courses that were previously unfeasible might become feasible for some $s_i$ if she is rejected from some $c_j$. Meanwhile, when a student applies to a course, the course can perform all necessary steps in constant time, giving a total time complexity of $O(nm^2)$, where $n = |S|$ and $m = |C|$. If there are no downward-feasible constraints then the time complexity is $O(nm)$. We now show that $M$ is pair-size-stable.

\begin{theorem}
Algorithm \ref{alg:pairSizeArbitrary} always finds a pair-size-stable (feasible) matching in $O(nm^2)$ time, where $n = |S|$ and $m = |C|$.
\label{thm:pairSizeStabilityFindingStable}
\end{theorem}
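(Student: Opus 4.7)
The plan is to verify that Algorithm \ref{alg:pairSizeArbitrary} returns a feasible pair-size-stable matching $M$, with the stated $O(nm^2)$ runtime. Feasibility is immediate from the algorithm's design: the eligibility check in the while loop ensures that any pair $(s_i,c_j)$ added to $M$ satisfies both $O(C_M(s_i)) + N_r \leq T(s_i)$ and $C_M(s_i) \cup \{c_j\} \in F_i$, and the DA-style accept/reject logic preserves upper quotas. For the runtime, each student applies to each course at most once per round, so there are $O(L)$ applications per round and $O(RL) = O(nm^2)$ total work with suitable data structures.

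For pair-size stability I would argue by contradiction: suppose $(s_i,c_j)$ is a size-blocking pair with witness $D_i \subseteq C_M(s_i)$, and let $c_j \in C_r$ so $O(c_j) = N_r$. The key structural fact is that matches involving courses in $C_1 \cup \dots \cup C_r$ are frozen at the end of round $r$, because later rounds only process courses in $C_{r+1},\dots,C_R$; hence $C_{M_r}(s_i) = C_M(s_i) \cap (C_1 \cup \dots \cup C_r)$, where $M_r$ is the matching at the end of round $r$. The credit bound $O(D_i) \leq N_r$, combined with the strict inequalities $N_1 > \dots > N_R$, forces $D_i \subseteq C_r \cup C_{r+1} \cup \dots \cup C_R$ and $|D_i \cap C_r| \leq 1$.

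I would then case-split on whether $s_i$ applied to $c_j$ during round $r$. If she did, DA-style reasoning restricted to round $r$ implies that at the end of round $r$ either $(s_i,c_j) \in M_r$ (contradicting $(s_i,c_j) \notin M$), or $c_j$ is filled with students each preferred by $c_j$ over $s_i$, a state that persists to $M$ and contradicts condition 1 of the blocking pair (since $S_M(c_j) = S_{M_r}(c_j)$). If she did not apply, the while loop terminated with $s_i$ ineligible at the end of round $r$, so either $O(C_{M_r}(s_i)) + N_r > T(s_i)$ or $C_{M_r}(s_i) \cup \{c_j\} \notin F_i$. When $D_i \cap C_r = \emptyset$, combining either possibility with $C_{M_r}(s_i) \subseteq C_M(s_i) \setminus D_i$, the blocking-pair credit bound, and downward-feasibility applied to $(C_M(s_i) \setminus D_i) \cup \{c_j\} \in F_i$ yields an immediate contradiction.

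The main obstacle is the remaining sub-case $D_i \cap C_r = \{c_k\}$: $s_i$ ends up matched with some $c_k \in C_r$ despite preferring $c_j$. Without downward-feasible constraints this follows straight from the preference-ordered application strategy, which forces $s_i$ to have applied to $c_j$ before $c_k$. With downward-feasible constraints, I would examine the moment $s_i$ applied to $c_k$ in round $r$: since $c_j$ was unapplied and the less-preferred $c_k$ was chosen as her top feasible unapplied course, $c_j$ must have been infeasible at that moment. Combining this observation with the blocking-pair condition $(C_M(s_i) \setminus \{c_k\}) \cup \{c_j\} \in F_i$, and carefully tracing the sequence of bumpings and further applications between this moment and the end of round $r$—invoking upward closure of infeasibility when $s_i$ applied to $c_k$ and downward closure of feasibility on the final blocking witness—I would show that at some later moment $c_j$ must become feasible while $s_i$ remains eligible, forcing her to apply to $c_j$ and yielding the required contradiction.
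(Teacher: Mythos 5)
Your overall architecture mirrors the paper's proof: argue by contradiction, use the credit bound $O(D_i)\leq O(c_j)=N_r$ together with $N_1>\dots>N_R$ to localise $D_i$ in $C_r\cup\dots\cup C_R$, and then run a Deferred-Acceptance-style argument inside round $r$. Your treatment of the cases where $s_i$ actually applied to $c_j$ in round $r$, and where she did not apply and $D_i\cap C_r=\emptyset$, is correct and in fact spelled out more carefully than in the paper (the explicit use of $C_{M_r}(s_i)\subseteq C_M(s_i)\setminus D_i$ plus downward-feasibility to contradict both possible reasons for ineligibility is exactly the right argument).

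The gap is in the sub-case you yourself flag as the main obstacle, namely $D_i=\{c_k\}$ with $c_k\in C_r$ and $O(c_k)=O(c_j)$ in the presence of downward-feasible constraints. Your plan rests on the claim that after the obstruction to $c_j$ disappears, ``$c_j$ must become feasible while $s_i$ remains eligible, forcing her to apply to $c_j$.'' This claim fails when the obstruction to $c_j$ is $c_k$ itself. Concretely, suppose all courses have one credit, $T(s_i)=2$, $P(s_i): c_a\succ_{s_i} c_j\succ_{s_i} c_k$, and $F_i$ excludes exactly the combinations $\{c_a,c_j\}$ and $\{c_j,c_k\}$ (this is downward-feasible). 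Then $s_i$ applies to $c_a$, skips $c_j$ (infeasible given $c_a$), applies to $c_k$, and is later bumped from $c_a$ by a better student; at that point $c_j$ is still infeasible \emph{to add}, because the algorithm only ever tests $C_M(s_i)\cup\{c_j\}\in F_i$ and she now holds $c_k$ with $\{c_j,c_k\}\notin F_i$. The round ends with $C_M(s_i)=\{c_k\}$, yet $(s_i,c_j)$ with $D_i=\{c_k\}$ satisfies all three size-blocking conditions, since the \emph{swap} $(C_M(s_i)\setminus\{c_k\})\cup\{c_j\}=\{c_j\}$ is feasible even though the \emph{addition} never was. So the chain of bumpings you propose to trace cannot deliver the needed application to $c_j$. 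For what it is worth, the paper disposes of this sub-case in a single sentence (asserting that otherwise within-round DA would not have produced a stable matching), which likewise does not engage with this interaction between drop-sets and feasibility; you have put your finger on the genuinely delicate point, but your proposed resolution does not close it. Without downward-feasible constraints your preference-order argument for this sub-case is fine, so the issue is confined to the feasibility-constrained version of the statement.
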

\begin{proof}
We will argue by contradiction. Assume the matching $M$ found by Algorithm \ref{alg:pairSizeArbitrary} is not pair-size-stable, and that there is a size-blocking pair $(s_i, c_j)$, such that $c_j \in C_r$ for some $r$. Thus, either $s_i$ has enough remaining credits to match with $c_j$ (and adding $(s_i, c_j)$ to $M$ results in a feasible matching) or $s_i$ prefers $c_j$ over every course $c_k$ from a subset of courses $D_i$ such that $s_i$ is assigned to every $c_k \in D_i$, $O(c_j) \geq \sum_{c_k \in D_i} O(c_k)$, and, if there are downward-feasible constraints, $(M \cup \{(s_i, c_j)\}) \backslash D_i$ is feasible.

Assume that the first case holds. The course $c_j$ is in subset $C_r$, meaning that it was considered in the $r$-th round. As $s_i$ has enough credits for $c_j$ after the execution of the algorithm, it had enough credits during round $r$, because matches between students and courses at the end of a given round $k$ cannot be broken subsequently, so after every round the number of remaining credits for $s_i$ stays the same or decreases. Thus, $s_i$ applied to $c_j$ during round $r$, so if they are not matched at the end of round $r$ then $c_j$ rejected $s_i$ for a better student. Hence, $(s_i, c_j)$ was not a size-blocking pair at the end of round $r$. Because all matches formed in round $r$ are part of the final pair-size-stable matching, $c_j$ prefers all of its students over $s_i$, so $(s_i, c_j)$ is not a size-blocking pair, a contradiction. Therefore, this case cannot occur.

Now, assume the second case holds, so $s_i$ prefers $c_j$ over every course $c_k$ from a subset of courses $D_i$ such that $s_i$ is assigned in $M$ to every $c_k \in D_i$ and $O(c_j) \geq \sum_{c_k \in D_i} O(c_k)$. As $O(c_j) \geq \sum_{c_k \in D_i} O(c_k)$, $O(c_j) \geq O(c_k)$ $\forall c_k \in D_i$, no $c_k \in D_i$ has a number of credits strictly greater than $O(c_j)$. Additionally, no $c_k \in D_i$ has a number of credits equal to $O(c_j)$, for if there were such a $c_k$ and $(s_i, c_j)$ was a size-blocking pair, then that would imply that in the round where $c_j$ and $c_k$ were considered, Deferred Acceptance did not find a stable matching, which is a contradiction. Therefore, $c_j$ has a number of credits strictly greater than every $c_k \in D_i$. However, this means that, during the algorithm's execution, $c_j$ was considered in an earlier round than every $c_k \in D_i$, say round $r$. During that round, $s_i$ had enough remaining credits for $c_j$, because, after removing every course from $D_i$, it has enough credits for $c_j$ after the execution of the algorithm. Using a similar argument to the first case, $(s_i, c_j)$ was not a size-blocking pair at the end of round $r$, meaning that it is not a size-blocking pair upon termination of the algorithm, a contradiction. Therefore, this case also cannot occur. By assuming either case holds, we arrive at a contradiction, which is caused by the assumption that there exists a size-blocking pair. Therefore, no such pair exists, and the matching is pair-size-stable.
\end{proof}

On the other hand, a first-coalition-stable matching, or pair-stable matching, may not always exist. Consider, for example, the instance in Example \ref{exmp:ExampleNoInstance}.

\begin{example}
Let us look at every possible matching in the instance from Figure \ref{fig:ExampleNoInstanceFigure}. Then for every matching $M$ we can find a first-blocking coalition, which implies that $M$ is not first-coalition-stable, and thus that it is not pair-stable.

\begin{itemize}
    \item If $M = \{(s_1, c_1), (s_1, c_2)\}$, then $(s_2, c_1)$ is a first-blocking coalition.
    \item If $M = \{(s_1, c_1), (s_2, c_2)\}$, then $(s_1, c_2)$ is a first-blocking coalition.
    \item If $M = \{(s_1, c_2), (s_2, c_1)\}$, then $(s_1, c_3)$ is a first-blocking coalition.
    \item If $M = \{(s_1, c_3), (s_2, c_2)\}$, then $(s_1, \{c_1, c_2\})$ is a first-blocking coalition.
    \item If $M = \{(s_1, c_3), (s_2, c_1)\}$, then $(s_2, c_2)$ is a first-blocking coalition.
\end{itemize}
\label{exmp:ExampleNoInstance}
\end{example}

\begin{figure}[h]
    \centering
    \begin{tabular}{|r|c|l|c|r|c|c|l|}
        \cline{1-3} \cline{5-8}
        & & & & & & & \vspace{-2mm}\\
         $s_i$ & $T(s_i)$ & $\succ_{s_i}$ & \hspace{1mm} & $c_j$ & $O(c_j)$ & $q^+(c_j)$ & $\succ_{c_j}$ \\
        & & & & & & & \vspace{-2mm}\\
        \cline{1-3} \cline{5-8}
        & & & & & & & \vspace{-2mm}\\
        
        $s_1$ & 2 & $c_1$ $\succ$ $c_3$ $\succ$ $c_2$ & & 
        
        $c_1$ & 1 & 1 & $s_2$ $\succ$ $s_1$
        \vspace{-2mm}\\
        & & & & & & & \\
        
        $s_2$ & 1 & $c_2$ $\succ$ $c_1$ & & 
        
        $c_2$ & 1 & 1 & $s_1$ $\succ$ $s_2$ \vspace{-2mm}\\
        & & & & & & & \\
        \cline{1-3} \multicolumn{4}{c|}{} & 
        
        $c_3$ & 2 & 1 & $s_1$ \vspace{-2mm}\\
        \multicolumn{4}{c|}{} & & & & \\
        \cline{5-8}
    \end{tabular}
    \caption{Instance of {\sc ca}. This instance is derived from an {\sc hrs} instance due to McDermid and Manlove \cite{mcdermid_keeping_2010}.}
    \label{fig:ExampleNoInstanceFigure}
\end{figure}

Determining whether a pair-stable matching exists is NP-complete, which can be seen by considering a connection between {\sc ca} and {\sc hrs} ({\sc hrs} was introduced in Section \ref{sec:intro}), which we now describe.
%a modification of {\sc hr} in which some residents need more than one space \cite{mcdermid_keeping_2010}. 
With respect to the stability definition given by McDermid and Manlove \cite{mcdermid_keeping_2010}, the problem of deciding whether a stable matching exists, given an instance of {\sc hrs}, is NP-complete. Moreover this result holds even if each resident has size 1 or 2, each hospital has upper quota at most 2, and each preference list is of length at most 3. By replacing residents by courses and hospitals by students, we can then obtain a simple reduction to {\sc ca} where each student $s_i$ satisfies $T(s_i)\leq 2$ and each course $c_j$ satisfies $O(c_j)\in \{1,2\}$ and $q(c_j)=1$. Using this transformation, McDermid and Manlove's stability definition in the {\sc hrs} instance corresponds to pair stability in the transformed {\sc ca} instance. The following result is therefore immediate.

\begin{theorem}[\cite{mcdermid_keeping_2010}]
{\sc ca-p-find} is NP-hard, even if there are no downward-feasible constraints, each student can take at most two credits, each course has one or two credits, each course has upper quota $1$, and each preference list is of length at most $3$.
\label{thm:pairStabilityNPCompleteness}
\end{theorem}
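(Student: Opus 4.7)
The plan is to formalise the reduction from {\sc hrs} to {\sc ca-p-find} sketched in the paragraph preceding the theorem, using McDermid and Manlove's hardness result for {\sc hrs} \cite{mcdermid_keeping_2010}. Given an {\sc hrs} instance $I$ in which residents have size $1$ or $2$, each hospital has upper quota at most $2$, and each preference list has length at most $3$, I would build a {\sc ca} instance $J$ by creating, for every resident $r$, a course $c_r$ with $O(c_r)$ equal to the size of $r$ and $q^+(c_r)=1$; and, for every hospital $h$, a student $s_h$ with $T(s_h)$ equal to the capacity of $h$. Preferences and acceptability are copied across unchanged, and no downward-feasible constraints are introduced. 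The parameter bounds claimed in the theorem then transfer immediately from those of $I$.

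The step I would verify most carefully is the bijection between stable matchings on the two sides. Matchings in $I$ and $J$ correspond via $M' = \{(s_h, c_r) : r \in M(h)\}$; $M'$ respects the upper quotas in $J$ (because $q^+(c_r)=1$ and each resident is matched to at most one hospital in $I$) and the credit limits (because the hospital capacities in $I$ are preserved as $T(s_h)$ in $J$). I would then show that $(r,h)$ blocks $M$ in McDermid and Manlove's sense if and only if $(s_h, c_r)$ is a pair-stability blocking pair for $M'$ in $J$. The resident side of the {\sc hrs} blocking condition, namely that $r$ is unassigned or prefers $h$ to her current hospital, matches ``$c_r$ is undersubscribed or prefers $s_h$ to its current occupant'' precisely because $q^+(c_r)=1$. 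The hospital side, namely that $h$ can absorb $r$ after evicting some subset of residents whom $h$ prefers $r$ to, without exceeding capacity, matches condition (2) of the definition of a pair-stability blocking pair, with the evicted residents playing the role of the subset $D_i \subseteq C_M(s_h)$ of courses that $s_h$ prefers $c_r$ to and satisfying $O(C_M(s_h)) - O(D_i) + O(c_r) \leq T(s_h)$.

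The only potential obstacle is to confirm that the two stability notions genuinely coincide in both directions. The restriction $q^+(c_r)=1$ is essential, since it collapses the ``undersubscribed or prefers $s_i$ to some $s_k$'' clause into the simpler ``unassigned or prefers $h$ to current hospital'' clause on the {\sc hrs} side; the absence of downward-feasible constraints in $J$ makes condition (3) in the pair-stability definition vacuous. Once the correspondence is in place, NP-hardness of {\sc ca-p-find} under the stated parameter restrictions is an immediate consequence of the {\sc hrs} hardness result, completing the proof.
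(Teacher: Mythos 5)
Your proposal is correct and follows essentially the same route as the paper, which treats the theorem as immediate from McDermid and Manlove's {\sc hrs} hardness result via exactly this relabelling (residents become unit-quota courses whose credits equal their sizes, hospitals become students whose credit limits equal the hospital capacities). Your explicit verification that the two blocking-pair conditions coincide --- in particular that the evicted residents play the role of the dropped set $D_i$, with no size constraint on $D_i$, which is why the correspondence is with pair stability rather than pair-size stability --- is precisely the detail the paper leaves implicit.
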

%\vspace{-5mm}

Similarly, determining the existence of a first-coalition-stable matching is NP-hard, and we show that by transforming from {\sc hrs}, though in this case a more complex reduction is required.

\begin{theorem}
{\sc ca-fc-find} is NP-hard, even if there are no downward-feasible constraints, each student can take at most two credits, each course has one or two credits, each course has upper quota $1$, and each preference list is of length at most $4$.
\label{thm:firstCoalitionNPCompleteness}
\end{theorem}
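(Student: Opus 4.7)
The plan is to reduce from {\sc hrs}---specifically, the instances McDermid and Manlove use for NP-hardness (residents of size at most $2$, hospitals of upper quota at most $2$, preference lists of length at most $3$)---adapting the direct transformation that established Theorem~\ref{thm:pairStabilityNPCompleteness}. In that direct reduction, one maps each resident $r$ to a course $c_r$ (with $O(c_r)$ equal to $r$'s size), each hospital $h$ to a student $s_h$ (with $T(s_h)$ equal to $h$'s upper quota), and sets each course's upper quota to $1$. This transformation preserves the correspondence between {\sc hrs}-blocking pairs and {\sc ca-p}-blocking pairs, yielding NP-hardness of {\sc ca-p-find}. However, by Proposition~\ref{prop:StabilityDefinitions}, pair stability strictly implies first-coalition stability, so the CA instance produced by the direct reduction may admit a first-coalition-stable matching even when the underlying {\sc hrs} instance admits no stable matching. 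Hence a more careful construction is required.

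The source of this mismatch can be pinpointed: if a student $s_h$ with $T(s_h)=2$ is matched to a single size-$2$ course $c_a$, and prefers some size-$1$ course $c_j$ to $c_a$, then $(s_h,c_j)$ is a pair-blocking pair (with $D=\{c_a\}$, allowing $s_h$ to lose a credit), but $(s_h,\{c_j\})$ is not a first-blocking coalition since the constraint $O(B)\ge O(D)$ fails (i.e., $1<2$). My plan is to augment the reduction so that every such credit-losing pair-blocking pair is matched by a size-$2$ first-blocking coalition. For each size-$1$ resident $r$ in the {\sc hrs} instance I introduce a \emph{twin} size-$1$ course $c_{r'}$ whose preference list mirrors that of $c_r$, and insert $c_{r'}$ immediately after $c_r$ in each student's preference list. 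With this augmentation, any would-be pair-blocking pair $(s_h,c_r)$ that forces $s_h$ to drop a size-$2$ course $c_a$ induces a first-blocking coalition $(s_h,\{c_r,c_{r'}\})$: both $c_r$ and $c_{r'}$ want $s_h$, $O(\{c_r,c_{r'}\})=2=O(\{c_a\})$, and $s_h$'s most preferred course in $\{c_r,c_{r'}\}$ is $c_r$, which $s_h$ prefers to $c_a$.

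The main obstacle will be establishing a bijection between first-coalition-stable matchings of the augmented {\sc ca} instance and stable matchings of the original {\sc hrs} instance. In one direction, given an {\sc hrs}-stable matching, one extends it by matching twins in a canonical way (for instance, leaving each $c_{r'}$ unmatched whenever $c_r$ is unmatched to the same hospital) and verifies that no first-blocking coalition survives, exploiting the fact that twins and originals are interchangeable in the preferences of courses. In the other direction, one must show that every first-coalition-stable matching of the augmented instance can be \emph{projected}, by discarding the twins, to a stable matching of the {\sc hrs} instance. The subtle part is ruling out first-blocking coalitions that genuinely rely on having both $c_r$ and $c_{r'}$ available (without a corresponding {\sc hrs} blocking pair), which requires a case analysis on whether each of $c_r,c_{r'}$ is matched or unmatched and a symmetry argument swapping their roles. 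Throughout, the parameter bounds are preserved: courses have $1$ or $2$ credits, students have $T\le 2$, each course has upper quota $1$, and each preference list has length at most $4$ (the original length of at most $3$ plus one twin insertion).
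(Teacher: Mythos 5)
Your diagnosis of the obstacle is exactly right, and your high-level strategy (pad a would-be credit-losing blocking pair into a $2$-credit first-blocking coalition by supplying an auxiliary $1$-credit course) is the same idea the paper uses. But your specific gadget does not work, and the difficulty you defer to ``a case analysis and a symmetry argument'' is precisely where it breaks. Your twin $c_{r'}$ inherits $c_r$'s full preference list, so it is acceptable to \emph{every} student corresponding to a hospital on $r$'s list. In the forward direction you must decide what to do with $c_{r'}$ when $r$ is matched in the {\sc hrs}-stable matching: if you leave $c_{r'}$ unmatched it is undersubscribed, and any student $s_g$ with a spare credit who finds it acceptable forms a first-blocking coalition $(s_g,\{c_{r'}\})$ with $D_g=\emptyset$ --- condition 1 is satisfied by undersubscription alone, with no reference to $c_{r'}$'s preferences --- even though $(r,h_g)$ need not block in {\sc hrs} (e.g.\ because $r$ prefers its assigned hospital to $h_g$). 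If instead you match $c_{r'}$ alongside $c_r$ to the same student $s_h$, you consume $s_h$'s second credit, so a genuine {\sc hrs} blocking pair $(r_2,h)$ exploiting $h$'s free slot no longer yields a first-blocking coalition in the {\sc ca} instance ($s_h$ would have to drop $c_{r'}$, which sits immediately after $c_r$ and may be preferred to $c_{r_2}$), killing the reverse direction. There is no third canonical choice. A secondary but concrete error: since a student's list of length up to $3$ may contain up to three $1$-credit courses, inserting a twin for each gives lists of length up to $6$, not $4$.

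The paper's construction avoids all of this by attaching the auxiliary course to the \emph{student} rather than to a course: for each student $s_i$ it creates one private dummy $c_i'$ with one credit, whose preference list contains $s_i$ only, appended at the very end of $s_i$'s list. Privacy means no other student can ever block with $c_i'$; the last position means $s_i$ never prefers $c_i'$ to anything she might drop, so an unmatched dummy creates no spurious coalition; and whenever $s_i$ holds a $2$-credit course, $c_i'$ is necessarily unmatched and hence available to pad the coalition $(s_i,\{c_j,c_i'\})$ that mirrors the {\sc hrs} blocking pair. One dummy per student also gives the claimed length bound of $4$. If you replace your shared twins with such private, bottom-ranked dummies, your argument goes through essentially as the paper's does.
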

\begin{proof}
We prove this result by transforming from the problem of determining the existence of a stable matching in the {\sc hrs} setting, which was shown to be NP-complete by McDermid and Manlove \cite{mcdermid_keeping_2010}. In the {\sc hrs} instance $I$ that we are transforming from, each resident has size $1$ or $2$, each hospital has upper quota $2$, and the length of every preference list is at most $3$.

The transformation proceeds as follows. Every resident $r_j$ in $I$ becomes a course $c_j$ in the {\sc ca} instance $J$, with upper quota $1$, and number of credits $1$ or $2$, depending on the size of $r_j$. Every hospital $h_i$ becomes a student $s_i$, with $T(s_i) = 2$. We use this bijection to transform every preference list. For example, if resident $r_j$ has preference list $h_i \succ_{r_j} h_k \succ_{r_j} h_l$, now course $c_j$ has preference list $s_i \succ_{c_j} s_k \succ_{c_j} s_l$. Additionally, for every student $s_i$ we create a dummy course $c_i'$, with one credit, and append this course to the end of $s_i$'s preference list. The preference list of $c_i'$ contains $s_i$ only.

Assume that the {\sc hrs} instance $I$ has a stable matching $M$. Then we build a first-coalition-stable matching $M'$ in $J$ as follows. We take $M$ and swap residents and hospitals for courses and students, respectively. Then, for every student $s_i$ such that $O(C_{M'}(s_i)) < 2$, we add the pair $(s_i, c_i')$ to $M'$. We claim that $M'$ is first-coalition-stable. Assume $M'$ is not, which implies the existence of a first-blocking coalition of $M'$ in $J$ involving one student $s_i$. Let us go through each possibility:

\begin{itemize}
    \item Student $s_i$ has only one credit in $M'$ (which must be from course $c_i'$, according to the construction of $M'$), and forms a first-blocking coalition with a course $c_j$, with one or two credits. In this case, hospital $h_i$ would have no residents in $M$, and resident $r_j$ is unmatched in $M$ or prefers $h_i$ to $M(r_j)$, %would prefer $h_i$ over her other partner, 
    so $(r_j, h_i)$ blocks $M$ in $I$, a contradiction.
    \item Student $s_i$ has two credits in $M'$, one of which comes from $c_i'$ and the other from some $c_k$, and she forms a first-blocking coalition with course $c_j$ (and possibly with one other course also). As $c_i'$ is the least-preferred course on $s_i$'s preference list, it follows that $s_i$ prefers $c_j$ over $c_i'$.  Suppose firstly that $c_j$ has one credit. As $C_{M'}(s_i)=\{c_k,c_i'\}$, $(r_k,h_i)\in M$ where $r_k$ has size $1$. Moreover $h_i$ has room for resident $r_j$, also of size $1$. Resident $r_j$ is unmatched in $M$ or prefers $h_i$ over $M(r_j)$, and hence $(r_j,h_i)$ forms a blocking pair of $M$ in $I$, a contradiction. If $c_j$ has two credits, then $s_i$ prefers $c_j$ over $c_k$, which implies that in $I$ hospital $h_i$ prefers resident $r_j$ over its assigned resident $r_k$, and $r_j$ is unmatched in $M$ or prefers $h_i$ to $M(r_j)$, so $(r_j, h_i)$ blocks $M$ in $I$, a contradiction.
    \item Student $s_i$ has two credits in $M'$, which come from two courses with one credit, namely $c_p$ and $c_q$ (neither of which is equal to $c_i'$), where $c_p\succ_{s_i} c_q$, or one course with two credits, namely $c_p$. Suppose that $s_i$ forms a first-blocking coalition with either one course $c_j$, or with two courses, $c_j$ and $c_k$, such that $c_j \succ_{s_i} c_k$. Then, in $I$, hospital $h_i$ prefers resident $r_j$ over either resident $r_p$ or residents $r_p$ and $r_q$, and $r_j$ is unmatched in $M$ or prefers $h_i$ to $M(r_j)$, so $(r_j, h_i)$ blocks $M$ in $I$, a contradiction.
\end{itemize}

Similarly, assume that the {\sc ca} instance $J$ has a first-coalition-stable matching $M'$. Then we take $M'$, remove any pair containing a course of the form $c_i'$, and swap courses and students for residents and hospitals to obtain a matching $M$ in the {\sc hrs} instance $I$. This matching $M$ is also stable in $I$: if $(r_j, h_i)$ were to block $M$ in $I$, then either $(s_i, c_j)$ or $(s_i, \{c_j, c_i'\})$ would form a first-blocking coalition of $M$ in $J$, a contradiction.

Therefore, there exists a first-coalition-stable matching in $J$ if and only if there exists a stable matching in $I$. Thus, as {\sc hrs} is NP-complete, {\sc ca-fc-find} is NP-hard.
\end{proof}

\subsection{Master lists}
\label{subsec:findingMasterLists}
An important restriction of {\sc ca} arises when students are ordered in a master list, for example by Grade Point Average (GPA). Let {\sc ca-x-ml-find} be the problem of finding a stable matching in a {\sc ca-ml} instance $I$, for a given definition of stability {\sc x}, or determining that none exists. When students are ranked according to a master list, we can use a modification of the Serial Dictatorship mechanism to find a stable matching, for any definition of stability. %Serial Dictatorship is an 
The algorithm starts with an empty matching $M$ and, on round $i$, considers the $i$-th student in the master list, say $s_i$. Then, it considers every course on $s_i$'s preference list, from most to least preferred. For each such $c_j \in P(s_i)$, it checks whether $c_j$ has some capacity left, whether adding $(s_i, c_j)$ to $M$ does not violate the credit limit of $s_i$, and, if there are downward-feasible constraints, whether adding $(s_i, c_j)$ to $M$ does not violate any of them. If it is possible to add $(s_i, c_j)$ to $M$, it does so, and proceeds to check the next course in $P(s_i)$. This algorithm is described in pseudocode form in Algorithm \ref{alg:findMaster}. This is clearly an $O(L)$ time algorithm, where $L$ is the total number of acceptable pairs. We now show that Algorithm \ref{alg:findMaster} always finds a pair-stable (feasible) matching.

\begin{algorithm}[h]
\small
\caption{Algorithm to find an {\sc x}-stable (feasible) matching in an instance of {\sc ca-x-ml-find}}
\label{alg:findMaster}
\begin{algorithmic}[1]
%\REQUIRE{\sc ca} instance $I$
%\ENSURE return a pair-size stable matching $M$ in $I$
\STATE $M := \emptyset$;
\FOR{$s_i$ in $\succ_{MLS}$ in decreasing order of preference}
   \STATE $p := 1$; ~~~~ \{$p$ points to first course in $s_i$'s list\}
   \WHILE{$O(C_M(s_i))<T(s_i)$ {\bf and} $p\leq |P(s_i)|$}
        \STATE $c_j := $ course at position $p$ of $\succ_{s_i}$;
        \IF{$O(C_M(s_i)) + O(c_j) \leq T(s_i)$ {\bf and} $|S_M(c_j)|<q^+(c_j)$ {\bf and} $C_M(s_i)\cup \{c_j\}$ is feasible}
            \STATE $M := M \cup \{(s_i,c_j)\}$;
       \ENDIF
       \STATE $p${\tt ++};
    \ENDWHILE;
\ENDFOR
\STATE{\bf return} $M$;\\
\end{algorithmic}
\end{algorithm}

\begin{proposition}
In any {\sc ca-ml} instance $I$, a pair-stable (feasible) matching always exists, and one can use Algorithm \ref{alg:findMaster} to find such a matching in $O(L)$ time, where $L$ is the total number of acceptable pairs. 
%Serial Dictatorship to find such a matching in $O(nm)$ time, where $n = |S|$ and $m = |C|$.
\label{prop:findingMasterList}
\end{proposition}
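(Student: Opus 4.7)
The plan is to first confirm that Algorithm \ref{alg:findMaster} outputs a feasible matching: this is immediate because every addition to $M$ is guarded by explicit upper-quota, credit-limit and downward-feasibility checks. Since Proposition \ref{prop:StabilityDefinitions} shows pair stability implies each of the other three notions, it suffices to verify that the returned $M$ is pair-stable, and the remaining stability properties follow for free. The $O(L)$ running time is immediate because each acceptable pair is examined at most once during the single pass through the master list.

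For pair-stability I would argue by contradiction. Suppose $(s_i, c_j)$ is a blocking pair of $M$, witnessed by a subset $D_i \subseteq C_M(s_i)$ of courses that $s_i$ prefers $c_j$ to, satisfying $O(C_M(s_i)) - O(D_i) + O(c_j) \leq T(s_i)$ and, if downward-feasible constraints are present, $(C_M(s_i) \setminus D_i) \cup \{c_j\} \in F_i$. The structural property I would exploit is that, because students are processed in master-list order and a student never drops a previously assigned course, $C_M(s_i)$ coincides exactly with the set of courses added to $s_i$'s assignment during her own round, and every student occupying a seat in any course at that moment ranks strictly above $s_i$ in $\succ_{MLS}$.

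I would then show that, at the instant when $s_i$ examines $c_j$ in her inner loop, $c_j$ is undersubscribed. If $c_j$ is undersubscribed in the final $M$ this is trivial, since $|S_M(c_j)|$ is monotone non-decreasing during the execution. Otherwise $c_j$ prefers $s_i$ to some $s_k \in S_M(c_j)$, and by consistency of $\succ_{c_j}$ with the master list we have $s_i \succ_{MLS} s_k$, so $s_k$ is still unprocessed when $s_i$ considers $c_j$; consequently $c_j$ then holds fewer than $|S_M(c_j)| \leq q^+(c_j)$ students.

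The main obstacle, and the step I would treat most carefully, is to check that the credit-limit and feasibility guards also succeed when $s_i$ considers $c_j$. Define $A_i = \{c \in C_M(s_i) : c \succ_{s_i} c_j\}$; because $s_i$ sweeps her preference list in order and never gives up an assigned course, $A_i$ is precisely $s_i$'s partial assignment at the moment $c_j$ is tested. Since every course in $D_i$ is ranked below $c_j$ by $s_i$, $D_i$ is disjoint from $A_i$, whence $A_i \subseteq C_M(s_i) \setminus D_i$ and therefore
\[
O(A_i) + O(c_j) \leq O(C_M(s_i)) - O(D_i) + O(c_j) \leq T(s_i),
\]
clearing the credit check. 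Similarly $A_i \cup \{c_j\} \subseteq (C_M(s_i) \setminus D_i) \cup \{c_j\} \in F_i$, and downward-feasibility of $F_i$ yields $A_i \cup \{c_j\} \in F_i$, clearing the feasibility check. All three guards would then pass, so the algorithm would have added $(s_i, c_j)$ to $M$, contradicting $(s_i, c_j) \notin M$ and completing the proof.
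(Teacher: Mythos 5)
Your proof is correct and follows essentially the same route as the paper's: a contradiction argument exploiting the fact that students are processed in master-list order and each sweeps her preference list greedily, so the blocking course $c_j$ must have had a free place when $s_i$ reached it. In fact your handling of the credit-limit and feasibility guards (via the set $A_i$ of assigned courses preferred to $c_j$, its disjointness from $D_i$, and downward-feasibility) is more explicit than the paper's, which only sketches why $s_i$ ``would prefer her assigned courses over $c_j$''.
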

\begin{proof}
Aiming for a contradiction, assume that after executing Algorithm \ref{alg:findMaster}, student $s_i$ and course $c_j$ form a blocking pair. Then this means that either $c_j$ has some capacity left for $s_i$ or $c_j$ prefers $s_i$ over some $s_k$ that is assigned to $c_j$. If the second case holds, $s_i\succ_{MLS} s_k$, so $s_i$ got to choose courses before $s_k$, meaning that in the round of $s_i$, $s_k$ was not assigned to $c_j$ and $c_j$ had an open place that $s_i$ could have taken. If $s_i$ did not choose $c_j$ and instead chose other courses, then $s_i$ would prefer her assigned courses over $c_j$, which contradicts the assumption that $s_i$ and $c_j$ form a blocking pair. The reasoning is similar if the first case holds instead. Therefore, the algorithm above always produces a pair-stable matching.
\end{proof}

By Proposition \ref{prop:StabilityDefinitions}, a pair-stable matching is first-coalition-stable, coalition-stable, and pair-size-stable. Therefore, the matching found by Algorithm \ref{alg:findMaster} is first-coalition-stable, coalition-stable, and pair-size-stable, and thus {\sc ca-x-ml-find} is solvable in polynomial time for any definition of stability x.

Moreover, the pair-stable (feasible) matching found by Algorithm \ref{alg:findMaster} is the unique pair-stable (feasible) matching. Assume otherwise, so that there are two different pair-stable (feasible) matchings, namely $M$ and $M'$, and let $s_i$ be the highest-ranked student such that $C_M(s_i) \neq C_{M'}(s_i)$. That is, she is assigned a different set of courses in each matching. Let $c_j$ be the most-preferred course of $s_i$ in $C_M(s_i) \oplus C_{M'}(s_i)$, the symmetric difference of $C_M(s_i)$ and $C_{M'}(s_i)$. Without loss of generality, assume that $c_j \in C_M(s_i)$. Then $(s_i, c_j)$ blocks $M'$, because $s_i$ prefers to drop all courses in $C_{M'}(s_i) \setminus C_M(s_i)$ and be allocated to $c_j$ instead; %\todoD{I removed the part about feasibility, as we don't know for sure that there is a pair-stable feasible matching.}
this is the case because $O(C_M(s_i))\leq T(s_i)$ and $c_j\in C_M(s_i)\setminus C_{M'}(s_i)$ implies that $O(C_M(s_i)\cap C_{M'}(s_i))+O(c_j)\leq T(s_i)$; also the set $(C_M(s_i) \cap C_{M'}(s_i)) \cup \{(s_i, c_j)\}$ is feasible, as it is a subset of the feasible set $C_M(s_i)$; finally, because $s_i$ is the highest-ranked student that has different courses in $M$ and $M'$, $c_j$ prefers $s_i$ over the students that it has in $M'$ but not in $M$. Hence, we obtain a contradiction to the pair stability of $M'$, which implies that there is a unique pair-stable matching.

\begin{observation}
Every instance of {\sc ca-ml} has a unique pair-stable matching.
\label{obs:masterListUniquePairStable}
\end{observation}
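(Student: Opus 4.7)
The plan is to prove existence using Proposition \ref{prop:findingMasterList} (Algorithm \ref{alg:findMaster} produces a pair-stable matching), so the real work is uniqueness. I would argue by contradiction: assume two distinct pair-stable matchings $M$ and $M'$ exist. Since their assignments differ, there must be at least one student whose set of allocated courses differs between the two; let $s_i$ be the student, highest-ranked in the master list $\succ_{MLS}$, for which $C_M(s_i) \neq C_{M'}(s_i)$. By choice of $s_i$, every student ranked above $s_i$ has identical assignments in $M$ and $M'$.

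Next, I would look at the symmetric difference $C_M(s_i) \oplus C_{M'}(s_i)$, which is non-empty, and let $c_j$ be the course $s_i$ most prefers within this set. Without loss of generality, assume $c_j \in C_M(s_i) \setminus C_{M'}(s_i)$. The plan is then to show that $(s_i, c_j)$ is a blocking pair of $M'$, which contradicts $M'$ being pair-stable. There are three conditions to verify: (i) $c_j$ either has room for $s_i$ in $M'$ or prefers $s_i$ over some $s_k \in S_{M'}(c_j)$; (ii) $s_i$ has enough credits for $c_j$ in $M'$ after dropping a subset $D_i \subseteq C_{M'}(s_i)$ of courses she likes less than $c_j$; (iii) if downward-feasible constraints are present, the resulting bundle is feasible.

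For (ii) and (iii), I would take $D_i := C_{M'}(s_i) \setminus C_M(s_i)$, which consists entirely of courses $s_i$ likes strictly less than $c_j$ (by the choice of $c_j$ as her most-preferred course in the symmetric difference). The new bundle is $(C_{M'}(s_i) \setminus D_i) \cup \{c_j\} = (C_M(s_i) \cap C_{M'}(s_i)) \cup \{c_j\} \subseteq C_M(s_i)$; since $C_M(s_i)$ satisfies the credit bound and (by downward-feasibility) is feasible, so is this subset. For (i), consider the students in $S_{M'}(c_j)$: any such student $s_k$ with $s_k \succ_{MLS} s_i$ would, by the minimality choice of $s_i$, also satisfy $s_k \in S_M(c_j)$, so the students in $S_{M'}(c_j) \setminus S_M(c_j)$ are all ranked below $s_i$ in the master list. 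Since $c_j \in C_M(s_i) \setminus C_{M'}(s_i)$, there must either be room at $c_j$ in $M'$ (because removing $s_i$ from $c_j$'s $M$-assignment opens a slot that might not be refilled in $M'$) or some student in $S_{M'}(c_j) \setminus S_M(c_j)$ whom $c_j$ ranks below $s_i$; a short case analysis on $|S_{M'}(c_j)|$ versus $q^+(c_j)$ and on $|S_M(c_j) \cap S_{M'}(c_j)|$ yields one of these alternatives, giving a blocking pair and the desired contradiction.

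The main obstacle I anticipate is condition (i): it is not immediate that $c_j$ is either undersubscribed in $M'$ or contains a student worse than $s_i$, because the student sets $S_M(c_j)$ and $S_{M'}(c_j)$ can differ in complicated ways. The resolution relies on the fact that all students strictly above $s_i$ in $\succ_{MLS}$ are assigned identically in $M$ and $M'$, so they occupy the same slots of $c_j$ in both matchings; hence the remaining slots of $c_j$ in $M'$ are either empty or held by students weaker than $s_i$ in the master list, and at least one such slot is "available" to be displaced because $s_i$ herself occupies $c_j$ in $M$.
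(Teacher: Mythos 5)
Your proposal is correct and follows essentially the same argument as the paper: take the highest-ranked student $s_i$ in the master list with $C_M(s_i)\neq C_{M'}(s_i)$, let $c_j$ be her most-preferred course in the symmetric difference, drop $D_i = C_{M'}(s_i)\setminus C_M(s_i)$, and observe that the resulting bundle is a subset of $C_M(s_i)$ (so the credit and feasibility conditions hold) while every student in $S_{M'}(c_j)\setminus S_M(c_j)$ lies below $s_i$ in the master list, so $(s_i,c_j)$ blocks $M'$. Your treatment of condition (i) is in fact slightly more explicit than the paper's, which simply asserts that $c_j$ prefers $s_i$ to the students it has in $M'$ but not in $M$; your dichotomy (that set is either non-empty, giving a displaceable student, or empty, forcing $S_{M'}(c_j)\subseteq S_M(c_j)\setminus\{s_i\}$ and hence undersubscription) closes the argument correctly.
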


Under other definitions of stability, an instance of {\sc ca-ml} might have more than one stable matching. For example, consider an instance with two courses, $c_1$ and $c_2$, such that $O(c_1) = 1$ and $O(c_2) = 2$, and one student, $s_1$, with $c_1 \succ_{s_1} c_2$ and $T(s_1) = 2$. Then both $\{(s_1, c_1)\}$ and $\{(s_1, c_2)\}$ are first-coalition-stable matchings, which implies that they are coalition-stable and pair-size-stable, too. $\{(s_1, c_2)\}$ is not a pair-stable matching though, because $s_1$ prefers $c_1$ and under pair stability she can end up with fewer credits when participating in a blocking pair.

\section{Finding a maximum size stable matching}
\label{sec:maximum}
\subsection{No lower quotas}
\label{subsec:maximumNoLowerQuotas}
As we saw in Section \ref{subsec:findingMasterLists}, every instance of {\sc ca-ml} has at least one stable matching, for any definition of stability. However, there may be more than one possible stable matching, and indeed they can have different sizes, in which case a natural goal is to find a stable matching that is as large as possible. We refer to this problem as {\sc ca-x-ml-max}, where {\sc x} is a definition of stability. {\sc ca-p-ml-max} is trivially solvable in polynomial time, by Observation \ref{obs:masterListUniquePairStable}.

However, under the other three definitions of stability, we will show in this section that {\sc ca-x-ml-max} is NP-hard. We will begin by showing that, under coalition stability, {\sc ca-c-ml-max} is NP-hard even if courses and students are ordered in two master lists, every course has one or two credits and upper quota $1$, and every student can take up to two credits.

In order to prove this result, we will transform from the {\sc stable marriage with ties and incomplete preferences} problem ({\sc smti}) \cite{manlove_2002_hard}. {\sc com-smti}, the problem of determining whether a complete weakly stable matching in an {\sc smti} instance exists, is NP-complete even if there are master lists on both sides, the ties occur in one master list only, and are of length $2$ \citep[][Thm.\ 3.2]{irving_stable_2008}.

Theorem \ref{thm:coalitionStabilityMaxSize}, which shows that {\sc ca-c-ml-max} is NP-hard, involves a key idea that is illustrated in Figure \ref{fig:exampleReductionCoalitionStability}. In this instance, $c_2$ is a $2$-credit course and $c_1$ and $c_3$ are $1$-credit courses. Thus, under coalition stability, $s_1$ is indifferent between being matched with $c_1$ and $c_3$, and with $c_2$. Hence, under coalition stability, we can speak of a tie between $c_2$ on the one hand, and $c_1$ and $c_3$ on the other, even though preference lists are all strict. This idea gives the connection between {\sc com-smti} and {\sc ca-c-ml-max}.

\begin{figure}[h]
    \centering
    \begin{align*}
        s_1&: c_1 \succ_{s_1} c_2 \succ_{s_1} c_3
    \end{align*}
    \caption{Preference list of $s_1$, with $T(s_1) = 2$. $O(c_1) = O(c_3) = 1$, and $O(c_2) = 2$.}
    \label{fig:exampleReductionCoalitionStability}
\end{figure}

\begin{theorem}
It is NP-hard to determine whether a {\sc ca-ml} instance has a student-complete (or course-complete) coalition-stable matching, even if both students and courses are ordered according to master lists, every student takes up to two credits, and every course has one or two credits and upper quota $1$. Therefore, {\sc ca-c-ml-max} is NP-hard, even if there are no downward-feasible constraints.
\label{thm:coalitionStabilityMaxSize}
\end{theorem}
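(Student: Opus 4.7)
My plan is to transform from the NP-complete restriction of {\sc com-smti} in which both sides have master lists, ties appear in exactly one master list, and every tie has length $2$ (Irving, Manlove and O'Malley \cite{irving_stable_2008}, Theorem 3.2). Without loss of generality I will assume the ties are on the women's side and that the women's master list partitions into tied pairs $(w_{2k-1},w_{2k})$. Given such an instance $I$, I will build a {\sc ca-ml} instance $J$ patterned on Figure~\ref{fig:exampleReductionCoalitionStability}: each man $m_i$ becomes a student $s_i$ with $T(s_i)=2$ (student master list inherited from the men's master list), each woman $w_j$ becomes a $1$-credit course $c_j$ of upper quota $1$, and each tied pair contributes an auxiliary $2$-credit course $c_k^\ast$ of upper quota $1$, placed between $c_{2k-1}$ and $c_{2k}$ in the course master list. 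Student $s_i$'s preference list follows $m_i$'s by inserting, for each tied pair with both members acceptable to $m_i$, the triple $c_{2k-1}\succ c_k^\ast\succ c_{2k}$, and $c_k^\ast$ is acceptable to $s_i$ precisely when both tied women are acceptable to $m_i$ in $I$. To pad single matches to the full two credits I will append a private dummy $1$-credit course $d_i$ to each $s_i$'s list (upper quota $1$, acceptable only to $s_i$), and to preclude spurious blocking coalitions through an otherwise-unmatched auxiliary I will add ``lock'' students $t_k$ (with preference list $\{c_k^\ast\}$ and $T(t_k)=2$) at the top of the student master list. The pivotal property, illustrated exactly by Figure~\ref{fig:exampleReductionCoalitionStability}, is that under coalition stability no student can form a blocking coalition that substitutes $\{c_{2k-1},c_{2k}\}$ for $\{c_k^\ast\}$ or vice versa, because in either direction the required drop-set $D_i$ either contains a strictly preferred course or violates the credit constraint $O(C_M(s_i))-O(D_i)+O(B)\leq T(s_i)$ combined with $O(B)\geq O(D_i)$.

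For the forward direction, from a complete weakly stable $M_I$ in $I$ I will define $M_J$ as follows. For each tied pair $(w_{2k-1},w_{2k})$ both of whose women are matched in $M_I$ to distinct men $m,m'$, assign $s_m\mapsto\{c_{2k-1},d_m\}$ and $s_{m'}\mapsto\{c_{2k},d_{m'}\}$; if only $w_{2k-1}$ is matched in $M_I$, say to $m$, assign $s_m\mapsto\{c_k^\ast\}$ and leave $c_{2k-1},c_{2k}$ empty; if neither woman is matched, leave both $1$-credit courses empty. Whenever an auxiliary $c_k^\ast$ is not consumed by a real student, pair it with $t_k$. Under the assumption $|M|=|W|$ (arranged by padding if needed), $M_J$ is student-complete. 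Coalition-stability follows by case analysis: every hypothetical blocking coalition either projects to an SMTI blocking pair in $M_I$ (contradicting its stability) or is ruled out by the Figure~\ref{fig:exampleReductionCoalitionStability} mechanism. The dummies $d_i$ and lock students $t_k$ cannot themselves enable blocks, because of restricted acceptability and their top-priority master list position respectively.

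For the backward direction, from a student-complete coalition-stable $M_J$ I will construct $M_I$ by mapping $s_i\mapsto w_j$ when $s_i$ receives $\{c_j,d_i\}$, $s_i\mapsto w_{2k-1}$ when $s_i$ receives $\{c_k^\ast\}$ (a canonical choice within the tie), and in the residual case where $s_i$ receives two $1$-credit courses $\{c_j,c_{j'}\}$ the key claim is that $\{c_j,c_{j'}\}$ must form a tied pair $\{c_{2k-1},c_{2k}\}$, whereupon $s_i$ is canonically mapped to one woman of the pair and any partner student $s_{i'}$ exhausting the same pair to the other. Weak stability of the resulting $M_I$ will then follow by lifting any would-be SMTI blocking pair to a {\sc ca} blocking coalition, contradicting the coalition-stability of $M_J$. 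The course-complete case is equivalent to the student-complete case by a counting argument when $|M|=|W|$ (total student demand equals total course supply), and NP-hardness of {\sc ca-c-ml-max} then follows because a maximum-size coalition-stable matching attains the full possible size exactly when a student-complete coalition-stable matching exists. The main obstacle I expect is proving that, in the backward direction, $s_i$ cannot receive two $1$-credit courses from \emph{different} tied pairs in a coalition-stable $M_J$: this requires a careful case analysis exploiting the interleaved master list together with the credit-constraint algebra, and also verifying that the lock students and dummy courses do not interact in unintended ways with the tie-simulation gadget.
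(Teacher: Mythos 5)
Your high-level strategy---reduce from the master-list restriction of {\sc com-smti} with ties of length $2$ and simulate each tie by the credit gadget of Figure \ref{fig:exampleReductionCoalitionStability}---is the right one, but your concrete gadget breaks both directions of the reduction. The most serious problem is the private dummy course $d_i$ appended to the bottom of each student's list. A blocking coalition only requires the student to prefer the courses in $B$ to the courses in the \emph{dropped} set $D_i$, not to everything she currently holds. So a student matched to $\{c_j,d_i\}$ forms a blocking coalition $(s_i,\{c_{j'}\})$ with $D_i=\{d_i\}$ for \emph{any} acceptable $1$-credit course $c_{j'}$ that is undersubscribed or prefers her---including courses she ranks below $c_j$. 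Such a coalition need not project to a blocking pair of the {\sc smti} matching (where the man may be matched to a woman he prefers to $w_{j'}$), so the image of a complete weakly stable matching is in general not coalition-stable and the forward direction fails. A second fatal issue is the lock students $t_k$: placed at the top of the master list with preference list $\{c_k^\ast\}$, they outrank every real student at $c_k^\ast$, so in \emph{every} coalition-stable matching $c_k^\ast$ is matched to $t_k$ and condition 1 of a blocking coalition never holds for $c_k^\ast$. The auxiliary course is therefore inert, the tie is never actually simulated, and the instance effectively degenerates to one with strict preferences. Third, because you make \emph{both} tied women into $1$-credit courses, a single student with $T(s_i)=2$ can hold $\{c_{2k-1},c_{2k}\}$, i.e.\ both women of the tie at once; your proposed ``canonical mapping'' then leaves one of the two women with no partner, so the backward direction does not produce a complete matching in $I$.

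The paper's construction avoids all three traps by a different assignment of credits: one woman of each tied pair becomes a $1$-credit course $c_k$, the \emph{other} becomes a $2$-credit course $c_l$, and a \emph{copy} $c_k'$ of the $1$-credit course (same preference list, representing the same woman $w_k$) is added, the tie appearing in a student's list as $c_k\succ c_l\succ c_k'$. There are no dummies and no lock students: Lemma \ref{lem:ThmCoalitionStabilityCourses} forces every student in a coalition-stable matching to hold either a single $2$-credit course or the pair $\{c_k,c_k'\}$, both of which decode unambiguously to ``matched to exactly one woman,'' and any blocking coalition must drop courses sitting at the position of the student's current woman in the man's list, so it projects exactly to an {\sc smti} blocking pair. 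To repair your version you would need to eliminate the bottom-of-list dummies (they are inherently incompatible with coalition stability, which permits dropping only the least-wanted course) and redesign the auxiliary so that it genuinely competes for real students; doing so essentially leads you back to the paper's gadget.
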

\begin{proof}
We transform from the restriction of {\sc com-smti} in which both men women are ordered in two master lists, ties only occur in the women's master list (so on men's preference lists only), and each tie is of length $2$. Under these conditions, {\sc com-smti} is NP-complete \cite{irving_stable_2008}. Let $I$ be an instance of this restricted version of {\sc com-smti}. We construct a {\sc ca-ml} instance $J$ from $I$.

Every woman $w_j$ becomes a course $c_j$, with the same preference list, and with upper quota $1$. If $w_j$ is not involved in a tie in the women's master list then $c_j$ is a $2$-credit course. If women $w_k, w_l$ are tied in the women's master list, then one of them (say $w_k$) becomes a $1$-credit course ($c_k$), while the other ($w_l$) becomes a $2$-credit course ($c_l$). For every $1$-credit course $c_k$ we create another $1$-credit course $c_k'$. This course initially has the same preference list as $c_k$, %substituting men for students, 
and it also has upper quota $1$. The master list of courses is the same as the master list of women, except that if women $w_k, w_l$ are tied in the women's master list, then the master list of courses ($\succ_{MLC}$) includes $c_k \succ_{MLC} c_l \succ_{MLC} c_k'$, where $O(c_k) = O(c_k') = 1$ and $O(c_l) = 2$.

Every man $m_i$ becomes a student $s_i$, who can take up to two credits. On the preference list of every course in $J$, replace any man $m_i$ by student $s_i$. The preferences of $s_i$ are inherited from $m_i$, such that every woman $w_j$ not involved in a tie in the women's master list becomes $c_j$ in $s_i$'s preference list. If some $w_j$ is in a tie with $w_k$ in the women's master list, but only $w_j$ is in the preference list of $m_i$, then we replace $w_j$ by $c_j$, if $c_j$ is a $2$-credit course, or by $c_j$ first and then $c_j'$, if $c_j$ is a $1$-credit course. If both $w_j$ and $w_k$ are tied and in $m_i$'s preference list, then we substitute them both by $c_j \succ_{s_i} c_k \succ_{s_i} c_j'$. By Figure \ref{fig:exampleReductionCoalitionStability}, we know that $s_i$ is indifferent between being matched with $\{c_j, c_j'\}$ and with $\{c_k\}$, under coalition stability. From the men's master list we derive a students' master list ($\succ_{MLS}$). We now show that course-complete matchings are student-complete, and vice versa.

\begin{lemma}
Let $M'$ be a matching in $J$. Then $M'$ is course-complete if and only if it is student-complete.
\label{lem:ThmCoalitionStabilityComplete}
\end{lemma}
\begin{proof}\renewcommand{\qedsymbol}{\ensuremath{\blacksquare}}
Let $n$ be the number of men (and women) in the {\sc com-smti} instance $I$. Then, in the {\sc ca-ml} instance $J$, for each woman $w_j$ we have either one $2$-credit course with upper quota $1$, or two $1$-credit courses with upper quota $1$. Thus, a course-complete matching has a total of $2n$ credits. Similarly, for each man we have a student $s_i$ with $T(s_i) = 2$. Thus, a student-complete matching has $2n$ credits. Therefore, a course-complete matching has the same number of credits as a student-complete matching.
\end{proof}

We next show that in a coalition-stable matching $M'$ in $J$, a student $s_i$ is matched with either a $2$-credit course or two $1$-credit courses that come from the same woman in $I$.

\begin{lemma}
Let $M'$ be a coalition-stable matching in $J$. Then any student $s_i$ is matched in $M'$ with either a $2$-credit course $c_j$ or two $1$-credit courses $\{c_k, c_k'\}$, such that both courses come from the same woman $w_k$ in $I$.
\label{lem:ThmCoalitionStabilityCourses}
\end{lemma}
\begin{proof}\renewcommand{\qedsymbol}{\ensuremath{\blacksquare}}
%A consequence of this transformation is that, in a coalition-stable matching $M'$ in $J$, a student $s_i$ is matched with either a $2$-credit course $c_j$ or two $1$-credit courses $\{c_k, c_k'\}$, such that both courses come from the same woman $w_k$. 
Assume that this is not the case, and that there exists some $s_i$ that is matched in $M'$ with a $1$-credit course $c_j$ but not with $c_j'$ (the other case, where $s_i$ is matched to $c_j'$ instead, can be proved analogously). Furthermore, assume that $s_i$ is the highest placed student in $\succ_{MLS}$ in this situation, and without loss of generality assume that $c_j$ is ranked higher in $\succ_{MLC}$ than the other course matched to $s_i$, if there is one. Then, because $s_i$ is the highest ranked student in $\succ_{MLS}$ matched to one but not both of the courses that come from the same woman, $c_j'$ is unmatched or prefers $s_i$ over its only student. Similarly, if $s_i$ is matched to some $1$-credit course $c_l$, then she prefers $c_j'$ over $c_l$. Therefore, $(s_i, \{c_j'\})$ is a blocking coalition, a contradiction to the stability of $M'$. Hence, if a student $s_i$ is matched with a $1$-credit course $c_j$ in $M'$, she is also matched with $c_j'$ in $M'$.
\end{proof}

Now we show that there exists a complete weakly stable matching in $I$ if and only if there exists a student-complete (or course-complete by Lemma \ref{lem:ThmCoalitionStabilityComplete}) coalition-stable matching in $J$. First, assume that there is a complete weakly stable matching $M$ in $I$. Then we build a complete matching $M'$ in $J$ as follows. For each $(m_i, w_j) \in M$ such that $c_j$ is a $2$-credit course, add $(s_i, c_j)$ to $M'$. For each $(m_i, w_j) \in M$ such that $c_j$ is a $1$-credit course, add $(s_i, c_j)$ and $(s_i, c_j')$ to $M'$. Clearly the matching is both student-complete and course-complete, as every course is matched, and every student takes two credits. We now show that $M'$ is coalition-stable. Assume that it is not, and that there exists a blocking coalition $(s_p, \{c_q\})$, with $O(c_q) = 2$ (or $(s_p, \{c_q, c_q'\})$ where $O(c_q) = 1$). Thus, $s_p$ prefers $c_q$ (and $c_q'$, if it exists) over every course in $C_{M'}(s_p)$. This implies that $c_q$ (and $c_q'$, if it exists) is placed higher in $\succ_{MLC}$ than every course in $C_{M'}(s_p)$. Similarly, $c_q$ (and $c_q'$, if it exists) prefers $s_p$ over its student $s_r$, meaning that $s_p$ is placed higher in $\succ_{MLS}$ than $s_r$. However, this means than in $I$, $w_q$ prefers $m_p$ over $m_r$, and that $m_p$ prefers $w_q$ over his partner in $M$, meaning that $(m_p, w_q)$ blocks $M$, a contradiction to the weak stability of $M$ in $I$. Thus, $M'$ is a coalition-stable matching in $J$.

Now, assume that $J$ has a student-complete (course-complete) coalition-stable matching $M'$. Then, by Lemma \ref{lem:ThmCoalitionStabilityCourses}, every student $s_i$ is matched with either a $2$-credit course $c_j$, or two $1$-credit courses, $c_k$ and $c_k'$. We build a complete matching $M$ in $I$ as follows. If $(s_i, c_j) \in M'$, add $(m_i, w_j)$ to $M$. Clearly $M$ is complete, because every man and every woman is matched. We now show that $M$ is weakly stable. Assume that it is not, so that there exists a blocking pair $(m_p, w_q)$. Thus, $m_p$ prefers $w_q$ over his partner in $M$, and $w_q$ prefers $m_p$ over her partner in $M$, which means that $w_q$ and $m_p$ are ranked higher in their respective master lists than their partners. This implies that $c_q$ (or $c_q$ and $c_q'$) and $s_p$ are ranked higher in their respective master lists than the courses in $C_{M'}(s_p)$ and the only student in $S_{M'}(c_q)$, respectively. Hence, $s_p$ prefers $c_q$ (and $c_q'$, if it exists) over every course in $C_{M'}(s_p)$. As $c_q$ (and $c_q'$) also prefers $s_p$ over its only assigned student, $(s_p, \{c_q\})$ (or $(s_p, \{c_q, c_q'\})$) is a blocking coalition of $M'$, which contradicts the coalition stability of $M'$ in $J$. Thus, $M$ is a weakly stable matching in $I$.

Therefore, there exists a complete weakly stable matching in $I$ if and only if there exists a student-complete (course-complete) coalition-stable matching in $J$. Hence, determining whether a {\sc ca-ml} instance has a student-complete (course-complete) coalition-stable matching is NP-hard. Thus, {\sc ca-c-ml-max} is NP-hard.
\end{proof}

Likewise, determining whether a {\sc ca-ml} instance has a course-complete first-coalition-stable matching is NP-hard, as well as determining whether it has a course-complete pair-size-stable matching, as we will now show. Hence, both {\sc ca-fc-ml-max} and {\sc ca-ps-ml-max} are NP-hard. 

\begin{theorem}
It is NP-hard to determine whether there exists a course-complete first-coalition-stable matching, even if both students and courses are ordered according to master lists, and every course has upper quota $1$. The same result also holds for pair-size stability. Hence, both {\sc ca-fc-ml-max} and {\sc ca-ps-ml-max} are NP-hard, even if there are no downward-feasible constraints.
\label{thm:firstCoalitionStabilityMaxSizeNcredits}
\end{theorem}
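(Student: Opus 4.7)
The plan is to prove NP-hardness separately for the pair-size and first-coalition cases, by reduction from the restricted version of {\sc com-smti} used in Theorem~\ref{thm:coalitionStabilityMaxSize}, in which both sides are ordered by master lists and ties of length exactly $2$ occur only on the women's master list.

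For pair-size stability, the strategy is to re-use the reduction of Theorem~\ref{thm:coalitionStabilityMaxSize} verbatim. The forward direction is immediate: a course-complete coalition-stable matching in $J$ is, by Proposition~\ref{prop:StabilityDefinitions}, also pair-size-stable. For the reverse direction I would re-verify that both intermediate lemmas still hold under the weaker stability notion. Lemma~\ref{lem:ThmCoalitionStabilityComplete} is a pure counting argument and is unaffected. The proof of Lemma~\ref{lem:ThmCoalitionStabilityCourses} still goes through because if student $s_i$ is matched with a $1$-credit course $c_j$ but not its companion $c_j'$, then $(s_i, c_j')$ is already a pair-size-blocking pair (the empty drop set $D_i = \emptyset$ satisfies $O(c_j') \geq 0$ and $1+1 \leq 2 = T(s_i)$, and $c_j'$ prefers $s_i$ by the master list argument). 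What then needs checking is that any blocking pair of the underlying {\sc com-smti} matching lifts to a size-blocking pair in $J$. A short case analysis on whether the courses derived from $w_q$ and $M(m_p)$ each have one or two credits shows that in every case the drop set can be chosen to exactly balance credits, so the size condition $O(c_j) \geq O(D_i)$ and the credit cap are simultaneously satisfied.

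For first-coalition stability, the gadget of Theorem~\ref{thm:coalitionStabilityMaxSize} fails, since when a student is matched with the $2$-credit course $c_l$ and her preference list reads $c_j \succ c_l \succ c_j'$, the bundle $B = \{c_j, c_j'\}$ first-blocks: the best course in $B$, namely $c_j$, is strictly preferred to $c_l$, and credits are balanced. I therefore plan a new reduction for this case. One natural route is again to transform from the same restricted {\sc com-smti} problem, but to replace the tie gadget by one whose two legal fillings are mutually non-blocking under first-coalition stability. Because with strict preferences the first-coalition dominance between disjoint bundles is essentially decided by the top-ranked element of each bundle, the gadget must restrict which bundles a given student can ever hold; I would exploit incompleteness of preference lists (so that, for example, the companion course $c_j'$ is acceptable only to a small, carefully chosen set of students) and use larger, tuned credit sizes so that the only remaining swap violates either $O(B) \geq O(D_i)$ or the credit cap. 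The correctness argument would then mirror that of Theorem~\ref{thm:coalitionStabilityMaxSize}: a complete weakly stable matching in the {\sc com-smti} instance is in bijection with a course-complete first-coalition-stable matching in the constructed {\sc ca-ml} instance.

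The main obstacle is precisely the design of the first-coalition gadget. Since first-coalition stability compares disjoint bundles essentially via their top-ranked elements, simulating a tie with strict preferences is delicate and requires combining selective incompleteness with credit tuning. Verifying that the resulting gadget simultaneously admits exactly two legal fillings, makes those fillings mutually non-blocking under first-coalition stability, and transmits blocking pairs faithfully between $I$ and $J$ will be the technical core of the proof. Once the two reductions are in place, NP-hardness of {\sc ca-fc-ml-max} and {\sc ca-ps-ml-max} follows at once, because any course-complete stable matching is automatically of maximum size.
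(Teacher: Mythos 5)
Your proposal has a genuine gap in both halves. For the pair-size case, re-using the reduction of Theorem~\ref{thm:coalitionStabilityMaxSize} does not work: the claimed ``short case analysis'' fails when $M(m_p)$ corresponds to a $2$-credit course $c_k$ and the blocking woman $w_q$ corresponds to a $1$-credit course $c_q$ (i.e.\ $w_q$ is the member of a tied pair that became a $1$-credit course). Then $C_{M'}(s_p)=\{c_k\}$ with $O(c_k)=2$, and $(s_p,c_q)$ is \emph{not} a size-blocking pair: taking $D_i=\{c_k\}$ violates $O(c_q)\geq O(D_i)$ (since $1<2$), while taking $D_i=\emptyset$ violates the credit cap (since $2+1>2=T(s_p)$); the same holds for $c_q'$. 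So a course-complete pair-size-stable matching in $J$ need not project to a weakly stable matching in $I$, and the reverse direction of your reduction collapses. For the first-coalition case you correctly diagnose that the Theorem~\ref{thm:coalitionStabilityMaxSize} gadget is first-blocked by $B=\{c_j,c_j'\}$, but you then only list desiderata for a replacement gadget (selective incompleteness, tuned credit sizes) without constructing one; as you yourself note, this is the technical core, and it is missing.

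The paper resolves both cases at once with a single, quite different construction. There are no companion courses: each woman becomes one course of upper quota $1$, the courses receive pairwise distinct credits $O(c_j)=1+(n-j+1)\varepsilon$ following the course master list, and the credits of the two courses in each simulated tie are then swapped, so that $c_r\succ_{MLC}c_s$ implies $O(c_r)>O(c_s)$ while tied courses satisfy $O(c_r)<O(c_s)$. Every student gets $T(s_i)=1+n\varepsilon$, so no student can ever hold two courses; coalitions therefore degenerate to single courses, and first-coalition stability and pair-size stability coincide on the instance. A swap to a strictly higher master-list course is always credit-feasible (strictly more credits), whereas a swap within a tie is killed by the size condition $O(B)\geq O(D_i)$ --- this is how a tie is simulated with strict lists. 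Note that this construction abandons the ``one or two credits'' restriction, which is consistent with the statement being proved (the bounded-credit regime is handled separately in Theorem~\ref{thm:firstCoalitionStabilityMaxSize} via {\sc min-mm}); your plan implicitly stays in the $\{1,2\}$-credit regime, where, as shown above, the size condition genuinely obstructs the correspondence.
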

\begin{proof}
We again transform from the NP-complete restriction of {\sc com-smti} in which both men and women are ordered in two master lists, ties only occur in the women's master list, and each tie is of length $2$ \cite{irving_stable_2008}. Let $I$ be an instance of this restricted version {\sc com-smti}.  We construct a {\sc ca-ml} instance $J$ from $I$.
%We reduce from {\sc com-smti}. Recall that {\sc com-smti} is NP-complete even if there are master lists on both sides, the ties occur in one master list only (in this case the women's list), and are of length $2$. The {\sc ca-ml} instance is $J$.

Every woman $w_j$ becomes a course $c_j$ with upper quota $1$. %with the same preference list initially, and with upper quota $1$.  
The master list of courses $\succ_{MLC}$ in $J$ is obtained from the master list of women in $I$ by replacing every woman $w_j$ by course $c_j$. We will ensure that any two courses have a different number of credits, as follows. Assume that the courses in $\succ_{MLC}$ are ordered $c_1,c_2,\dots,c_n$ such that 
$c_1\succeq_{MLC} c_2\succeq_{MLC} \dots c_{n-1}\succeq_{MLC} c_n$,
where $n$ is the number of courses, and $c_r\succeq_{MLC} c_s$ denotes that $c_r$ comes before $c_s$ on $\succ_{MLC}$ or the two courses are tied in $\succ_{MLC}$.  For each $j$ ($1\leq j\leq n$), let $c_j$ have number of credits $O(c_j)= 1 + (n-j+1)\varepsilon$, where $\varepsilon$ is an arbitrarily small positive number (e.g., $\varepsilon=\frac{1}{n}$).
%The first course in the master list, $c_1$, has number of credits $O(c_1) = 1 + n\cdot\varepsilon$, where $n$ is the number of courses and $\varepsilon$ is an arbitrarily small positive number. The second course, $c_2$, has $O(c_2) = 1 + (n-1)\cdot\varepsilon$, and so on, so that the $i$-th course, $c_i$, has number of credits $O(c_i) = 1 + (n-i+1)\cdot\varepsilon$. If two courses, $c_j$ and $c_{j+1}$, are tied, one (say $c_j$) has number of credits $O(c_j) = 1 + (n-j+1)\cdot\varepsilon$ and the other, $c_{j+1}$, has number of credits $O(c_{j+1}) = 1 + (n-j)\cdot\varepsilon$.
Next, if two courses $c_j$ and $c_{j+1}$ are tied in $\succ_{MLC}$, swap $O(c_j)$ and $O(c_{j+1})$.  This ensures that if $c_r\succ_{MLC} c_s$ then $O(c_r)>O(c_s)$, whilst if $c_r$ and $c_s$ are tied in $\succ_{MLC}$ then $O(c_r)<O(c_s)$.  Hence a student matched with $c_s$ cannot form a first-blocking coalition with $c_r$, and vice versa.
%A property of this construction is that courses that are higher in the master list have a larger number of credits than those placed below. The master list of courses ($\succ_{MLC}$) mostly comes from the women's master list. However, if two women $w_j$ and $w_{j+1}$ are tied in the master list, then the course with the smallest number of credits (say $c_{j+1}$) is ranked before the course with the larger number of credits (say $c_j$). That is, if $O(c_{j+1}) < O(c_j)$, then $c_{j+1} \succ_{MLC} c_j$. This way, a student matched with $c_j$ cannot form a first-blocking coalition with $c_{j+1}$, and vice versa.

Every man $m_i$ becomes a student $s_i$, with $T(s_i) = 1 + n\cdot\varepsilon$. The master list of students $\succ_{MLS}$ in $J$ is obtained from the master list of men in $I$ by replacing every man $m_i$ by student $s_i$.  %The preferences of $s_i$ are inherited from $m_i$, so that if $w_j$ is in $m_i$'s preference list, $c_j$ is in $s_i$'s preference list. The order of the courses in given by the master list of courses. 
It is easy to see that no student has enough available credits for more than one course.

First, assume that there is a complete weakly stable matching $M$ in $I$. Then we build a course-complete matching $M'$ in $J$ as follows. If $(m_i, w_j) \in M$, add $(s_i, c_j)$ to $M'$. The matching is course-complete, as every course is matched with one student. We now show that $M'$ is first-coalition-stable in $J$. Assume that it is not, and that there exists a first-blocking coalition $(s_p, \{c_q\})$. Thus, $s_p$ prefers $c_q$ over the course in $C_{M'}(s_p)$, and $O(c_q) \geq O(C_{M'}(s_p))$. This implies that $c_q$ is placed higher in $\succ_{MLC}$ than the course in $C_{M'}(s_p)$. Similarly, $c_q$ prefers $s_p$ over its only matched student $s_r$, meaning that $s_p$ is placed higher in $\succ_{MLS}$ than $s_r$. However, this means than in $I$, $w_q$ prefers $m_p$ over $m_r$, and that $m_p$ prefers $w_q$ over his partner, meaning that $(m_p, w_q)$ blocks $M$, a contradiction to the weak stability of $M$ in $I$. Thus, $M'$ is a first-coalition-stable matching in $J$.

Now, assume that $J$ has a course-complete first-coalition-stable matching $M'$. Thus, every student $s_i$ is matched with one course, because no student has enough available credits to be matched with two courses. We build a complete matching $M$ in $I$ as follows. For each $(s_i, c_j) \in M'$, add $(m_i, w_j)$ to $M$. The matching is complete, as every man is matched with one woman. We now show that $M$ is weakly stable in $I$, too. Assume that it is not, so that there exists a blocking pair $(m_p, w_q)$. Thus, $m_p$ prefers $w_q$ over his partner, and $w_q$ prefers $m_p$ over her partner, which means that $w_q$ and $m_p$ are ranked higher in their respective master lists than their respective partners. This implies that $c_q$ and $s_p$ are ranked higher in their respective master lists than the course in $C_{M'}(s_p)$ and the student in $S_{M'}(c_q)$, respectively. Hence, $s_p$ prefers $c_q$ over the course in $C_{M'}(s_p)$, and $O(c_q) > O(C_{M'}(s_p))$. As $c_q$ also prefers $s_p$ over its only assigned student, $(s_p, \{c_q\})$ is a first-blocking coalition, which contradicts the first-coalition stability of $M'$ in $J$. Thus, $M$ is a weakly stable matching in $I$.

Therefore, there exists a complete weakly stable matching in $I$ if and only if there exists a course-complete first-coalition-stable matching in $J$. Hence, determining whether a {\sc ca-ml} instance has a course-complete first-coalition-stable matching is NP-hard. This same proof can be used to show that determining the existence of a course-complete pair-size-stable matching is NP-hard.
\end{proof}

In fact, the problem of finding a maximum size first-coalition-stable matching in {\sc ca-ml} remains NP-hard even if every course has one or two credits. In order to prove this result, we will reduce from {\sc min-mm}. In this problem, we have a graph $G$ and a positive integer $K$, and the goal is to determine whether there exists a maximal matching of size at most $K$. This problem is NP-complete, even if $G$ is a subdivision graph \cite{horton_1993_minimum}, which implies that it is bipartite and all vertices in one of the sides have degree $2$.

\begin{theorem}
It is NP-hard to determine whether an instance of {\sc ca-ml} has a course-complete first-coalition-stable matching, even if %both students and courses are ordered according to master lists,
every student takes up to two credits, every course has one or two credits, and each student has up to $5$ courses in her preference list.
\label{thm:firstCoalitionStabilityMaxSize}
\end{theorem}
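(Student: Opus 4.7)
The plan is to reduce from {\sc min-mm} restricted to subdivision graphs, which is NP-complete \cite{horton_1993_minimum}. Given an instance $(G,K)$, where $G=(U\cup V,E)$ is a bipartite graph with every vertex of $V$ of degree exactly $2$, I would build a {\sc ca-ml} instance $J$ such that $J$ has a course-complete first-coalition-stable matching if and only if $G$ admits a maximal matching of size at most $K$. The high-level idea is to use each subdivision vertex $v_j\in V$ (with neighbours $u_{a_j}$ and $u_{b_j}$) as a binary-choice gadget: a corresponding student $s_j$ with $T(s_j)=2$ has a short preference list (of length at most $5$) consisting of two ``edge-courses'' encoding the two edges of $G$ incident to $v_j$, together with padding courses she can use when her incident edges are left out. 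Each vertex $u_i\in U$ is represented by one or more courses shared among the $s_j$ corresponding to its neighbours in $V$, and an auxiliary capacity gadget (itself built only from $1$- and $2$-credit courses whose upper quotas sum to a value derived from $K$) enforces that at most $K$ students exercise their edge-option.

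Because all credits must lie in $\{1,2\}$, the $\varepsilon$-trick from Theorem \ref{thm:firstCoalitionStabilityMaxSizeNcredits} is unavailable. Instead, the credit sizes, upper quotas and preference lists must be set up so that the only realisable first-blocking coalitions correspond exactly to edges of $G$ that are ``uncovered'' by the partial matching encoded in $M'$. The two master lists would be laid out so that the interactions among students and courses are effectively local to each $v_j$-gadget, which in turn ensures that a first-blocking coalition of $J$ can always be read off as a pair of adjacent unmatched vertices of $G$. Once the construction is in place, correctness is verified in two directions. Starting from a maximal matching $M$ of $G$ with $|M|\leq K$, I would match each $s_j$ whose incident edge lies in $M$ to her corresponding edge-course(s), use padding courses for the remaining $s_j$, and saturate the capacity gadget; course-completeness is immediate by design, and first-coalition stability follows from the maximality of $M$. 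Conversely, from a course-complete first-coalition-stable matching $M'$ of $J$, the students matched to their edge-courses encode a matching $M$ of $G$; the capacity gadget forces $|M|\leq K$, and the absence of first-blocking coalitions translates to the fact that every non-selected edge of $G$ has an endpoint incident to a selected edge, i.e.\ to maximality of $M$.

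The hard part will be satisfying all structural restrictions simultaneously: master lists on both sides, credit values confined to $\{1,2\}$, student credit-limit at most $2$, and preference lists of length at most $5$. Master lists typically render stable matching problems easier, so the NP-hardness must be driven entirely by the interaction between the credit accounting at each $s_j$-gadget and the shared $U$-courses. The crucial technical obstacle is a locality argument: I must ensure that no student can assemble an unintended first-blocking coalition by combining courses from distinct $v_j$-gadgets or from the capacity gadget, since such a cross-gadget coalition would have no counterpart in the {\sc min-mm} instance. This will require careful interleaving of students and courses within the two master lists, and verifying the locality property gadget-by-gadget will constitute the bulk of the correctness proof.
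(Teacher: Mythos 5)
Your choice of reduction source ({\sc min-mm} on subdivision graphs) and your overall architecture (per-degree-2-vertex gadgets, shared courses for the $U$-vertices, a capacity gadget whose quota is derived from $K$) coincide with what the paper does, but the proposal stops at the level of a plan: the gadgets themselves are never constructed, and you explicitly defer both ``the hard part'' and ``the bulk of the correctness proof''. Since the entire content of this theorem lies in exhibiting a construction that works with credits confined to $\{1,2\}$, this is a genuine gap rather than a stylistic omission. Two specific missing ideas stand out. First, you correctly observe that the $\varepsilon$-trick of Theorem \ref{thm:firstCoalitionStabilityMaxSizeNcredits} is unavailable, but you do not say what replaces it. The paper's substitute is the size condition $O(B)\geq O(D_i)$ in the definition of a first-blocking coalition: a student matched with a $2$-credit dummy course $d_i^1$ cannot deviate to a strictly preferred $1$-credit edge-course $c_j$ because $O(\{c_j\})=1<2=O(\{d_i^1\})$. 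Without identifying some such mechanism, nothing in your sketch prevents every ``padded'' student from blocking with her favourite edge-course, and the reduction collapses.

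Second, your per-vertex gadget --- a single student $s_j$ with $T(s_j)=2$ holding both edge-courses plus padding --- conflates two roles that the paper has to separate. The paper uses four students $s_i^1,s_i^2,s_i^3,s_i^4$ and three dummy courses $d_i^1,d_i^2,d_i^3$ per $W$-vertex: $s_i^1$ and $s_i^2$ each carry \emph{one} of the two edge-courses (they are the edge selectors), $s_i^3$ carries \emph{both} edge-courses and the capacity course $e$ (she is the maximality checker), and the chain $d_i^1\to d_i^2\to d_i^3$ of forced assignments, together with course-completeness, propagates the binary choice down the gadget so that $(s_i^3,e)\in M'$ forces both edge-courses to be taken by better students. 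A single student cannot simultaneously select an edge, certify maximality of the unselected edges, and keep all her courses full; you would also need something like the paper's filler students $p_j$ (whose only acceptable course is $c_j$) to keep the edge-courses course-complete when $u_j$ is unmatched. Until these gadgets are written down and the locality of first-blocking coalitions is verified against them, the equivalence ``maximal matching of size at most $K$ iff course-complete first-coalition-stable matching'' remains an intention rather than a proof.
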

\begin{proof}
We reduce from {\sc min-mm}, which is NP-complete, even if $G$ is a subdivision graph \cite{horton_1993_minimum}. Let $G = (U, W, E)$ be a graph and let $K$ be an integer representing an instance of {\sc min-mm}, where $U$ and $W$ are the two sets of vertices in the bipartition, $|U| = n_1$ and $|W| = n_2$, and $K$ is the target size of a maximal matching. Without loss of generality assume that every vertex in $W$ has degree $2$. We build an instance $I$ of {\sc ca-ml}, with students $S$ and courses $C$, as follows.

For every $w_i \in W$, we create the following students: $s_i^1$, $s_i^2$, and $s_i^3$, with $T(s_i^1) = T(s_i^2) = T(s_i^3) = 2$, as well as $s_i^4$, with $T(s_i^4) = 1$. We also create courses $d_i^1$, $d_i^2$, and $d_i^3$, with $O(d_i^1) = O(d_i^2) = 2$, and $O(d_i^3) = 1$. Create the following subsets: $S_i=\{s_i^1,s_i^2,s_i^3,s_i^4\}$ ($1\leq i\leq n_2$), $D_i=\{d_i^r : 1\leq i\leq n_2\wedge 1\leq r\leq 3\}$, $S^r = \{s_i^r : 1\leq i\leq n_2\}$ ($1\leq r\leq 4$), and $D^r = \{d_i^r : 1\leq i\leq n_2\}$ ($1\leq r\leq 3$).

For every $u_j \in U$, we create a course $c_j$, with $O(c_j) = 1$, and a student $p_j$, with $T(p_j) = 1$. Additionally, we create a course $e$, $O(e) = 2$. The upper quota of every course is $1$, except for $e$, which has upper quota $n_2-K$. Let $C'=\{c_j : 1\leq j\leq n_1\}$ and $P=\{p_j : 1\leq j\leq n_1\}$.

The master list of students $(\succ_{MLS})$ is
\begin{multline*}
s_1^1 \succ_{MLS} s_1^2 \succ_{MLS} s_2^1 \succ_{MLS} s_2^2 \succ_{MLS} \dots \succ_{MLS} s_{n_2}^1 \succ_{MLS} s_{n_2}^2 \succ_{MLS} \\ s_1^3 \succ_{MLS} s_2^3 \succ_{MLS} \dots \succ_{MLS} s_{n_2}^3 \succ_{MLS} p_1 \succ_{MLS} p_2 \succ_{MLS} \dots \succ_{MLS} p_{n_1} \succ_{MLS} \\ s_1^4 \succ_{MLS} s_2^4 \succ_{MLS} \dots \succ_{MLS} s_{n_2}^4
\end{multline*}

That is, the first students belong to subsets $S^1$ and $S^2$, such that every $s_i^1$ is followed by $s_i^2$. Then we place students from $S^3$, then students from $P$, and finally students from $S^4$, in all cases in increasing subscript order.

%The master list of courses (MLC) is $c_1 \succ_{MLC} c_2 \succ_{MLC} ... \succ_{MLC} d_1 \succ_{MLC} d_1' \succ_{MLC} d_1'' \succ_{MLC} d_2 \succ_{MLC} d_2' \succ_{MLC} d_2'' \succ_{MLC} ... \succ_{MLC} e$. That is, the first courses are of type $c$, then the next courses are of type $d$, such that for each $w_i \in W$, $d_i^1 \succ_{MLC} d_i^2 \succ_{MLC} d_i^3 \succ_{MLC} d_i^3$, and finally we have the course $e$.

%The preference lists of the agents in $I$ are given in Figure \todo{reference}, while the preference lists of courses are given in Figure \todo{reference}. An example of a transformation from {\sc min-mm} into {\sc course allocation} is given in Figure \todo{reference}.

Let $w_i \in W$ be a vertex from $G$, adjacent to $u_j$ and $u_k$ (with $j < k$). We build the preference lists of $s_i^1$, $s_i^2$, $s_i^3$ and $s_i^4$ as follows.
\begin{align*}
s_i^1&: c_j \succ_{s_i^1} d_i^1 \\
s_i^2&: c_k \succ_{s_i^2} d_i^1 \succ_{s_i^2} d_i^2 \\
s_i^3&: d_i^2 \succ_{s_i^3} d_i^3 \succ_{s_i^3} c_j \succ_{s_i^3} c_k \succ_{s_i^3} e \\
s_i^4&: d_i^3
\end{align*}

If in the {\sc min-mm} instance $w_i$ is matched with $u_j$, then $s_i^1$ is to be matched with $c_j$, $s_i^2$ with $d_i^1$, $s_i^3$ with $d_i^2$ and $s_i^4$ with $d_i^3$. Similarly if $w_i$ is matched with $u_k$, then $s_i^1$ is to be matched with $d_i^1$, $s_i^2$ with $c_k$, $s_i^3$ with $d_i^2$ and $s_i^4$ with $d_i^3$. If, however, $w_i$ is unmatched, $s_i^1$ is to be matched with $d_i^1$, $s_i^2$ with $d_i^2$, $s_i^3$ with $e$ and $s_i^4$ with $d_i^3$. Student $s_i^3$ contains courses $c_j$, $c_k$, and $d_i^3$ in her preference list to make sure that a course-complete first-coalition-stable matching in $I$ corresponds to a maximal matching in $G$, so that if $s_i^3$ is matched with $e$, some students from $S^1$ or $S^2$ are matched with $c_j$ and $c_k$. For each student $p_j\in P$, her preference list contains course $c_j$ only:
%For each student $p_i$, her preference list contains every course in $C'$ in increasing subscript order.
\begin{align*}
p_j: c_j
\end{align*}

The preference list of every $c_j \in C'$ depends on the vertices adjacent to $u_j$ in $G$. For every $w_i$ adjacent to $u_j$, $c_j$ includes in its preference list either $s_i^1$ or $s_i^2$, as well as $s_i^3$. Additionally, it includes student $p_j$. The order of all students is derived from the master list. If, for example, $u_j$ is adjacent to $w_3$, $w_4$ and $w_5$, the preference list of $c_j$ is given below. Here the superscript $l_i$ indicates a number that can be either $1$ or $2$. If $u_j$ is matched in $G$ then $c_j$ is matched with some student from $S^1$, $S^2$ or $S^3$, otherwise it is matched with $p_j$.
\begin{align*}
c_j: s_3^{l_3} \succ_{c_j} s_4^{l_4} \succ_{c_j} s_5^{l_5} \succ_{c_j} s_3^3 \succ_{c_j} s_4^3 \succ_{c_j} s_5^3 \succ_{c_j} p_j
\end{align*}

The preference lists of every course $d_i^1 \in D^1$, $d_i^2 \in D^2$, and $d_i^3 \in D^3$ are:
%and finally every course in $P$ in increasing subscript order. The preference lists of every $d_i^1 \in D^1$, $d_i^2 \in D^2$, and $d_i^3 \in D^3$ are:
\begin{align*}
d_i^1&: s_i^1 \succ_{d_i^1} s_i^2 \\
d_i^2&: s_i^2 \succ_{d_i^2} s_i^3 \\
d_i^3&: s_i^3 \succ_{d_i^3} s_i^4
\end{align*}

Finally, the preference list of course $e$ contains every student from $S^3$ in increasing subscript order:
\begin{align*}
e: s_1^3 \succ_e s_2^3 \succ_e ... \succ_e s_{n_2}^3
\end{align*}
The students from $S^3$ matched to $e$ correspond to vertices in $G$ that are unmatched.

An example of this transformation is shown in Figure \ref{fig:ExampleTheoremFirstCoalition}.

\begin{figure}[ht]
    \centering
    \begin{subfigure}[b]{0.35\linewidth}
        \centering
        \begin{tikzpicture}[main/.style = {node distance={25mm}, thick, draw, circle}]
            \node[main] (1) {$u_1$};
            \node[main] (2) [below of=1] {$u_2$};
            \node[main] (3) [above right of=1] {$w_1$};
            \node[main] (4) [below of=3] {$w_2$};
            \node[main] (5) [below of=4] {$w_3$};
            \draw (1) -- (3);
            \draw (1) -- (4);
            \draw (1) -- (5);
            \draw (2) -- (3);
            \draw (2) -- (4);
            \draw (2) -- (5);
        \end{tikzpicture}
    \end{subfigure}
    \hspace{0.05\linewidth}
    \begin{subfigure}[b]{0.55\linewidth}
        \centering
        \begin{align*}
        s_1^1&: c_1 \succ_{s_1^1} d_1^1 \\
        s_1^2&: c_2 \succ_{s_1^2} d_1^1 \succ_{s_1^2} d_1^2 \\
        s_1^3&: d_1^2 \succ_{s_1^3} d_1^3 \succ_{s_1^3} c_1 \succ_{s_1^3} c_2 \succ_{s_1^3} e \\
        s_1^4&: d_1^3 \\ \\
        c_1&: s_1^1 \succ_{c_1} s_2^1 \succ_{c_1} s_3^1 \succ_{c_1} s_1^3 \succ_{c_1} s_2^3 \succ_{c_1} s_3^3 \succ_{c_1} p_1 \\
        e&: s_1^3 \succ_{e} s_2^3 \succ_{e} s_3^3
        \end{align*}
    \end{subfigure}
    \caption{Example of a {\sc min-mm} instance, with $K=2$, and some of the preference lists in the transformed instance of {\sc ca-ml}}
    \label{fig:ExampleTheoremFirstCoalition}
\end{figure}

We first show that a maximal matching $M$ of size at most $K$ in $G$ implies a course-complete first-coalition-stable matching $M'$ in $I$. For every pair $(u_j, w_i) \in M$, we add to $M'$ either the pair $(s_i^1, c_j)$ or the pair $(s_i^2, c_j)$, depending on whether $s_i^1$ or $s_i^2$ has $c_j$ in her preference list. If $(s_i^1, c_j) \in M'$, then we add $(s_i^2, d_i^1)$ to $M'$, otherwise we add $(s_i^1, d_i^1)$. We also add $(s_i^3, d_i^2)$ and $(s_i^4, d_i^3)$ to $M'$. If $w_i$ has no partner in $M$, we add the following pairs to $M'$: $(s_i^1, d_i^1)$, $(s_i^2, d_i^2)$, $(s_i^3, e)$, and $(s_i^4, d_i^3)$. Finally, if vertex $u_j$ is not matched in $M$, add $(p_j, c_j)$ to $M'$. The matching $M'$ is course-complete, as every course in $C'$ or $D$ is matched with one student, and $e$ is matched with $n_2-K$ students. It is also first-coalition-stable, as we show below.
%Finally, let $j_k$ ($1 \leq k \leq n_1-K$) be a sequence such that $w_{j_k}$ is unmatched in $M$. Then add $(p_k, w_{j_k})$ to $M'$.

\begin{lemma}
$M'$ is a first-coalition-stable matching in $I$.
\end{lemma}
\begin{proof}\renewcommand{\qedsymbol}{\ensuremath{\blacksquare}}
We prove the first-coalition stability of $M'$ by showing that no student can belong to a first-blocking coalition. Every $s_i^1 \in S^1$ is matched with either some $c_j \in C'$, which is her most-preferred course, or with $d_i^1$. As $O(d_i^1) > O(c_j)$, $(s_i^1, \{c_j\})$ cannot form a first-blocking coalition. Similarly, every $s_i^2 \in S^2$ is matched with either some $c_k \in C'$, her most-preferred course, with $d_i^1$, or with $d_i^2$. If $s_i^2$ is matched with $c_k$ or with $d_i^1$, the same reasoning applies as for $s_i^1$, and if she is matched with $d_i^2$, then $(s_i^1, d_i^1) \in M'$, and $s_i^1 \succ_{d_i^1} s_i^2$, so there is no blocking coalition containing $s_i^2$. Every $s_i^4 \in S^4$ is matched with the only course in her preference list, $d_i^3$.

We now consider $s_i^3 \in S^3$. This student is matched with either $d_i^2$ or with $e$. In the first case, she gets her most-preferred course. In the second case, $(s_i^2, d_i^2) \in M$, $s_i^2 \succ_{d_i^2} s_i^3$, and, because $M$ is maximal in $G$, $u_j$ and $u_k$ are matched in $M$ with some $w_p$ and $w_q$ respectively. Thus, there exist $s_p^{l_p},s_q^{l_q}\in S^1\cup S^2$ such that $(s_p^{l_p},c_j)\in M'$ and $(s_q^{l_q},c_k)\in M'$. Thus, $s_i^3$ could form a first-blocking coalition with $d_i^3$ only. However, $O(d_i^3) < O(e)$, so $(s_i^3, \{d_i^3\})$ is not a first-blocking coalition. Finally, every $p_j \in P$ is either matched with $c_j$, the only course in her preference list, or she has no partner. In this second case, $c_j$ is matched with some $s_i^1$ or $s_i^2$, so $c_j$ prefers its current student over $p_j$, and thus $p_j$ cannot be part of a first-blocking coalition.
%Finally, every $p_k \in P$ is matched with some $c_j \in C'$. Let $c_l$ be a course that $p_k$ prefers over $c_j$. Then $c_l$ is matched with either some $s_i^1$ or $s_i^2$, or with a $p_k'$ such that $p_k' \succ_{MLS} p_k$. Thus, $p_j$ cannot form a first-blocking coalition with $c_l$.
\end{proof}

Now, assume that there exists a course-complete first-coalition-stable matching $M'$ in $I$. We now show how to obtain a maximal matching $M$ of size at most $K$ in $G$. We begin by establishing two lemmas that give useful properties of $M'$.
\begin{lemma}
There are $n_2-K$ subsets $S_i$ for which $M'$ includes no pair between a member of $S_i$ and a member of $C'$.
\label{lem:ThmFirstCoalitionStabilityMaxSizeProperty1}
\end{lemma}
\begin{proof}\renewcommand{\qedsymbol}{\ensuremath{\blacksquare}}
As $M'$ is course-complete, $|S_{M'}(e)| = q^+(e) = n_2-K$, so there are $n_2-K$ students $s_i^3 \in S^3$ matched to $e$. For each pair $(s_i^3, e) \in M'$, $s_i^3$ would prefer to be matched with $d_i^2$. Thus, because $M'$ is first-coalition-stable, we must have that $(s_i^2, d_i^2) \in M'$, which in turn implies that $(s_i^1, d_i^1) \in M'$. Likewise, $s_i^3$ prefers to be matched with $d_i^3$ and with either $c_j$ or with $c_k$. Hence, because $M'$ is first-coalition-stable, and since $d_i^3$ ranks $s_i^3$ first, both $c_j$ and $c_k$ must have better students than $s_i^3$, so students from $S^1$ or $S^2$ (or possibly $S^3$). We also infer that, %From this we get that, 
due to $M'$ being course-complete, $(s_i^4, d_i^3) \in M'$. Thus, if $(s_i^3, e) \in M'$, no student from $S_i$ is matched with a course from $C'$. These $S_i$ subsets correspond to $n_2-K$ vertices $w_i \in G$ that are not matched in the maximal matching $M$.
\end{proof}
%Now, take a group of $s$ students $\{s_i^1, s_i^2, s_i^3, s_i^4\}$) corresponding to a $w_i \in G$ such that at least one of them is matched to $c_j$ (or $c_k$). We now show that no other student from the same group is matched with $c_k$ (or $c_j$), or otherwise $M'$ would not be course-complete. If $s_i^1$ is matched with $c_j$, then $(s_i^2, d_i^1)$ must be in $M'$, otherwise $d_i^1$ has no students and is not course-complete. Thus, $(s_i^3, d_i^2) \in M'$ and $(s_i^4, d_i^3) \in M'$. If $s_i^2$ is matched with $c_k$, then, by the same reasoning, $(s_i^2, d_i^1) \in M'$, $(s_i^3, d_i^2) \in M'$ and $(s_i^4, d_i^3) \in M'$. Finally, if $(s_i^3, c_j) \in M'$ (or $(s_i^3, c_k) \in M'$), then, to preserve first-coalition stability, $(s_i^3, d_i^3) \in M'$, $(s_i^1, d_i^1) \in M'$, and $(s_i^2, d_i^2) \in M'$. Hence, in $M'$, from the $n_2$ groups of $s$ students, in $K$ groups one student is matched with a $c_j$ course and from the other $n_2-K$ students no student is matched with a $c_j$ course. This gives us our matching $M$ of size $K$ in $G$. We now show that $M$ is maximal. \todo{Change paragraph to argue from the point of view of $c_j$}

%Let us now consider the courses in $C'$. As $|P| = n_1-K$, at least $K$ courses are matched with a student from $S^1$, $S^2$ or $S^3$. We now show that, for every $S_i$ subset, at most one member is matched with some $c_j$.

Let us now consider the $K$ remaining $S_i$ subsets. For some of these subsets, there is no $s_i^{l_i} \in S_i$ and $c_j \in C'$ such that $(s_i^{l_i}, c_j) \in M'$. In these cases, we have that $(s_i^1, d_i^1) \in M'$, $(s_i^2, d_i^2) \in M'$ and $(s_i^3, d_i^3) \in M'$, as $M'$ is first-coalition-stable and course-complete, so every $d_i^l \in D_i$ must have one student. In these cases, we have that both $c_j$ and $c_k$ are matched with students from $S^1$, $S^2$ or $S^3$, to prevent $s_i^3$ forming a first-blocking coalition with either $c_j$ or $c_k$. For the remaining subsets $S_i$, there exists exactly one pair of the form $(s_i^l, c_j)$ in $M'$.

\begin{lemma}
If $(s_i^{l_i}, c_j) \in M'$ for some $s_i^{l_i} \in S_i$, then $M'$ includes no other pair between a member of $S_i$ and a member of $C'$.
\label{lem:ThmFirstCoalitionStabilityMaxSizeProperty2}
\end{lemma}
\begin{proof}\renewcommand{\qedsymbol}{\ensuremath{\blacksquare}}
If $s_i^1$ is matched with $c_j$, then $(s_i^2, d_i^1)$ must be in $M'$, otherwise $d_i^1$ has no students and $M'$ is not course-complete. Thus, $(s_i^3, d_i^2) \in M'$ and $(s_i^4, d_i^3) \in M'$. If $s_i^2$ is matched with $c_k$, then, by the same reasoning, $(s_i^1, d_i^1) \in M'$, $(s_i^3, d_i^2) \in M'$ and $(s_i^4, d_i^3) \in M'$. Finally, if $(s_i^3, c_j) \in M'$ (or $(s_i^3, c_k) \in M'$), then, to preserve first-coalition stability, $(s_i^3, d_i^3) \in M'$, $(s_i^1, d_i^1) \in M'$, and $(s_i^2, d_i^2) \in M'$.
\end{proof}

%Hence, for every $S_i \subseteq S$, either one $s_i^l \in S_i$ is matched with some $c_j$ or none are. As there are $n_2-K$ subsets where no member is matched with some $c_j$, and there must be at least $K$ $S_i$ subsets where a member is matched with some $c_j$, we have that there are exactly $K$ $S_i$ subsets where exactly one member is matched with one $c_j$. This gives us our matching $M$ of size $K$ in $G$. We now show that $M$ is maximal.

Hence, for every $S_i$ $(1\leq i\leq n_2)$, Lemmas \ref{lem:ThmFirstCoalitionStabilityMaxSizeProperty1} and \ref{lem:ThmFirstCoalitionStabilityMaxSizeProperty2} imply that either one $s_i^{l_i} \in S_i$ is matched in $M'$ with some $c_j\in C'$ or no member of $S_i$ is matched in $M'$ with a member of $C'$. If some $s_i^{l_i}\in S_i$ is matched with some $c_j$ in $M'$, add $(u_j, w_i)$ to $M$ in $G$, which is a valid matching in $I$ because every vertex is matched with at most one, adjacent, vertex. As there are at least $n_2-K$ subsets where no member is matched with some $c_j$, we have that there are at most $K$ $S_i$ subsets where exactly one member is matched with one $c_j$. This means that the matching $M$ is of size at most $K$. We now show that $M$ is maximal.

\begin{lemma}
The matching $M$ in $G$, of size at most $K$, is maximal.
\end{lemma}
\begin{proof}\renewcommand{\qedsymbol}{\ensuremath{\blacksquare}}
Looking for a contradiction, assume that $M$ is not maximal, so that there exists unmatched $u_j, w_i$ that are adjacent to each other in $G$. This implies that in $M'$, $c_j$ is matched with $p_j$, and $s_i^3$ is matched with either $e$ or $d_i^3$. In the first case, $d_i^3$ is matched with $s_i^4$ in $M'$. However, $s_i^3$ prefers $\{d_i^3, c_j\}$ over $e$, $c_j$ prefers $s_i^3$ over $p_j$, and $d_i^3$ prefers $s_i^3$ over $s_i^4$. Thus, $(s_i^3, \{d_i^3, c_j\})$ is a first-blocking coalition of $M'$.

In the second case, $s_i^3$ still has capacity for $c_j$, as $O(c_j) = d_i^3) = 1$ and $T(s_i^3) = 2$, and $c_j$ prefers $s_i^3$ over $p_j$. Thus, $(s_i^3, \{c_j\})$ is a first-blocking coalition of $M'$. In either case we get a contradiction to the first-coalition stability of $M'$ in $I$. Therefore, $M$ is maximal in $G$.
\end{proof}

Hence, there exists a maximal matching of size at most $K$ in $G$ if and only if there exists a course-complete first-coalition-stable matching in $I$. As {\sc min-mm} is NP-complete, determining the existence of a course-complete first-coalition-stable matching in a {\sc ca-ml} instance is NP-hard.
\end{proof}

In fact, it is easy to see that the proof above applies to coalition stability as well. Hence, we obtain the following result.

\begin{corollary}
Determining whether a {\sc ca-ml} instance has a course-complete coalition-stable matching is NP-hard, even if every student takes up to two credits, every course has one or two credits, and each student's preference list is of length at most $5$.
\label{cor:coalitionStabilityMaxSize}
\end{corollary}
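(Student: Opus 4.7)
The plan is to show that the reduction from {\sc min-mm} used in Theorem \ref{thm:firstCoalitionStabilityMaxSize} carries over essentially verbatim to coalition stability. I would reduce from the same NP-complete restriction of {\sc min-mm} (subdivision graphs), construct the identical {\sc ca-ml} instance $I$ (same students, courses, credits, upper quotas, master list, preference lists), and argue an equivalence between maximal matchings of size at most $K$ in $G$ and course-complete coalition-stable matchings in $I$. Since the gadget already satisfies $T(s_i)\le 2$, $O(c_j)\in\{1,2\}$, and every preference list has length at most $5$, the parameter bounds claimed in the corollary transfer directly.

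For the forward direction, given a maximal matching $M$ of size at most $K$ in $G$, I would build exactly the matching $M'$ defined in the proof of Theorem \ref{thm:firstCoalitionStabilityMaxSize}. That proof already shows $M'$ is course-complete and first-coalition-stable, and by Proposition \ref{prop:StabilityDefinitions} first-coalition stability implies coalition stability, so $M'$ is coalition-stable for free.

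For the backward direction, a little care is needed because coalition stability is \emph{weaker} than first-coalition stability: the set of coalition blocks is a subset of the set of first-blocking coalitions (requiring $s_i$ to prefer \emph{every} course of $B$ to every course of $D$ is stricter than requiring only $s_i$'s favourite in $B$ to beat her favourite in $D$). So, starting from a course-complete coalition-stable $M'$, I must re-examine each blocking coalition invoked in the original proof and check that it is a genuine coalition block. The blocking sets used in Lemmas \ref{lem:ThmFirstCoalitionStabilityMaxSizeProperty1}, \ref{lem:ThmFirstCoalitionStabilityMaxSizeProperty2} and the maximality argument are, up to symmetry, $(s_i^3,\{d_i^2\})$, $(s_i^2,\{d_i^1\})$, $(s_i^3,\{d_i^3,c_j\})$ and $(s_i^3,\{c_j\})$, each with $D$ either empty or a singleton consisting of the single course currently held. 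Because $s_i^3$'s preference list reads $d_i^2\succ_{s_i^3}d_i^3\succ_{s_i^3}c_j\succ_{s_i^3}c_k\succ_{s_i^3}e$ and $s_i^2$'s list reads $c_k\succ_{s_i^2}d_i^1\succ_{s_i^2}d_i^2$, in every instance each course of $B$ is strictly preferred to each course of the corresponding $D$; the size and capacity checks are inherited unchanged from Theorem \ref{thm:firstCoalitionStabilityMaxSize}.

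The main (mild) obstacle is precisely this case-by-case verification that each first-blocking coalition used in the proof of Theorem \ref{thm:firstCoalitionStabilityMaxSize} is in fact a coalition block. Since the ``$D$'' sets are always singletons placed low in the relevant preference list (typically $\{e\}$ for $s_i^3$ and $\{d_i^2\}$ for $s_i^2$), and the courses of $B$ appear strictly earlier, the stronger preference requirement of coalition stability is automatically satisfied, and no additional argument is needed. Thus the two directions establish the claimed equivalence, and NP-hardness of {\sc min-mm} yields the corollary.
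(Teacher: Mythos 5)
Your proposal is correct and follows essentially the same route as the paper, which simply asserts that the proof of Theorem \ref{thm:firstCoalitionStabilityMaxSize} carries over to coalition stability; you supply the verification the paper leaves implicit, correctly noting that the forward direction is free via Proposition \ref{prop:StabilityDefinitions} and that the backward direction only needs each first-blocking coalition used there (always with $D$ empty or a singleton ranked below every course of $B$) to be a genuine blocking coalition.
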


The main difference between this corollary and Theorem \ref{thm:coalitionStabilityMaxSize} is that in Theorem \ref{thm:coalitionStabilityMaxSize}, courses are also ordered in a master list and every course has upper quota $1$, but the length of students' preference lists is not bounded.

We now show that determining the existence of a student-complete (or course-complete) pair-size-stable matching is NP-hard. In order to prove this result, we will reduce from {\sc exact-mm}. In this problem, we have a graph $G$ and a positive integer $K$, and the goal is to determine whether there exists a maximal matching of size exactly $K$. This problem is NP-complete, even if $G$ is a subdivision graph \cite{manlove_2002_hard}\citep[][page 133]{manlove_algorithmics_2013}, which implies that it is bipartite and all vertices in one of the sides have degree $2$.

\begin{theorem}
It is NP-hard to determine whether an instance of {\sc ca-ml} has a student-complete (or course-complete) pair-size-stable matching, even if both students and courses are ordered according to master lists, every student takes up to two credits, and every course has one or two credits.
\label{thm:pairSizeStabilityMaxSize}
\end{theorem}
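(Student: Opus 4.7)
The plan is to reduce from \textsc{exact-mm} restricted to subdivision graphs, following the template of the reduction in Theorem~\ref{thm:firstCoalitionStabilityMaxSize}, but adapted both to the weaker notion of pair-size stability (which forbids multi-course blocking coalitions) and to the additional constraint of a master list on courses. Let $(G,K)$ be an \textsc{exact-mm} instance with $G = (U,W,E)$, $|U|=n_1$, $|W|=n_2$, and every $w \in W$ of degree exactly $2$. I construct a {\sc ca-ml} instance $I$ such that $G$ admits a maximal matching of size exactly $K$ if and only if $I$ admits a student-complete (equivalently, course-complete) pair-size-stable matching. Concretely, for each $u_j \in U$ I would introduce a course $c_j$ together with a padding student $p_j$ whose preference list is just $c_j$; for each $w_i \in W$ adjacent to $u_j$ and $u_k$ (with $j<k$), I would introduce a constant-size gadget of students $s_i^1, s_i^2, \dots$ (each with credit limit $2$) and dummy courses $d_i^1, d_i^2, \dots$ of credit values in $\{1,2\}$, plus a globally shared course $e$ with $O(e)=2$ and upper quota $n_2 - K$.

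The two master lists will be defined globally: the students' master list places gadget students (grouped by $w_i$) above all padding students, while the courses' master list places all $c_j$'s first, then the $d_i^{\cdot}$'s (grouped by gadget), and finally $e$. Each individual preference list will be a restriction of these two global orderings. The key design principle forced by pair-size stability is that a student matched to the $2$-credit course $e$ cannot size-block with any single $1$-credit course, because the size-exchange inequality $O(c)\geq O(D)$, combined with $O(C_M(s))=T(s)$, forbids both dropping $e$ for a $1$-credit course and adding a $1$-credit course to a full schedule. Hence any witness of non-maximality of the underlying $G$-matching must be concentrated into a single $2$-credit course left idle — typically a gadget dummy $d_i^\cdot$. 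The gadget will be arranged so that it admits exactly three ``canonical'' configurations, corresponding to ``$w_i$ paired with $u_j$'', ``$w_i$ paired with $u_k$'', and ``$w_i$ unpaired'', with the unpaired configuration routing exactly one gadget student of $w_i$ to $e$.

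Correctness will be established in two directions. From a maximal matching $M$ of size $K$ in $G$ I would read off a matching $M'$ in $I$ by placing each gadget in the appropriate canonical configuration; student-completeness follows from the credit arithmetic within each gadget, while $n_2 - |M| = n_2 - K = q^+(e)$ yields course-completeness of $e$. A case analysis then rules out each potential size-blocking pair via either the master-list priorities, the size-exchange inequality, or the maximality of $M$ in $G$. Conversely, from any pair-size-stable student-complete matching $M'$ in $I$, I would prove a structural lemma — in the spirit of Lemmas~\ref{lem:ThmFirstCoalitionStabilityMaxSizeProperty1} and~\ref{lem:ThmFirstCoalitionStabilityMaxSizeProperty2} — forcing each gadget into one of its three canonical configurations. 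The induced matching in $G$ then has size $n_2 - q^+(e) = K$, and its maximality is forced because an unmatched $w_i$ with an unmatched neighbour would leave some $2$-credit gadget course idle and thus produce a size-blocking pair in $I$.

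The main obstacle will be calibrating the credits and preference lists inside each gadget so that pair-size stability alone suffices. In Theorem~\ref{thm:firstCoalitionStabilityMaxSize} the non-maximality of $M$ was witnessed by a first-blocking coalition of two $1$-credit courses (one dummy and one target); under pair-size stability this two-course mechanism is unavailable, and each such witness must be compressed into a single $2$-credit course whose idleness alone signals an unmatched neighbour. Moreover, the additional requirement of a master list on courses (absent from Theorem~\ref{thm:firstCoalitionStabilityMaxSize}) constrains the gadget substantially: each student's preference list must be a restriction of a single global ordering, so the relative positions of $c_j$, the $d_i^\cdot$, and $e$ on the courses' master list have to be chosen carefully to permit every intended block and rule out every spurious one. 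Verifying that no spurious size-blocks arise — especially among $s_i^\cdot$ students of different gadgets that share access to $e$ — is where the bulk of the case analysis will lie.
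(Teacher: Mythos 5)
Your high-level architecture matches the paper's: both reduce from {\sc exact-mm} on subdivision graphs, build a constant-size gadget per $w_i$ with three canonical configurations (matched to $u_j$, matched to $u_k$, unmatched), and use an overflow course of upper quota $n_2-K$ to count the unmatched $w_i$'s. However, two of the concrete choices you do commit to would break the reduction, and the part you defer as ``the main obstacle'' is precisely where the proof's content lies. First, your overflow course $e$ has $O(e)=2$. A student $s$ with $T(s)=2$ matched to $e$ alone then has no size-blocking pair with a $1$-credit course $c_j$: taking $D=\emptyset$ violates the credit limit, and taking $D=\{e\}$ violates $O(c_j)\geq O(D)$. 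So an unmatched $w_i$ adjacent to an unmatched $u_j$ need not induce any size-blocking pair, and non-maximal matchings of $G$ of size $K$ would still map to pair-size-stable matchings of $I$ --- the reverse direction fails. Your suggested repair (``the witness must be a single $2$-credit gadget course left idle'') does not connect the unmatched $w_i$ to the unmatched $u_j$, since the idle dummy $d_i^{\cdot}$ carries no information about $c_j$. The paper resolves exactly this by splitting the overflow into \emph{two} $1$-credit courses $e_1,e_2$ (each of quota $n_2-K$), so that the ``unmatched'' gadget student $s_i^3$ holds two $1$-credit courses and can legally drop one of them for the $1$-credit $c_j$; it also adds a $1$-credit filler course $f$ of quota $K$ so that the gadget student matched to the $1$-credit $c_j$ can still reach her $2$-credit limit, which is what makes student-completeness and course-completeness coincide.

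Second, your padding scheme (one student $p_j$ per course $c_j$, with preference list $\{c_j\}$ only) is incompatible with the student-complete half of the statement: in any matching realising a size-$K$ matching of $G$, the $K$ courses $c_j$ taken by gadget students leave their dedicated $p_j$ unmatched, so no such matching is student-complete, contradicting your own claim that student-complete and course-complete are equivalent in your instance. The paper instead creates exactly $n_1-K$ padding students whose lists range over all of $C'$ (and notes that the singleton-list variant only yields the course-complete, bounded-list version as a remark after the proof). In short, the reduction source and the skeleton are right, but the gadget calibration you postpone is not a routine verification: without the $e_1/e_2$ split and the filler course $f$ (or some equivalent device), neither direction of the equivalence goes through as proposed.
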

\begin{proof}
We reduce from {\sc exact-mm}, which is NP-complete, even if $G$ is bipartite and all vertices in one of the sides have degree $2$ \cite{manlove_2002_hard}\citep[][page 133]{manlove_algorithmics_2013}. Let $G = (U, W, E)$ be a graph and $K$ be an integer representing an instance of {\sc exact-mm}, where $U$ and $W$ are the two sets of vertices in the bipartition, $|U| = n_1$ and $|W| = n_2$, and $K$ is the target size of a maximal matching. Without loss of generality assume that every vertex in $W$ has degree $2$. We build an instance $I$ of {\sc ca-ml}, with students $S$ and courses $C$, as follows.

For every $w_i \in W$, we create the following students: $s_i^1$, $s_i^2$, and $s_i^3$, with $T(s_i^1) = T(s_i^2) = T(s_i^3) = 2$. We also create courses $d_i^1$ and $d_i^2$, with $O(d_i^1) = O(d_i^2) = 2$. Create the following subsets: $S_i=\{s_i^1,s_i^2,s_i^3\}$ ($1\leq i\leq n_2$), $D_i=\{d_i^1,d_i^2\}$ ($1\leq i\leq n_2$), $S^r = \{s_i^r : 1\leq i\leq n_2\}$ ($1\leq r\leq 3$), and $D^r = \{d_i^r : 1\leq i\leq n_2\}$ ($1\leq r\leq 2$).

For every $u_j \in U$, we create a course $c_j$, with $O(c_j) = 1$. Additionally, we create three courses $e_1$, $e_2$ and $f$, with $O(e_1) = O(e_2) = O(f) = 1$, as well as $n_1 - K$ students labelled $p_1$ to $p_{n_1 - K}$, such that, for every $p_k$, $T(p_k) = 1$. The upper quota of every course is $1$, except for $e_1$ and $e_2$, which both have upper quota $n_2-K$, and $f$, which has upper quota $K$. Let $C'=\{c_j : 1\leq j\leq n_1\}$ and let $P=\{p_k : 1\leq k\leq n_1-K\}$.

The master list of students ($\succ_{MLS})$ is
\begin{multline}
s_1^1 \succ_{MLS} s_1^2 \succ_{MLS} s_2^1 \succ_{MLS} s_2^2 \succ_{MLS} \dots \succ_{MLS} s_{n_2}^1 \succ_{MLS} s_{n_2}^2 \succ_{MLS} \\ s_1^3 \succ_{MLS} s_2^3 \succ_{MLS} \dots \succ_{MLS} s_{n_2}^3 \succ_{MLS} p_1 \succ_{MLS} p_2 \succ_{MLS} \dots \succ_{MLS} p_{n_1-K}
\end{multline}

That is, the first students belong to subsets $S^1$ and $S^2$, such that every $s_i^1$ is followed by $s_i^2$. Then we place students from $S^3$, and finally students from $P$, in all cases in increasing subscript order. The master list of courses ($\succ_{MLC}$) is
\begin{multline}
c_1 \succ_{MLC} c_2 \succ_{MLC} \dots \succ_{MLC} c_{n_1} \succ_{MLC} d_1^1 \succ_{MLC} d_1^2 \succ_{MLC} d_2^1 \succ_{MLC} d_2^2 \succ_{MLC} \dots \succ_{MLC} \\ d_{n_2}^1 \succ_{MLC} d_{n^2}^2 \succ_{MLC} e_1 \succ_{MLC} e_2 \succ_{MLC} f
\end{multline}

That is, the first courses belong to $C'$, then the next courses are from $D^1$ and $D^2$, such that for each $w_i \in W$, $d_i^1 \succ_{MLC} d_i^2$, and finally we have courses $e_1$, $e_2$ and $f$.

Let $w_i \in W$ be a vertex from $G$, adjacent to $u_j$ and $u_k$ (with $j < k$). We build the preference lists of $s_i^1$, $s_i^2$ and $s_i^3$ as follows.
\begin{align*}
s_i^1&: c_j \succ_{s_i^1} d_i^1 \succ_{s_i^1} f \\
s_i^2&: c_k \succ_{s_i^2} d_i^1 \succ_{s_i^2} d_i^2 \succ_{s_i^2} f \\
s_i^3&: c_j \succ_{s_i^3} c_k \succ_{s_i^3} d_i^2 \succ_{s_i^3} e_1 \succ_{s_i^3} e_2
\end{align*}

If in the {\sc exact-mm} instance $w_i$ is matched with $u_j$, then $s_i^1$ is to be matched with $c_j$ and $f$, $s_i^2$ with $d_i^1$ and $s_i^3$ with $d_i^2$. Similarly if $w_i$ is matched with $u_k$, then $s_i^1$ is to be matched with $d_i^1$, $s_i^2$ with $c_k$ and $f$, and $s_i^3$ with $d_i^2$. If, however, $w_i$ is unmatched, $s_i^1$ is to be matched with $d_i^1$, $s_i^2$ with $d_i^2$, and $s_i^3$ with $e_1$ and $e_2$. Student $s_i^3$ contains courses $c_j$ and $c_k$ in her preference list to make sure that a course-complete first-coalition-stable matching in $I$ corresponds to a maximal matching in $G$, so that if $s_i^3$ is matched with $e_1$ and $e_2$, some students from $S^1$ or $S^2$ are matched with $c_j$ and $c_k$. For each student $p_k\in P$, her preference list contains every course in $C'$ in increasing subscript order:
\begin{align*}
p_k: c_1 \succ_{p_k} c_2 \succ_{p_k} ... \succ_{p_k} c_{n_1}
\end{align*}

The preference list of every $c_j \in C'$ depends on the vertices adjacent to $u_j$. For every $w_i$ adjacent to $u_j$, $c_j$ includes in its preference list either $s_i^1$ or $s_i^2$, as well as $s_i^3$. Additionally, it includes every student in $P$. The order of all students is derived from $\succ_{MLS}$. If, for example, $u_j$ is adjacent to $w_3$, $w_4$ and $w_5$, the preference list of $c_j$ is given below. Here the superscript $l_i$ indicates a number that can be either $1$ or $2$. If $u_j$ is matched in $G$ then $c_j$ is matched with some student from $S^1$ or $S^2$, otherwise it is matched with some $p_k$.
\begin{align*}
c_j: s_3^{l_3} \succ_{c_j} s_4^{l_4} \succ_{c_j} s_5^{l_5} \succ_{c_j} s_3^3 \succ_{c_j} s_4^3 \succ_{c_j} s_5^3 \succ_{c_j} p_1 \succ_{c_j} p_2 \succ_{c_j} ... \succ_{c_j} p_{n_1}
\end{align*}

The preference lists of every course $d_i^1 \in D^1$ and $d_i^2 \in D^2$ are:
\begin{align*}
d_i^1&: s_i^1 \succ_{d_i^1} s_i^2 \\
d_i^2&: s_i^2 \succ_{d_i^2} s_i^3
\end{align*}

Finally, the preference lists of courses $e_1$ and $e_2$ contain every student from $S^3$ and the preference list of course $f$ contains every student from $S^1$ and $S^2$, in all cases in increasing subscript order; in the case of $f$'s preference list, every $s_i^1$ is followed by $s_i^2$. The students matched to $e_1$ and $e_2$ correspond to vertices in $G$ that are unmatched, while the students matched to $f$ correspond to vertices that are matched.
\begin{align*}
e_1&: s_1^3 \succ_{e_1} s_2^3 \succ_{e_1} ... \succ_{e_1} s_{n_2}^3 \\
e_2&: s_1^3 \succ_{e_2} s_2^3 \succ_{e_2} ... \succ_{e_2} s_{n_2}^3 \\
f&: s_1^1 \succ_f s_1^2 \succ_f s_2^1 \succ_f s_2^2 \succ_f ... \succ_f s_{n_2}^1 \succ_f s_{n_2}^2
\end{align*}

An example of this transformation is shown in Figure \ref{fig:ExampleTheoremPairSize}.

\begin{figure}[ht]
    \centering
    \begin{subfigure}[b]{0.35\linewidth}
        \centering
        \begin{tikzpicture}[main/.style = {node distance={25mm}, thick, draw, circle}]
            \node[main] (1) {$u_1$};
            \node[main] (2) [below of=1] {$u_2$};
            \node[main] (3) [above right of=1] {$w_1$};
            \node[main] (4) [below of=3] {$w_2$};
            \node[main] (5) [below of=4] {$w_3$};
            \draw (1) -- (3);
            \draw (1) -- (4);
            \draw (1) -- (5);
            \draw (2) -- (3);
            \draw (2) -- (4);
            \draw (2) -- (5);
        \end{tikzpicture}
    \end{subfigure}
    \hspace{0.05\linewidth}
    \begin{subfigure}[b]{0.55\linewidth}
        \centering
        \begin{align*}
        s_1^1&: c_1 \succ_{s_1^1} d_1^1 \succ_{s_1^1} f \\
        s_1^2&: c_2 \succ_{s_1^2} d_1^1 \succ_{s_1^2} d_1^2 \succ_{s_1^2} f \\
        s_1^3&: c_1 \succ_{s_1^3} c_2 \succ_{s_1^3} d_1^2 \succ_{s_1^3} e_1 \succ_{s_1^3} e_2 \\ \\
        c_1&: s_1^1 \succ_{c_1} s_2^1 \succ_{c_1} s_3^1 \succ_{c_1} s_1^3 \succ_{c_1} s_2^3 \succ_{c_1} s_3^3 \succ_{c_1} p_1 \\
        e_1&: s_1^3 \succ_{e_1} s_2^3 \succ_{e_1} s_3^3 \\
        f&: s_1^1 \succ_{f} s_1^2 \succ_{f} s_2^1 \succ_{f} s_2^2 \succ_{f} s_3^1 \succ_{f} s_3^2
        \end{align*}
    \end{subfigure}
    \caption{Example of an {\sc exact-mm} instance, with $K=2$, and some of the preference lists in the transformed instance of {\sc ca-ml}}
    \label{fig:ExampleTheoremPairSize}
\end{figure}

We first show that a course-complete matching in $I$ is student-complete, and vice versa.

\begin{lemma}
Let $M'$ be a matching in $I$. Then $M'$ is course-complete if and only if it is student-complete.
\label{lem:ThmPairSizeStabilityMaxSizeStudentCourseComplete}
\end{lemma}
\begin{proof}\renewcommand{\qedsymbol}{\ensuremath{\blacksquare}}
There are $n_1$ courses in $C'$, each with upper quota $1$ and number of credits $1$. There are also $n_2$ courses in each of $D^1$ and in $D^2$, each with upper quota $1$ and number of credits $2$. Finally, we have courses $e_1$ and $e_2$, each with upper quota $n_2-K$ and number of credits $1$, and $f$, with upper quota $K$ and number of credits $1$. Hence, a course-complete matching has $n_1 \cdot 1 \cdot 1 + 2\cdot n_2 \cdot 2 \cdot 1 + 2 \cdot (n_2-K) \cdot 1 + 1 \cdot K \cdot 1 = n_1 + 6n_2 - K$ credits.

Likewise, there are $n_2$ students in each of $S^1$, $S^2$ and $S^3$, each with two available credits, and $n_1-K$ students in $P$, each with one available credit. Thus, a student-complete matching has $3 \cdot n_2 \cdot 2 + (n_1-K) \cdot 1 = n_1 + 6n_2 - K$ number of credits.

Thus, a course-complete matching has the same number of credits as a student-complete matching.
\end{proof}

We now show that a maximal matching $M$ of size $K$ in $G$ implies a student-complete (course-complete by Lemma \ref{lem:ThmPairSizeStabilityMaxSizeStudentCourseComplete}) pair-size-stable matching $M'$ in $I$. For every pair $(u_j, w_i) \in M$, we add to $M'$ either the pairs $(s_i^1, c_j)$ and $(s_i^1, f)$, or the pairs $(s_i^2, c_j)$ and $(s_i^2, f)$, depending on whether $s_i^1$ or $s_i^2$ has $c_j$ in her preference list. If $(s_i^1, c_j) \in M'$, then we add $(s_i^2, d_i^1)$ to $M'$, otherwise we add $(s_i^1, d_i^1)$ to $M'$. We also add $(s_i^3, d_i^2)$ to $M'$. If $w_i$ has no partner in $M$, we add the following pairs to $M'$: $(s_i^1, d_i^1)$, $(s_i^2, d_i^2)$, $(s_i^3, e_1)$, and $(s_i^3, e_2)$. Finally, let $j_k$ ($1 \leq k \leq n_1-K$) be a sequence such that $w_{j_k}$ is unmatched in $M$. Then add $(p_k, c_{j_k})$ to $M'$ ($1 \leq k \leq n_1-K$). The matching $M'$ is course-complete, as each course in $C'$, $D^1$ and $D^2$ is matched with one student, each of $e_1$ and $e_2$ is matched with $n_2-K$ students, and $f$ is matched with $K$ students. It is also student-complete, because for every student $s \in S$, $T(s) = O(C_{M'}(s))$. We show below that $M'$ is pair-size-stable in $I$.

\begin{lemma}
$M'$ is a pair-size-stable matching in $I$.
\end{lemma}
\begin{proof}\renewcommand{\qedsymbol}{\ensuremath{\blacksquare}}
We prove the pair-size stability of $M'$ by showing that no student can belong to a size-blocking pair. Every $s_i^1$ is matched with either $f$ and some $c_j$, which is her most-preferred course, or with $d_i^1$. In either case, there is no course with which $s_i^1$ can form a size-blocking pair. Similarly, every $s_i^2$ is matched with either $f$ and some $c_k$, her most-preferred course, with $d_i^1$, or with $d_i^2$. If $s_i^2$ is matched with $f$ and $c_k$, or with $d_i^1$, the same reasoning applies as for $s_i^1$, and if she is matched with $d_i^2$, then $(s_i^1, d_i^1) \in M'$. As $s_i^1\succ_{d_i^1} s_i^2$, there is no size-blocking pair containing $s_i^2$.

We now consider $s_i^3$. This student is matched with either $d_i^2$ or with $e_1$ and $e_2$. In the first case, there is no course with at least two credits that $s_i^3$ prefers to $d_i^2$, so there is no possible size-blocking pair. In the second case, $(s_i^2, d_i^2) \in M'$, $s_i^2 \succ_{d_i^2} s_i^3$, and, because $M$ is maximal in $G$, $u_j$ and $u_k$ are matched in $M$ with some $w_p$ and $w_q$ respectively. Thus, there exist $s_p^{l_p},s_q^{l_q}\in S^1\cup S^2$ such that $(s_p^{l_p},c_j)\in M'$ and $(s_q^{l_q},c_k)\in M'$. Hence, $s_i^3$ cannot form a size-blocking pair with any other course. Finally, every $p_k \in P$ is matched with some $c_j \in C'$. Let $c_l$ be a course that $p_k$ prefers over $c_j$. Then $c_l$ is matched with either some $s_i^1$ or $s_i^2$, or with a $p_k'$ such that $p_k' \succ_{MLS} p_k$. Thus, $p_j$ cannot form a size-blocking pair with $c_l$.
\end{proof}

Now, assume that there exists a course-complete (student-complete) pair-size-stable matching $M'$ in $I$. We show how to construct a maximal matching $M$ of size $K$ in $G$. First, as $M'$ is student-complete, every $p_k \in P$ is matched with some $c_j \in C'$. As $M'$ is course-complete, the $K$ remaining courses in $C'$ are matched with students from $S^1$, $S^2$, or $S^3$. Similarly, as $M'$ is course-complete, $|S_M(f)| = q^+(f) = K$, that is, course $f$ fills its capacity. Every student who has $f$ in her preference list belongs to either $S^1$ or $S^2$, and because $M'$ is student-complete every student matched to $f$ must be matched to some $c_j$, so the $K$ courses from $C'$ that are not matched with students from $P$ are matched with students from $S^1$ and $S^2$. We now show that for each subset $S_i$ ($1\leq i\leq n_2$), either $s_i^1$ or $s_i^2$ is matched with some $c_j \in C'$, but not both.

\begin{lemma}
If $(s_i^1, c_j) \in M'$ or $(s_i^2, c_j) \in M'$, then $M'$ includes no other pair between a member of $S_i$ and a member of $C'$.
\end{lemma}
\begin{proof}\renewcommand{\qedsymbol}{\ensuremath{\blacksquare}}
If $(s_i^1,c_j)\in M'$, then $(s_i^2, d_i^1)$ must be in $M'$, otherwise $d_i^1$ has no students and $M'$ is not course-complete. Thus, $(s_i^3, d_i^2) \in M'$, or else $d_i^2$ has no students. If $(s_i^2,c_k)\in M'$, then, by the same reasoning, $(s_i^2, d_i^1) \in M'$ and $(s_i^3, d_i^2) \in M'$.%Finally, if $(s_i^3, c_j) \in M'$ (or $(s_i^3, c_k) \in M'$), then, to preserve first-coalition stability, $(s_i^3, d_i^3) \in M'$, $(s_i^1, d_i^1) \in M'$, and $(s_i^2, d_i^2) \in M'$.
\end{proof}

Thus, we can build a matching $M$ in $G$ of size $K$ by adding the pair $(u_j, w_i)$ to $M$ for every pair of the form $(s_i^{l_i}, c_j) \in M'$, of which there are exactly $K$. For every $S_i$ such that no $s_i^{l_i}\in S_i$ is matched to some $c_j \in C'$, we have that $\{(s_i^1, d_i^1),(s_i^2, d_i^2), (s_i^3, e_1),(s_i^3, e_2)\}\subseteq M'$. We now show that $M$ is a maximal matching in $G$.

\begin{lemma}
The matching $M$ in $G$, of size $K$, is maximal.
\end{lemma}
\begin{proof}\renewcommand{\qedsymbol}{\ensuremath{\blacksquare}}
Looking for a contradiction, assume that $M$ is not maximal, so that there exists unmatched $u_j, w_i$ that are adjacent to each other in $G$. This implies that in $M'$, $c_j$ is matched with some $p_k$ and $s_i^3$ is matched with $e_1$ and $e_2$. However, $s_i^3$ prefers $c_j$ over each of $e_1$ and $e_2$, and $c_j$ prefers $s_i^3$ over $p_k$. Thus, $(s_i^3, c_j)$ is a size-blocking pair, a contradiction to the pair-size stability of $M'$ in $I$. Therefore, $M$ is maximal in $G$.
\end{proof}

Hence, there exists a maximal matching of size $K$ in $G$ if and only if there exists a course-complete (student-complete) pair-size-stable matching in $I$. As {\sc exact-mm} is NP-complete, determining the existence of a course-complete (student-complete) pair-size-stable matching is NP-hard.
\end{proof}

We can modify this transformation to show that determining the existence of a course-complete (but not necessarily student-complete) pair-size-stable matching is NP-hard even if all conditions from the theorem statement hold and the length of the preference list of every student is bounded. We do so by removing every student in $P$, and creating a new $p_j$ student for every $c_j \in C'$. The preference list of this $p_j$ contains $c_j$ only, and these $p_j$ students are at the end of the master list of students. We would then have that the maximum length of each student's preference list is $5$.

%If there are ties in the preference lists of either the students or the courses, then finding a maximum size stable matching is NP-hard, because finding a maximum size stable matching in the Stable Marriage with Master Lists is NP-hard, and this problem is a generalization of Stable Marriage with Master Lists.

\subsection{With lower quotas}
\label{subsec:maximumLowerQuotas}
A very natural requirement, present in many universities, is that courses have minimum numbers of students that they require in order to run, which are represented by lower quotas.  We have not considered this restriction until now, but in this section we now define an extension to {\sc ca} involving lower quotas.

A course $c_j$ is said to have a \emph{lower quota}, denoted by $q^-(c_j)$, if in a matching $M$ the number of students allocated to $c_j$ must be at least the lower quota, i.e., $|S_M(c_j)| \geq q^-(c_j)$. When courses have lower quotas, there are two variations to consider. In the first one, every course in the {\sc ca} instance must have a number of students at least as large as its lower quota. The goal is to find a stable matching, where every course is assigned a number of students between its lower and upper quotas, for any given definition of stability. This problem is called {\sc ca-x-lq-nc-find}, where {\sc x} denotes the definition of stability. In the case of {\sc ca-ml}, the problem is called {\sc ca-x-ml-lq-nc-find}.

The second variation models the scenario where in a given matching $M$, if course $c_j$ does not have at least as many students as its lower quota, then it may be closed, and no students are matched to it in this case. In this setting an empty matching, where every course is closed, is trivially stable, so our goal is to find a maximum size stable matching. This problem is called {\sc ca-x-lq-cl-max}, where x denotes the definition of stability. In the case of {\sc ca-ml}, the problem is called {\sc ca-x-ml-lq-cl-max}. In this variation we do not allow for the existence of a coalition of several students and a closed course where the students have an incentive to demand that this course be opened (as in \cite{biro_college_2010}), because in real-life settings lower quotas are usually too high for students to coordinate in such a fashion.

From Theorems \ref{thm:firstCoalitionStabilityMaxSize} and \ref{thm:pairSizeStabilityMaxSize} and Corollary \ref{cor:coalitionStabilityMaxSize} we immediately obtain the following two results for {\sc ca-ml} with lower quotas both without and with course closures.

\begin{observation}
Given an instance $I$ of {\sc ca-ml} with lower quotas, each of {\sc ca-ps-ml-lq-nc-find}, {\sc ca-c-ml-lq-nc-find} and {\sc ca-fc-ml-lq-nc-find} is NP-hard, even if every student takes up to two credits, every course has one or two credits, the length of the students' preference lists is bounded, and there are no downward-feasible constraints.
\label{obs:lowerQuotasNoClosuresThreeDefinitionsStability}
\end{observation}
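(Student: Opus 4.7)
The plan is to give a one-line reduction from each of the course-complete stable matching problems in Theorem~\ref{thm:firstCoalitionStabilityMaxSize}, Corollary~\ref{cor:coalitionStabilityMaxSize}, and the bounded-preference-list variant of Theorem~\ref{thm:pairSizeStabilityMaxSize} (the modification described in the paragraph immediately after the proof of that theorem) to the corresponding {\sc ca-x-ml-lq-nc-find} problem. Given such an instance $I$ of {\sc ca-ml}, I would construct an instance $I'$ of {\sc ca-ml} with lower quotas by copying all students, courses, credits, upper quotas, preference lists, and the master list unchanged, and additionally setting $q^-(c_j) = q^+(c_j)$ for every course $c_j$.

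The key observation is that in $I'$ any matching $M$ satisfying the lower-quota constraints must satisfy $q^+(c_j) = q^-(c_j) \le |S_M(c_j)| \le q^+(c_j)$ for every $c_j$, and is therefore course-complete in $I$; conversely, any course-complete matching in $I$ trivially respects the lower quotas of $I'$. Since each of the three stability definitions depends only on the preference lists, credits, upper quotas, and the set of student-course pairs in $M$ --- none of which change between $I$ and $I'$ --- a matching is stable in $I'$ under definition {\sc x} if and only if it is stable in $I$ under the same definition. Hence $I'$ admits a stable matching in the {\sc -lq-nc-find} sense if and only if $I$ admits a course-complete stable matching, which yields the claimed NP-hardness for all three stability definitions. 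All structural bounds (at most two credits per student, one or two credits per course, bounded preference-list length, no downward-feasible constraints) carry over unchanged from $I$ to $I'$.

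The only step that requires care is verifying that the source hardness results already come equipped with the structural restrictions listed in the observation. For first-coalition stability this is exactly Theorem~\ref{thm:firstCoalitionStabilityMaxSize}; for coalition stability it is Corollary~\ref{cor:coalitionStabilityMaxSize}; and for pair-size stability it is the remark following the proof of Theorem~\ref{thm:pairSizeStabilityMaxSize}, where the set $P$ is replaced by one student $p_j$ per $c_j \in C'$ so as to bound each student's preference list to length at most $5$. No substantive obstacle arises beyond this bookkeeping, so the observation follows by three parallel applications of the reduction above.
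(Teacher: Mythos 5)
Your proposal is correct and matches the paper's own proof: the paper likewise sets each course's lower quota equal to its upper quota in the instances from Theorem~\ref{thm:firstCoalitionStabilityMaxSize}, Corollary~\ref{cor:coalitionStabilityMaxSize} and Theorem~\ref{thm:pairSizeStabilityMaxSize}, so that a matching respecting lower quotas exists if and only if a course-complete stable matching exists. Your extra bookkeeping about invoking the bounded-preference-list variant of Theorem~\ref{thm:pairSizeStabilityMaxSize} is a correct and slightly more careful reading of what the paper leaves implicit.
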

\begin{proof}
Modify the instances from Theorems \ref{thm:firstCoalitionStabilityMaxSize} and \ref{thm:pairSizeStabilityMaxSize} and Corollary \ref{cor:coalitionStabilityMaxSize} so that the lower quota of every course is the same as the upper quota. Then a stable matching respecting lower quotas exists if and only if a course-complete stable matching exists, for all three definitions of stability.
\end{proof}

\begin{observation}
{\sc ca-ps-ml-lq-cl-max}, {\sc ca-c-ml-lq-cl-max} and {\sc ca-fc-ml-lq-cl-max} is NP-hard, even if every student takes up to two credits, every course has one or two credits, the length of the students' preference lists is bounded, and there are no downward-feasible constraints.
\label{obs:lowerQuotasClosuresThreeDefinitionsStability}
\end{observation}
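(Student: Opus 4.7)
The plan is to adapt the strategy from Observation~\ref{obs:lowerQuotasNoClosuresThreeDefinitionsStability} and re-use the instances constructed in Theorems~\ref{thm:firstCoalitionStabilityMaxSize} and \ref{thm:pairSizeStabilityMaxSize} and Corollary~\ref{cor:coalitionStabilityMaxSize}. For each of the three stability notions I would take the corresponding {\sc ca-ml} instance $I$ and form a closures-model instance $I'$ by setting $q^-(c_j)=q^+(c_j)$ for every course $c_j\in C$, leaving everything else (master lists, credits, student credit limits, bounded preference-list lengths, absence of downward-feasible constraints) untouched. All structural parameters promised in the statement therefore transfer verbatim from those theorems.

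The key observation is that, because $q^-(c_j)=q^+(c_j)$ holds for every course, any matching respecting the lower quotas in the closures model has every course either closed or filled to its upper quota. Hence the size of any stable matching in $I'$ is
\[
\sum_{c_j\text{ open}} O(c_j)\cdot q^+(c_j),
\]
and the maximum conceivable size is $M^{*} = \sum_{c_j\in C} O(c_j)\cdot q^+(c_j)$, attained precisely when no course is closed.

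I would then argue the equivalence: a stable matching of size exactly $M^{*}$ exists in $I'$ if and only if a course-complete stable matching exists in $I$ under the corresponding stability definition. For the forward direction, a matching of size $M^{*}$ in $I'$ has every course filled to capacity, so it is a course-complete matching of $I$; and since the definitions of pair-size, coalition and first-coalition stability do not reference lower quotas, stability in $I'$ directly transfers to stability in $I$. Conversely, any course-complete stable matching in $I$ trivially respects $q^-(c_j)=q^+(c_j)$ for every $c_j$, and its stability is unaffected by the added lower quotas (no closed courses are present, so the closures model's blanket disregard of blocking with closed courses has no effect). Combining this with the NP-hardness of deciding the existence of a course-complete stable matching from Theorems~\ref{thm:firstCoalitionStabilityMaxSize} and \ref{thm:pairSizeStabilityMaxSize} and Corollary~\ref{cor:coalitionStabilityMaxSize} yields that deciding whether $I'$ admits a stable matching of size at least $M^{*}$ is NP-hard, whence each of {\sc ca-ps-ml-lq-cl-max}, {\sc ca-c-ml-lq-cl-max} and {\sc ca-fc-ml-lq-cl-max} is NP-hard.

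The only conceptually delicate step is ensuring the equivalence of stability between the two models, but this is essentially immediate: in the closures model blocking involving closed courses is explicitly disallowed, and in the target direction every course is open, so the stability requirements in $I'$ reduce to those in $I$. The rest of the argument is a direct bookkeeping check that the structural restrictions from the source theorems survive the addition of lower quotas.
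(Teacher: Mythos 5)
Your argument is correct, but it takes a genuinely different route from the paper's. The paper's proof of this observation is a one-line specialisation argument: setting every lower quota to $0$ makes the closures model coincide exactly with the ordinary model (no course is ever forced to close), so {\sc ca-x-ml-lq-cl-max} contains {\sc ca-x-ml-max} as a special case, and the NP-hardness of the latter (Theorems~\ref{thm:firstCoalitionStabilityMaxSize} and~\ref{thm:pairSizeStabilityMaxSize} and Corollary~\ref{cor:coalitionStabilityMaxSize}) transfers immediately. You instead set $q^-(c_j)=q^+(c_j)$ --- the device the paper reserves for the \emph{no-closures} variant in Observation~\ref{obs:lowerQuotasNoClosuresThreeDefinitionsStability} --- and prove an equivalence between ``stable matching of size $M^{*}$ in $I'$'' and ``course-complete stable matching in $I$''. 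Your equivalence is sound: with $q^-=q^+$ every open course is exactly full, so size $M^{*}$ forces all courses open, the undersubscribed clause in each blocking condition never fires in either instance, and stability transfers in both directions. What your version buys is a marginally stronger statement (hardness persists even when every lower quota equals the upper quota, i.e., when the lower quotas are genuinely binding), at the cost of extra work that the paper avoids. One small bookkeeping point: for the pair-size case with \emph{bounded} preference lists you must use the modified construction described immediately after Theorem~\ref{thm:pairSizeStabilityMaxSize} (replacing the students $p_k$, whose lists range over all of $C'$, by per-course students with singleton lists), since the unmodified Theorem~\ref{thm:pairSizeStabilityMaxSize} instance does not have bounded student preference lists; your claim that all structural parameters ``transfer verbatim'' glosses over this.
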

\begin{proof}
Under pair-size stability, coalition stability or first-coalition stability, finding a maximum size stable matching is NP-hard without lower quotas (see Theorem \ref{thm:pairSizeStabilityMaxSize}, Corollary \ref{cor:coalitionStabilityMaxSize} and Theorem \ref{thm:firstCoalitionStabilityMaxSize} respectively), so the problem remains hard with lower quotas.
\end{proof}

Similarly, the following result is an immediate consequence of Theorem \ref{thm:pairStabilityNPCompleteness}:
\begin{observation}
Given an instance of {\sc ca} with lower quotas and no course closures, determining the existence of a pair-stable matching is NP-complete.  That is, {\sc ca-p-lq-nc-find} is NP-hard.
\end{observation}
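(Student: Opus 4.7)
My plan is to observe that this is a trivial consequence of Theorem \ref{thm:pairStabilityNPCompleteness} obtained by padding with zero lower quotas, so the whole argument is a one-line reduction together with a quick NP membership check. Specifically, given an arbitrary instance $I$ of {\sc ca-p-find} satisfying the restrictions of Theorem \ref{thm:pairStabilityNPCompleteness} (each student takes at most two credits, each course has one or two credits, each course has upper quota $1$, and each preference list has length at most $3$), I would construct an instance $I'$ of {\sc ca-p-lq-nc-find} that is identical to $I$ except that I set $q^-(c_j) = 0$ for every course $c_j \in C$.

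Because the condition $|S_M(c_j)| \geq q^-(c_j) = 0$ is vacuous, the set of matchings of $I'$ in the no-closures sense coincides exactly with the set of matchings of $I$. Moreover, the definition of a blocking pair depends only on upper quotas, credit limits, and preference lists, so the set of blocking pairs of any matching $M$ is the same whether $M$ is viewed as a matching in $I$ or in $I'$. Hence $I$ admits a pair-stable matching if and only if $I'$ admits one respecting all lower quotas, which gives the desired NP-hardness reduction in polynomial time.

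For membership in NP, a pair-stable matching $M$ satisfying all lower quotas is itself a polynomial-size certificate: one can verify in polynomial time that $M$ respects every lower quota, every upper quota, every credit limit, and any downward-feasible constraints, and pair stability of $M$ can be checked in polynomial time by Observation \ref{obs:PairStabilityVerifying}. Combining with the NP-hardness above gives NP-completeness.

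There is no substantive obstacle here: all of the combinatorial difficulty has already been absorbed into Theorem \ref{thm:pairStabilityNPCompleteness} (which itself inherits hardness from McDermid and Manlove's {\sc hrs} reduction). The only thing worth double-checking in the write-up is that the no-closures convention is compatible with assigning zero lower quotas, which it is, since $|S_M(c_j)| \geq 0$ holds automatically for every course in every matching.
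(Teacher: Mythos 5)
Your proposal is correct and matches the paper's intent exactly: the paper states this observation as an immediate consequence of Theorem \ref{thm:pairStabilityNPCompleteness}, precisely because setting every lower quota to $0$ makes the lower-quota constraints vacuous, and NP membership follows from Observation \ref{obs:PairStabilityVerifying}. Nothing further is needed.
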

%given an instance of {\sc ca} with lower quotas and no course closures, determining the existence of a pair-stable matching is NP-complete, which is an immediate consequence of . 
However, in the case of master lists, we know that every instance of {\sc ca-ml} has a unique pair-stable matching. Thus, in order to determine whether an instance of {\sc ca-ml} with lower quotas has a pair-stable matching that respects lower quotas, we just need to run Algorithm \ref{alg:findMaster} and check whether in the resulting matching every course has enough students to fill its lower quota. Thus, we obtain the following Observation.

\begin{observation}
{\sc ca-p-ml-lq-nc-find} is solvable in polynomial time.
\label{obs:lowerQuotasNoClosuresPairStability}
\end{observation}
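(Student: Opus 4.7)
The plan is to exploit the uniqueness of the pair-stable matching in any {\sc ca-ml} instance, which is established in Observation \ref{obs:masterListUniquePairStable}. Since the pair-stable matching is unique and can be computed in polynomial time via Algorithm \ref{alg:findMaster} (by Proposition \ref{prop:findingMasterList}), the algorithm for {\sc ca-p-ml-lq-nc-find} is essentially a one-shot verification.

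The procedure I would describe is the following. First, run Algorithm \ref{alg:findMaster} on the given {\sc ca-ml} instance (ignoring lower quotas for the purposes of the algorithm itself) to obtain the unique pair-stable matching $M$. This takes $O(L)$ time, where $L$ is the total number of acceptable pairs. Next, iterate over every course $c_j\in C$ and check whether $|S_M(c_j)|\geq q^-(c_j)$; this check clearly runs in $O(n+m)$ additional time, where $n=|S|$ and $m=|C|$. If the check succeeds for every course, return $M$ as a pair-stable matching respecting all lower quotas. Otherwise, report that no such matching exists.

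The correctness argument has two directions. If $M$ satisfies every lower quota, then it is by construction a pair-stable matching in the {\sc ca-ml} instance that also respects all lower quotas, so we are done. Conversely, if the instance admits any pair-stable matching $M'$ that respects all lower quotas, then in particular $M'$ is pair-stable in the underlying {\sc ca-ml} instance, and by Observation \ref{obs:masterListUniquePairStable} we must have $M'=M$; hence if $M$ fails some lower quota then no pair-stable matching respecting lower quotas can exist. There is no real obstacle here: the entire content of the result is that uniqueness collapses an a priori search problem into a verification problem, so the only thing to be careful about is to state clearly that we are computing the unique pair-stable matching and then merely testing the lower-quota constraints on it. The total running time is dominated by Algorithm \ref{alg:findMaster} and is therefore polynomial.
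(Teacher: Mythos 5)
Your proposal is correct and follows exactly the same route as the paper: run Algorithm \ref{alg:findMaster} to obtain the unique pair-stable matching guaranteed by Observation \ref{obs:masterListUniquePairStable}, then simply check whether every course meets its lower quota. The paper's own argument is precisely this uniqueness-plus-verification step, so nothing further is needed.
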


On the other hand, if we allow for course closures, {\sc ca-p-ml-lq-cl-max} is NP-hard.

\begin{theorem}
It is NP-complete to determine if an instance of {\sc ca-ml} with lower quotas and closures has a student-complete pair-stable matching, even if both students and courses are ordered according to master lists, every student takes up to two credits, every course has one or two credits and lower quota $1$ or $2$, and there are no downward-feasible constraints. Hence, it follows that {\sc ca-p-ml-lq-cl-max} is NP-hard.
\label{thm:lowerQuotasClosuresPairStability}
\end{theorem}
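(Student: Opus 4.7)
Membership in NP is routine: given a candidate matching $M$, we check in polynomial time that (i) every course $c_j$ with $|S_M(c_j)|>0$ satisfies $|S_M(c_j)|\geq q^-(c_j)$, (ii) $O(C_M(s_i))=T(s_i)$ for every $s_i\in S$, and (iii) $M$ is pair-stable restricted to the open courses (those with at least one assigned student), using the polynomial-time procedure of Observation \ref{obs:PairStabilityVerifying}. Note also that, by Observation \ref{obs:masterListUniquePairStable}, once a set of open courses is fixed the pair-stable matching on that subinstance is unique and produced by Algorithm \ref{alg:findMaster}; therefore the only nontrivial combinatorial choice in the problem is which courses to declare closed. The plan for the hardness direction is to show that this choice is NP-hard to make correctly.

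For the reduction I would mirror the construction used in Theorem \ref{thm:pairSizeStabilityMaxSize} and transform from {\sc exact-mm} on bipartite graphs with one side of degree-2 vertices, which is NP-complete. Given $G=(U,W,E)$ and target $K$, I would create a CA-ML instance as follows. For each $u_j\in U$, introduce a one-credit course $c_j$ with $q^-(c_j)=q^+(c_j)=1$; opening $c_j$ will correspond to $u_j$ being matched in the {\sc exact-mm} solution. For each $w_i\in W$ with neighbours $u_j,u_k$ ($j<k$), create a gadget consisting of three students $s_i^1,s_i^2,s_i^3$ with $T=2$, and two-credit auxiliary courses $d_i^1,d_i^2$ with $q^-=q^+=1$, together with a small number of one-credit ``parking'' courses (also with lower quota $1$) that can only be consumed by this gadget; preference lists and positions in the master lists are set, exactly as in Theorem \ref{thm:pairSizeStabilityMaxSize}, so that the only student-complete configurations of the gadget are either (a) one of $s_i^1,s_i^2$ is matched to $c_j$ or $c_k$ (with the other two students filling $d_i^1$ and $d_i^2$), or (b) $c_j$ and $c_k$ are untouched by the gadget and $s_i^3$ absorbs the parking courses. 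A single global course $f$ with upper and lower quota $K$ is added and placed low in the master list of courses so that exactly $K$ gadgets are forced into configuration (a). Finally, dummy students ranked at the tail of the master list are included to guarantee that every $c_j$ that is opened can be filled to its quota while those not opened can be closed consistently.

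The main obstacle, and the point where the argument differs from Theorem \ref{thm:pairSizeStabilityMaxSize}, is that under pair stability a student is allowed to lose credits when participating in a blocking pair. In particular, whenever a gadget sits in configuration (b) but some neighbouring $c_j$ is open with an available slot, $s_i^3$ can form a one-credit-losing blocking pair with $c_j$, which would break stability. The role of the lower-quota/closure mechanism is precisely to preclude this: by ensuring that $c_j$ is either closed (and therefore cannot be the second component of a blocking pair in this closure semantics) or open and saturated by a gadget in configuration (a) that places a strictly better student than $s_i^3$ at $c_j$, the reduction forces the open set of $c_j$'s to be exactly the $U$-endpoints of a size-$K$ maximal matching in $G$. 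Verifying both directions then reduces to showing (forward) that a maximal matching of size $K$ in $G$ yields a student-complete pair-stable matching by the natural gadget assignment, and (backward) that any student-complete pair-stable matching must place gadgets into one of the two allowed configurations, with exactly $K$ of them in configuration (a), and that the resulting edge set is maximal in $G$ because otherwise some $(s_i^3,c_j)$ is a blocking pair. Putting these pieces together yields the NP-hardness, and combined with the NP membership above the claimed NP-completeness follows; the statement about {\sc ca-p-ml-lq-cl-max} is then immediate.
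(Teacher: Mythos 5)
Your overall strategy (reduction from {\sc exact-mm}, per-$w_i$ gadgets, a global course forcing exactly $K$ matched edges, NP membership by checking quotas, completeness and stability) is the same as the paper's, and your observation that fixing the set of open courses reduces the search to Algorithm \ref{alg:findMaster} is a nice framing. However, there is a genuine gap at the heart of the reduction: you assign the courses $c_j$ lower quota $1$ and then use the closure of $c_j$ for two incompatible purposes. On the one hand, you rely on closing $c_j$ to neutralise the credit-losing blocking pairs that arise under pair stability when a gadget is in configuration (b) (indeed, with preference lists taken ``exactly as in Theorem \ref{thm:pairSizeStabilityMaxSize}'', $s_i^1$ ranks $c_j$ above $d_i^1$, so $(s_i^1,c_j)$ would block whenever $c_j$ is open and undersubscribed or holds a lower-ranked student). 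On the other hand, your backward direction derives maximality of the extracted matching from the claim that a non-maximal solution leaves some $(s_i^3,c_j)$ as a blocking pair. But if $u_j$ and $w_i$ are adjacent and both correspond to unmatched vertices, then in your construction $c_j$ has zero students, fails its lower quota of $1$, and is therefore closed --- so by your own semantics it cannot form a blocking pair with $s_i^3$, and a student-complete pair-stable matching can exist even though the extracted matching is not maximal. The backward direction therefore fails.

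The paper resolves this tension differently: the courses $c_j$ are given lower quota $0$, so they can never be closed and the threat of $(s_i^3,c_j)$ blocking is always live, which is exactly what encodes maximality. The spurious pair-stability blocking pairs are then eliminated not by closure but by redesigning the gadget: each of $s_i^1,s_i^2$ gets a new one-credit first-choice course $b_i^l$ (lower quota $2$, shared with a $p$-student) and has $c_j$ demoted to the \emph{end} of her preference list, so a configuration-(b) student matched to $d_i^1$ or $d_i^2$ prefers her assignment to $c_j$ and never wants to trade down; an extra student $s_i^4$ pads $d_i^1$ to its lower quota of $2$; and the courses $e,f$ together with students $q,r$ force exactly $n_2-K$ gadgets into configuration (b). Only $s_i^3$ (when matched to $e$) ever covets an open $c_j$, and that is precisely the maximality test. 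To repair your version you would need a mechanism that simultaneously (i) prevents configuration-(b) students from blocking with open $c_j$'s under pair stability and (ii) guarantees that a $c_j$ adjacent to an unmatched gadget cannot simply be closed; closure alone cannot deliver both.
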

\begin{proof}
Verifying whether a matching is student-complete and pair-stable, and respects lower quotas, can be done in polynomial time, so the decision problem belongs to NP. We prove NP-completeness by reducing from {\sc exact-mm}, which is NP-complete, even if $G$ is bipartite and all vertices in one of the sides have degree $2$ \cite{manlove_2002_hard}\citep[][page 133]{manlove_algorithmics_2013}. Let $G = (U, W, E)$ be a graph and $K$ be an integer representing an instance of {\sc exact-mm}, where $U$ and $W$ are the two sets of vertices in the bipartition, $|U| = n_1$ and $|W| = n_2$, and $K$ is the target size of a maximal matching. Without loss of generality assume that every vertex in $W$ has degree $2$. We build an instance $I$ of {\sc ca-ml-lq-cl}, with students $S$ and courses $C$, as follows.

For every $w_i \in W$, we create the following students: $s_i^1$, $s_i^2$, $s_i^3$, and $s_i^4$, with $T(s_i^1) = T(s_i^2) = T(s_i^3) = T(s_i^4) = 2$. We also create courses $d_i^1$ and $d_i^2$, with $O(d_i^1) = O(d_i^2) = 2$, as well as courses $b_i^1$ and $b_i^2$, with $O(b_i^1) = O(b_i^2) = 1$.
Create the following subsets: $S_i=\{s_i^1,s_i^2,s_i^3,s_i^4\}$ ($1\leq i\leq n_2$), $D_i=\{d_i^1,d_i^2\}$ ($1\leq i\leq n_2$), $S^r = \{s_i^r : 1\leq i\leq n_2\}$ ($1\leq r\leq 4$), $D^r = \{d_i^r : 1\leq i\leq n_2\}$ ($1\leq r\leq 2$) and $B=\{b_i^r : 1\leq i\leq n_2\wedge 1\leq r\leq 2\}$.
%Students $s_i^1$, $s_i^2$, $s_i^3$ and $s_i^4$ belong to subset $S_i \subseteq S$, and courses $d_i^1$ and $d_i^2$ belong to subset $D_i \subseteq C$. For every $w_i \in W$, $s_i^1 \in S^1 \subseteq S$, $s_i^2 \in S^2 \subseteq S$, $s_i^3 \in S^3 \subseteq S$, and $s_i^4 \in S^4 \subseteq S$. Similarly, $d_i^1 \in D^1 \subseteq C$, $d_i^2 \in D^2 \subseteq C$ and $b_i^1, b_i^2 \in B \subseteq C$.

For every $u_j \in U$, we create a course $c_j$, with $O(c_j) = 1$. Let $C'=\{c_j : 1\leq j\leq n_1\}$. Additionally, we create two courses $e$ and $f$, with $O(e) = O(f) = 2$. The upper quota of every course in $C'\cup D^2$ is $1$, the upper quota of every course in $B\cup D^1\cup \{f\}$ is $2$, and the upper quota of $e$ is $n_2-K$.
%The upper quota of every course is $1$, except for $e$, which has upper quota $n_2-K$, and every $b_i^l \in B$, $d_i^1 \in D^1$ and $f$, which have upper quota $2$.
The lower quota of every course in $C'\cup D^2
\cup \{e\}$ is $0$, %\todoD{I don't think these lower quotas needed to be non-zero. In fact I think it's better for them to be 0 -- see later.} 
whilst the lower quota of every course in $B\cup D^1\cup \{f\}$ is $2$.
%except that the $b_i^l \in B$, every $d_i^1 \in D^1$ and $f$, which have lower quota $2$.  %For every $u_j \in U$, $c_j \in C' \subseteq C$.

We also create $K$ students $P=\{p_k : 1\leq k\leq K\}$ such that, for every $p_k\in P$, $T(p_k) = 1$, and we create two students $q$ and $r$ such that $T(q) = T(r) = 2$. %For every $p_k$, $1 \leq k \leq K$, $p_k \in P \subseteq S$.

The master list of students ($\succ_{MLS}$) is
\begin{multline}
s_1^1 \succ_{MLS} s_1^4 \succ_{MLS} s_1^2 \succ_{MLS} s_2^1 \succ_{MLS} s_2^4 \succ_{MLS} s_2^2 \succ_{MLS} ... \succ_{MLS} s_{n_2}^1 \succ_{MLS} \\ s_{n_2}^4 \succ_{MLS} s_{n_2}^2 \succ_{MLS} s_1^3 \succ_{MLS} s_2^3 \succ_{MLS} ... \succ_{MLS} s_{n_2}^3 \succ_{MLS} \\ p_1 \succ_{MLS} ... \succ_{MLS} p_K \succ_{MLS} q \succ_{MLS} r
\end{multline}

That is, the first students belong to subsets $S^1$, $S^4$ and $S^2$, such that every $s_i^1$ is followed by $s_i^4$ and then by $s_i^2$. Then we place students from $S^3$, and after them students from $P$, in all cases in increasing subscript order. Finally, we place students $q$ and $r$. The master list of courses ($\succ_{MLC}$) is
\begin{multline}
b_1^1 \succ_{MLC} b_1^2 \succ_{MLC} b_2^1 \succ_{MLC} b_2^2 \succ_{MLC} ... \succ_{MLC} b_{n_2}^1 \succ_{MLC} b_{n_2}^2 \succ_{MLC} \\ d_1^1 \succ_{MLC} d_1^2 \succ_{MLC} d_2^1 \succ_{MLC} d_2^2 \succ_{MLC} ... \succ_{MLC} d_{n_2}^1 \succ_{MLC} d_{n_2}^2 \succ_{MLC} \\ c_1 \succ_{MLC} c_2 \succ_{MLC} ... \succ_{MLC} c_{n_1} \succ_{MLC} e \succ_{MLC} f
\end{multline}

That is, the first courses are those from $B$, such that every $b_i^1$ is followed by $b_i^2$, then the next courses are from $D^1\cup D^2$, such that every $d_i^1$ is followed by $d_i^2$; after these, the courses in $C'$ follow, in all cases in increasing subscript order. Finally we have $e$ followed by $f$.

Let $w_i \in W$ be a vertex from $G$, adjacent to $u_j$ and $u_k$ (with $j < k$). We build the preference lists of $s_i^1$, $s_i^2$ and $s_i^3$ as follows.
\begin{align*}
s_i^1&: b_i^1 \succ_{s_i^1} d_i^1 \succ_{s_i^1} c_j \\
s_i^2&: b_i^2 \succ_{s_i^2} d_i^1 \succ_{s_i^2} d_i^2 \succ_{s_i^2} c_k \\
s_i^3&: d_i^2 \succ_{s_i^3} c_j \succ_{s_i^3} c_k \succ_{s_i^3} e \\
s_i^4&: d_i^1
\end{align*}

If in the {\sc exact-mm} instance $w_i$ is matched with $u_j$, then $s_i^1$ is to be matched with $c_j$ and $b_i^1$, $s_i^2$ with $d_i^1$, $s_i^3$ with $d_i^2$, and $s_i^4$ with $d_i^1$. Similarly if $w_i$ is matched with $u_k$, then $s_i^1$ is to be matched with $d_i^1$, $s_i^2$ with $c_k$ and $b_i^2$, $s_i^3$ with $d_i^2$ and $s_i^4$ with $d_i^1$. If, however, $w_i$ is unmatched, $s_i^1$ is to be matched with $d_i^1$, $s_i^2$ with $d_i^2$, $s_i^3$ with $e$, and $s_i^4$ with $d_i^1$. Student $s_i^3$ contains courses $c_j$ and $c_k$ in her preference list to make sure that a course-complete pair-stable matching in $I$ corresponds to a maximal matching in $G$, so that if $s_i^3$ is matched with $e$, some students from $S^1$ or $S^2$ are matched with $c_j$ and $c_k$. The preference lists of students in $P$ includes every course in $B$.
\begin{align*}
p_k: b_1^1 \succ_{p_k} b_1^2 \succ_{p_k} b_2^1 \succ_{p_k} b_2^2 \succ_{p_k} ... \succ_{p_k} b_{n_2}^1 \succ_{p_k} b_{n_2}^1
\end{align*}

Students from $P$ are matched to the courses in $B$ that have some $s_i^l$ matched to them, which correspond to vertices $w_i \in G$ that are matched in $G$. The preference lists of $q$ and $r$ are:
\begin{align*}
q&: e \succ_q f \\
r&: f
\end{align*}

The preference list of every $c_j \in C'$ depends on the vertices adjacent to $u_j$. For every $w_i$ adjacent to $u_j$, $c_j$ includes in its preference list either $s_i^1$ or $s_i^2$, as well as $s_i^3$. The order of all students is derived from the master list. If, for example, $u_j$ is adjacent to $w_3$, $w_4$ and $w_5$, the preference list of $c_j$ is given below. Here the superscript $l_i$ indicates a number that can be either $1$ or $2$. If $u_j$ is matched in $G$ then $c_j$ is matched with some student from $S^1$ or $S^2$, otherwise it is unmatched.
\begin{align*}
c_j: s_3^{l_3} \succ_{c_j} s_4^{l_4} \succ_{c_j} s_5^{l_5} \succ_{c_j} s_3^3 \succ_{c_j} s_4^3 \succ_{c_j} s_5^3
\end{align*}

The preference lists of every $b_i^1$ and $b_i^2$ in $B$ include $s_i^1$ in the case of $b_i^1$, and $s_i^2$ in the case of $b_i^2$. Then, we have every student in $P$.
\begin{align*}
b_i^1&: s_i^1 \succ_{b_i^1} p_1 \succ_{b_i^1} p_2 \succ_{b_i^1} ... \succ_{b_i^1} p_K \\
b_i^2&: s_i^2 \succ_{b_i^2} p_1 \succ_{b_i^2} p_2 \succ_{b_i^2} ... \succ_{b_i^2} p_K
\end{align*}

The preference lists of every $d_i^1 \in D^1$, $d_i^2 \in D^2$ and of $f$ are:
\begin{align*}
d_i^1&: s_i^1 \succ_{d_i^1} s_i^4 \succ_{d_i^1} s_i^2 \\
d_i^2&: s_i^2 \succ_{d_i^2} s_i^3 \\
f&: q \succ_f r
\end{align*}

Finally, the preference list of $e$ contains every student from $S^3$ in increasing subscript order, and then $q$. The students from $S^3$ matched to $e$ correspond to vertices in $G$ that are unmatched.
\begin{align*}
e: s_1^3 \succ_e s_2^3 \succ_e ... \succ_e s_{n_2}^3 \succ_e q
\end{align*}

An example of this transformation is shown in Figure \ref{fig:ExampleTheoremPairStabilityLQ}.

\begin{figure}[ht]
    \centering
    \begin{subfigure}[b]{0.35\linewidth}
        \centering
        \begin{tikzpicture}[main/.style = {node distance={25mm}, thick, draw, circle}]
            \node[main] (1) {$u_1$};
            \node[main] (2) [below of=1] {$u_2$};
            \node[main] (3) [above right of=1] {$w_1$};
            \node[main] (4) [below of=3] {$w_2$};
            \node[main] (5) [below of=4] {$w_3$};
            \draw (1) -- (3);
            \draw (1) -- (4);
            \draw (1) -- (5);
            \draw (2) -- (3);
            \draw (2) -- (4);
            \draw (2) -- (5);
        \end{tikzpicture}
    \end{subfigure}
    \hspace{0.05\linewidth}
    \begin{subfigure}[b]{0.55\linewidth}
        \centering
        \begin{align*}
        s_1^1&: b_1^1 \succ_{s_1^1} d_1^1 \succ_{s_1^1} c_1 \\
        s_1^2&: b_1^2 \succ_{s_1^2} d_1^1 \succ_{s_1^2} d_1^2 \succ_{s_1^2} c_2 \\
        s_1^3&: d_1^2 \succ_{s_1^3} c_1 \succ_{s_1^3} c_2 \succ_{s_1^3} e \\
        s_1^4&: d_1^1 \\
        p_1&: b_1^1 \succ_{p_1} b_1^2 \succ_{p_1} b_2^1 \succ_{p_1} b_2^2 \succ_{p_1} b_3^1 \succ_{p_1} b_3^2 \\ \\
        c_1&: s_1^1 \succ_{c_1} s_2^1 \succ_{c_1} s_3^1 \succ_{c_1} s_1^3 \succ_{c_1} s_2^3 \succ_{c_1} s_3^3 \\
        e&: s_1^3 \succ_{e} s_2^3 \succ_{e} s_3^3 \succ_e q \\
        f&: q \succ_f r
        \end{align*}
    \end{subfigure}
    \caption{Example of an {\sc exact-mm} instance, with $K=2$, and some of the preference lists in the transformed instance of {\sc ca-ml-lq-cl}.}
    \label{fig:ExampleTheoremPairStabilityLQ}
\end{figure}

We now show that a maximal matching $M$ of size $K$ in $G$ implies a student-complete pair-stable matching $M'$ in $I$ that respects lower quotas. For every pair $(u_j, w_i) \in M$, we add to $M'$ either the pairs $(s_i^1, c_j)$ and $(s_i^1, b_i^1)$, or the pairs $(s_i^2, c_j)$ and $(s_i^2, b_i^2)$, depending on whether $s_i^1$ or $s_i^2$ has $c_j$ in her preference list. If $(s_i^1, c_j) \in M'$, then we add $(s_i^2, d_i^1)$ to $M'$, otherwise we add $(s_i^1, d_i^1)$ to $M'$. We also add $(s_i^3, d_i^2)$ and $(s_i^4, d_i^1)$ to $M'$. If $w_i$ has no partner in $M$, we add the following pairs to $M'$: $(s_i^1, d_i^1)$, $(s_i^2, d_i^2)$, $(s_i^3, e)$, and $(s_i^4, d_i^1)$. Additionally, let $j_k$ ($1 \leq k \leq K$) be a sequence such that $w_{j_k}$ is matched in $M$. For each $k$ ($1\leq k\leq K$), if $(s_i^1, c_{j_k}) \in M'$ for some $s_i^1$, add $(p_{j_k}, b_i^1)$ to $M'$, otherwise add $(p_{j_k}, b_i^2)$ to $M'$. Finally, add pairs $(q, f)$ and $(r, f)$ to $M'$. We close every course in $B$ that has no students matched to it. The matching $M'$ is student-complete, because for every student $s \in S$, $T(s) = O(C_{M'}(s))$. It also respects the lower quotas of every course that appears in $M'$: in particular, course $f$ has two students, and every open course in $B$ and $D^1$ has two students also. We show below that $M'$ is pair-stable.

\begin{lemma}
$M'$ is a pair-stable matching in $I$.
\end{lemma}
\begin{proof}\renewcommand{\qedsymbol}{\ensuremath{\blacksquare}}
We prove the pair stability of $M'$ by showing that no student can belong to a blocking pair. Every $s_i^1$ is matched with either $b_i^1$ (which is her first choice) and some $c_j$, or with $d_i^1$. In the first case, $s_i^1$ cannot block with $d_i^1$, as she cannot drop enough credits to do so. In the second case, $b_i^1$ is closed, so she cannot form a blocking pair with $b_i^1$. Similarly, every $s_i^2$ is matched with either (i) $b_i^2$ (her first choice) and some $c_k$, (ii) with $d_i^1$, or (iii) with $d_i^2$. If $s_i^2$ is matched with $b_i^2$ and $c_k$, or with $d_i^1$, the same reasoning applies as for $s_i^1$, and if she is matched with $d_i^2$, then $(s_i^1, d_i^1) \in M'$ and $(s_i^4, d_i^1) \in M'$, so there is no blocking pair containing $s_i^2$. Student $s_i^4$ is always matched with the only course in her preference list, $d_i^1$.

We now consider $s_i^3$. This student is matched with either $d_i^2$, her most-preferred course, or with $e$. In the second case, $(s_i^2, d_i^2) \in M'$ and $w_i$ is unmatched in $M$. As $M$ is maximal in $G$, $u_j$ and $u_k$ are matched in $M$ with some $w_p$ and $w_q$ respectively. Thus, there exist $s_p^{l_p},s_q^{l_q}\in S^1\cup S^2$ such that $(s_p^{l_p},c_j)\in M'$ and $(s_q^{l_q},c_k)\in M'$.  Each of these courses prefers its assigned student in $M'$ to $s_i^3$. Hence, $s_i^3$ cannot form a blocking pair with any other course. Moving on to students in $P$, every $p_k \in P$ is matched with some $b_i^r\in B$, where $r\in \{1,2\}$. Let $b_j^s$ be an open course that $p_k$ prefers over $b_i^r$. Then $b_j^s$ is matched with both some $s_l^1$ or $s_l^2$, and also with a $p_k'$ such that $p_k' \succ_{MLS} p_k$. Thus, $p_k$ cannot form a blocking pair with $b_i^l$. Finally, $r$ is matched with the only course in her preference list, $f$, and $q$ is matched with $f$, too. Even though $q$ prefers $e$ over $f$, $e$ has filled its capacity, and every student enrolled in $e$ belongs to $S^3$, all of which appear in the master list before $q$. Therefore, no student belongs to a blocking pair.
\end{proof}

Now, assume that there exists a student-complete pair-stable matching $M'$ in $I$. We show how to construct a maximal matching $M$ of size $K$ in $G$. First, as $M'$ is student-complete, $r$ is matched with $f$. The minimum quota of $f$ is $2$, so we have that $(q, f) \in M'$ too. However, $q$ prefers $e$ over $f$. Thus, because $M'$ is pair-stable, $e$ must have $n_2-K$ students assigned to it that are better than $q$. %\todoD{Strictly speaking here I believe we need $e$ to have lower quota $0$ so that we can be sure it is open. I wonder then if it should be clarified that only a course with positive lower quota can be closed.} 
Hence, there are $n_2-K$ students from $S^3$ matched to $e$ in $M'$. As $M'$ is student-complete, $(s_i^4, d_i^1)\in M'$, for all $i$ ($1\leq i\leq n_2$) and thus every course in $D^1$ is open. In order to preserve pair stability, for each $s_i^3$ matched to $e$, $M'$ contains the pairs $(s_i^1, d_i^1)$ and $(s_i^2, d_i^2)$.

As $M'$ is student-complete, every student $p_k\in P$ (there are $K$ such students) is matched to some $b_i^r\in B$, where $r\in \{1,2\}$. In order to satisfy the lower quota of $b_i^r$, and by the pair stability of $M'$, $(s_i^r, b_i^r) \in M'$ also. As $M'$ is student-complete, $(s_i^r,c_j)\in M'$ for some $c_j\in C'$. As the lower quota of $d_i^1$ is $2$, and we have already established that $d_i^1$ is open, $M'$ must include the pair $(s_i^{3-r}, d_i^1)$. This implies that $(s_i^3, d_i^2) \in M'$ by pair stability. Hence, for each $i$ ($1 \leq i \leq n_2$), if $(s_i^1,c_j)\in M$ for some $c_j\in C'$ then $(s_i^2,d_i^1)\in M'$, whilst if $(s_i^2,c_k)\in M'$ for some $c_k\in C'$ then $(s_i^1,d_i^1)\in M'$. Thus, we can build a matching $M$ in $G$ by adding the pair $(u_j, w_i)$ to $M$ for every pair of the form $(s_i^r, c_j) \in M'$. It follows by the fact that $|S_M(e)|=n_2-K$ that $|M|=K$. We now show that $M$ is a maximal matching in $G$.

\begin{lemma}
The matching $M$ in $G$, of size $K$, is maximal.
\end{lemma}
\begin{proof}\renewcommand{\qedsymbol}{\ensuremath{\blacksquare}}
Looking for a contradiction, assume that $M$ is not maximal, so that there exists unmatched $u_j, w_i$ that are adjacent to each other in $G$. This implies that in $M'$, $c_j$ is unmatched and $s_i^3$ is matched with $e$. However, $s_i^3$ prefers $c_j$ over $e$, whilst $c_j$ is unmatched, has lower quota $0$, and finds $s_i^3$ acceptable. Thus, $(s_i^3, c_j)$ is a blocking pair of $M'$, a contradiction to the pair stability of $M'$ in $I$. Therefore, $M$ is maximal in $G$.
\end{proof}

Hence, there exists a maximal matching of size $K$ in $G$ if and only if there exists a student-complete first-coalition-stable matching in $I$, respecting lower quotas of every open course. %As the first problem is NP-complete, so is the second one.
As {\sc exact-mm} is NP-complete, it follows that {\sc ca-p-ml-lq-cl-max} is NP-hard.
\end{proof}

\section{Conclusion}
\label{sec:conclusion}
We have introduced a new model of the {\sc course allocation} problem, which allows for courses with different numbers of credits and constraints such as courses that run at overlapping times. We have considered four natural definitions of stability in this context, and for each one we studied the problem of determining whether a matching is stable and finding a stable matching if one exists. Additionally, we studied the problem with master lists, and considered the problems of finding a stable matching and a maximum size stable matching, both in the absence of, and in the presence of, lower quotas.

The main open problem remaining is to establish the computational complexity of finding a coalition-stable matching under arbitrary preferences, and indeed to determine whether one always exists. Another possible research direction is to study stability under additively separable preferences, where students assign a utility to a course and prefer subsets of courses where the sum of utilities is higher \cite{cechlarova_pareto_2018}. A restriction of additively separable preferences is lexicographic preferences, where students compare two bundles of courses by first checking the first course of each bundle, then the second course, and so on \cite{cechlarova_pareto_2018}. If we also require that students participating in a blocking (according to additively separable or lexicographic preferences) coalition do not reduce their total number of credits, then these definitions of stability are less strict than first-coalition stability but more so than coalition stability.

Other possible future research directions include considering problems involving finding stable matchings with the minimum number of blocking pairs/coalitions when there is no stable matching, developing approximation algorithms for finding a maximum size stable matching, such as the 3-approximation algorithm for the problem of finding a maximum size occupancy-stable (equivalent to pair-size-stable) matching by Balasundaram et al.\ \cite{balasundaram_stability_2025}, or investigating problems involving modifying course upper quotas to ensure the existence of a stable matching, following the work of Nguyen and Vohra \cite{nguyen_2018_near}.

\section*{Acknowledgements}
The authors would like to thank Zoe Graves and Matt Wierzbicki of the School of Law, University of Glasgow, for inspiring the ideas behind our model for {\sc ca}; Mathijs Barkel and Fraser Paterson for valuable discussions concerning algorithms for {\sc ca-c-ml-max}; and the SAGT reviewers for their valuable comments and helpful suggestions. José Rodríguez is supported by a studentship from the University of Glasgow, College of Science and Engineering. The authors have no competing interests to declare that are relevant to the content of this article. For the purpose of open access, the authors have applied a Creative Commons Attribution (CC BY) licence to any Author Accepted Manuscript version arising from this submission.

\bibliographystyle{abbrv}
\bibliography{bibliography}

\end{document}